\newcommand{\abs}[1]{\left \lvert #1 \right \rvert}
\newcommand\J{\mathrm{J}_{a,b}}
\newcommand\Lag{\mathrm{L}_b}
\newcommand\G{\mathrm{G}}
\newcommand{\av}[1]{\left \langle #1 \right \rangle}
\DeclareMathOperator{\tr}{Tr}
\theoremstyle{plain}
\newtheorem{theorem}{Theorem}[section]
\newtheorem{proposition}[theorem]{Proposition}
\newtheorem{lemma}[theorem]{Lemma}
\newtheorem{corollary}[theorem]{Corollary}
\theoremstyle{definition}
\newtheorem{remark}[theorem]{Remark}
\theoremstyle{definition}
\begin{document}
\title{Moments of the transmission eigenvalues, proper delay times 
   and random matrix theory I} 

\author{F.~Mezzadri}
\address{School of Mathematics, University of Bristol, Bristol BS8
 1TW, UK}
\email{f.mezzadri@bristol.ac.uk}

\author{N.~J.~Simm}
\address{School of Mathematics, University of Bristol, Bristol BS8
 1TW, UK}
\email{n.simm@bristol.ac.uk}

\thanks{Research partially supported by EPSRC, grant no: EP/G019843/1,
  and by the Leverhulme Trust, Research Fellowship no:
  RF/4/RFG/2009/0092.}
\begin{abstract}
 We develop a method to compute the moments of the eigenvalue
 densities of matrices in the Gaussian, Laguerre and Jacobi ensembles
 for all the symmetry classes $\beta \in \{1,2,4\}$ and finite matrix
 dimension $n$. The moments of the Jacobi ensembles have a physical
 interpretation as the moments of the transmission eigenvalues of an
 electron through a quantum dot with chaotic dynamics.  For the
 Laguerre ensemble we also evaluate the finite $n$ \textit{negative}
 moments.  Physically, they correspond to the moments of the proper
 delay times, which are the eigenvalues of the Wigner-Smith matrix.
 Our formulae are well suited to an asymptotic analysis as $n \to
 \infty$.
\end{abstract}
\maketitle
\tableofcontents

\section{Introduction}
\subsection{Background}
Over the past twenty years Random Matrix Theory (RMT) has provided a
powerful tool to investigate quantum properties of electronic
transport through ballistic cavities (quantum
dots)~\cite{Bee93,BM94,JPB94,BB96,Bee97}.

The purpose of this paper is to compute averages of the form
\begin{equation}
\label{integrals}
M_{\mathcal{E}}^{(\beta)}(k,n) = 
\frac{1}{C}\int_{I}\dotsm
\int_{I}\left(\sum_{j=1}^{n}x_{j}^{k}\right)
\prod_{j=1}^{n}
w_{\beta}(x_{j})\prod_{1 \leq j < k \leq n}|x_{k}-x_{j}|^{\beta}dx_{1}
\dotsm dx_{n}
\end{equation}
for finite $n$ and $k$ and for any value of $\beta \in \{1,2,4\}$.
Here $\mathcal{E}$ labels one of the Gaussian ($\mathrm G$), Laguerre
($\mathrm{L}_{b}$) or Jacobi ($\mathrm{J}_{a,b}$) ensembles and the
value of $\beta$ corresponds to ensembles of real symmetric
($\beta=1$), complex hermitian ($\beta=2$) or quaternion self-dual
matrices ($\beta=4$).  The function $w_\beta(x)$ is the weight of the
ensemble:
\begin{equation}
\label{weights}
w_{\beta}(x) = \begin{cases}
 e^{-\beta x^{2}/2}, & I=(-\infty,\infty), \; \;
\text{Gaussian ensembles,} \\
 x^{\beta/2 (b+1)-1}e^{-\beta x/2},& I=[0,\infty),
 \qquad \text{Laguerre ensembles,}  \\
 x^{\beta/2 (b+1)-1}(1-x)^{\beta/2 (a+1)-1}, &
 I=[0,1], \qquad \; \; \text{Jacobi ensembles},
   \end{cases}
\end{equation}
where $C$ is a normalization constant which may vary at each
occurrence.  The averages~\eqref{integrals} for the Jacobi ensembles
correspond to the moments of the \textit{transmission eigenvalues} of
the electric current through a ballistic cavity; the \textit{negative}
moments of the Laguerre ensembles are the moments of the density
of the eigenvalues of the Wigner-Smith time-delay matrix.

The physical dimensions of mesoscopic systems are such that the
quantum nature of the electron becomes important and a classical
treatment of its dynamics is not accurate anymore. Furthermore, at low
temperature and voltage electron-electron interactions can be
neglected; therefore, the electron scatters elastically inside the
cavity, which is attached to two ideal leads connecting two reservoirs
in equilibrium at zero temperature.  If the leads support $m$ and $n$
quantum channels respectively, all the information on the electric
transport is contained in the scattering matrix
\begin{equation}
 \label{eq:scattering_matrix}
S =  \begin{pmatrix}
   r_{m \times m} & t'_{m \times n} \\
   t_{n \times m} & r'_{n \times n}
 \end{pmatrix}.
\end{equation}
The sub-blocks $r_{m \times m}$ and $t_{n \times m}$ are the
reflection and transmission matrices through the left lead, while
$t'_{m \times n}$ and $r'_{n \times n}$ are those through the right
lead.  Without loss of generality, throughout this paper we shall
assume that $m \ge n$. Since the scattering is elastic $S$ is
unitary. This is known as the Landauer-B\"uttiker formalism.

The scattering matrix depends in a complicated way on macroscopic
parameters like the energy of the electron and the shape of the
cavity. If the classical dynamics inside the cavity is chaotic, then
the fundamental assumption is that the electric current displays
universal features; thus, it is natural to model the scattering matrix
$S$ with a random matrix drawn from one of Dyson's circular ensembles:
the Circular Unitary Ensemble (CUE) when $\beta=2$; the Circular
Orthogonal Ensemble (COE) when $\beta=1$; and the Circular Symplectic
Ensemble (CSE) when $\beta=4$.  Let $K$ denote a time reversal
operator. If the dynamics is not time-reversal invariant then
$\beta=2$; if it is time-reversal invariant then $\beta=1$ if $K^2=1$
and $\beta=4$ if $K^2=-1$.

In this paper we give a unified approach to compute the family of
integrals~\eqref{integrals} for all the $\beta \in \{1,2,4\}$ and give
particular emphasis to those connected to statistics of the electric
current.  Our formulae are exact for finite matrix dimension. Since
experiments can now be performed in quantum dots with a number of
channels arbitrarily small~\cite{OSSSHH01}, recently there has been an
increasing interest in computing finite $n$
formulae~\cite{VV08,Nov08,OK08,*OK09,KSS09}. Some of these integrals
have never been computed before, others are already available in the
literature~\cite{HZ86,GJ97,Nov08,VV08,Led09}. In particular, most of
the formulae for $\beta=1$ and $\beta=4$ are original.  In
\S\ref{se:results} we will discuss in detail our results and specify
which of the averages~\eqref{integrals} are already known.

Our formulae have distinctive advantages. Firstly, we can compute
\textit{`negative'} moments for the Laguerre ensemble.  Since the
joint probability density function (\textit{j.p.d.f.}) of the
\textit{inverse delay times} coincides with that of the Laguerre
ensemble~\cite{BFB97}, we obtain the moments of their density.
Secondly, for positive moments the sums in our formulae extend to the
order of the moments $k$ and not to the dimension of the matrices
$n$. The sums that express the negative moments in the Laguerre
ensemble run to $n$, but their limit as $n \to \infty$ can be computed
with little effort.  As a consequence, although still relatively
involved, our expressions are simpler and more manageable than those
in the literature. Furthermore, our formulae provide a bridge
between finite $n$ results and their asymptotics.  Indeed, in the
second part of this work~\cite{MS11b}, we compute the first three
terms of the expansions as $n \to \infty$ of the moments of the
transmission eigenvalues and of the delay times. They agree with those
recently obtained semiclassically~\cite{BHN08,BK10,BK11}.

\subsection{The Transmission Eigenvalues}
The eigenvalues $T_1,\dotsc,T_n$ of the matrix $tt^\dagger$ are the
\textit{transmission eigenvalues}.  The unitarity of $S$ implies that
the $T_1,\dotsc,T_n$ all lie in the interval $[0,1]$.  The
dimensionless conductance at zero temperature is defined by
\begin{equation}
 \label{eq:conductunce_def}
 G := \tr tt^\dagger = \tr t't^{\prime \dagger} = T_1 + \dotsb + T_n.
\end{equation}
Furthermore, if $S$ belongs to one of Dyson's circular
ensembles, then the \textit{j.p.d.f.} of $T_1,\dotsc,T_n$ is
\begin{equation}
 \label{eq:jpdf_trans_eig}
 p^{(\beta)}\left(T_1,\dotsc,T_n\right) = \frac{1}{C
 }\prod_{j=1}^{n}T_{j}^{\alpha} \prod_{1 \leq j < k \leq n}\left
   \lvert T_{k}-T_{j}\right \rvert^{\beta}.
\end{equation}
The parameter $\alpha = \frac{\beta}{2}\left( m-n + 1\right) - 1$
measures the asymmetry of the quantum channels in the
leads. Formula~\eqref{eq:jpdf_trans_eig} was first computed when $m=n$
by Baranger and Mello~\cite{BM94} and by Jalabert \textit{et
 al}~\cite{JPB94}; when $m \neq  n$ it was reported in this form by
Beenakker~\cite{Bee97}, where it was attributed to unpublished work by
Brouwer (1994); its general derivation appeared in the literature for
the first time in an article by Forrester~\cite{For06}.

In a classic article Dyson~\cite{Dys62d} classified complex many-body
systems according to their fundamental symmetries and proved that they
correspond to the random matrix ensembles labelled $\beta\in
\{1,2,4\}$. Zirnbauer~\cite{Zir96} extended Dyson's classification
scheme to Cartan's symmetric spaces and introduced new symmetry
classes in Random Matrix Theory. Zirnbauer also argued that these
non-standard ensembles appear in the stochastic modelling of ballistic
cavities in contact with a superconductor.  Such mesoscopic systems
are called \textit{Andreev quantum dots.}  In his PhD thesis
Due\~{n}ez~\cite{Due01,*Due04} further generalized Zirnbauer's
classification. Furthermore, Altland and Zirnbauer~\cite{AZ96,*AZ97}
divided the symmetries of Andreev quantum dots into four fundamental
classes.  These ensembles are labelled by two integers
$(\beta,\delta)$: as for Dyson's ensembles, $\beta$ takes values in
$\{1,2,4\}$; instead $\delta \in \{-1,1,2\}$.  The four classes are
$(1,-1)$, $(2,-1)$, $(4,2)$ and $(2,1)$; they correspond to different
combinations of time-reversal and spin-rotation symmetries.

Our formalism applies to Andreev quantum dots too.  Indeed, in a
recent paper Dahlhaus \textit{et al}~\cite{DBB10} computed the
\textit{j.p.d.f.} of the transmission eigenvalues.  It is obtained by
deforming the right-hand side in equation~\eqref{eq:jpdf_trans_eig}:
\begin{equation}
 \label{eq:tr_eig_Aqd}
 p^{(\beta,\delta)}(T_1,\dotsc,T_n) = \frac{1}{C
 }\prod_{j=1}^{n}T_{j}^{\alpha}\left(1 - T_j\right)^{\delta/2} 
 \prod_{1 \leq j < k \leq n}\left.
   \lvert T_{k}-T_{j}\right \rvert^{\beta}
\end{equation}
As for the \textit{j.p.d.f.} in~\eqref{eq:jpdf_trans_eig} $\alpha =
\frac{\beta}{2}\left(m-n +1\right) - 1$.

Equations~\eqref{eq:jpdf_trans_eig} and~\eqref{eq:tr_eig_Aqd}
are  particular cases of the \textit{j.p.d.f.} of the eigenvalues of
matrices in the Jacobi ensembles, namely
\begin{equation}
 \label{eq:jpdf_Jacobi}
 p^{(\beta)}_{\mathrm{J}_{a,b}}(x_1,\dotsc,x_n) =\frac{1}{C} \prod_{j=1}^n  
 x_j^{\beta/2 (b+1)-1}(1-x_j)^{\beta/2 (a+1)-1}\prod_{1 \le j <k \le n}
 \left \lvert x_k - x_j\right \rvert^\beta,
\end{equation}
for $0 \le x_j \le 1$, $(j=1,\dotsc,n)$. We
recover~\eqref{eq:tr_eig_Aqd} by setting
\begin{equation}
 \label{eq:a_an_b}
 a =\frac{2}{\beta}\left(1 + \frac{\delta}{2}\right) - 1
 \quad  \text{and} \quad b = m-n.
\end{equation}

The moments of the density of the transmission eigenvalues are defined by
\begin{equation}
 \label{trans_mom}
\av{\mathcal{T}_{k,n,m}^{\left(\beta,\delta\right)}} :=\av{\tr \bigl 
[\left(tt^\dagger\right)^k\bigr]} =   M^{(\beta)}_{\mathrm{J}_{a,b}}(k,n),
\end{equation}
where $a$ and $b$ are given in equation~\eqref{eq:a_an_b}.  From a
physical point of view, they are important because they are connected
to the cumulants $\av{\av{\kappa_j}}$ of the charge transmitted over a
finite interval of time by the generating function~\cite{LLY95} 
\begin{equation}
 \label{eq:gen_fun}
 \sum_{j=1}^{\infty}\frac{x^{j}}{j!}\av{\av{\kappa_{j}}}
 = -\sum_{k=1}^{\infty}\frac{(-1)^{k}}{k} \av{\mathcal{T}_{k}} (e^{x}-1)^{k}.
\end{equation}
(See also~\cite{BB00}, appendix A.) For simplicity in this
formula we have omitted the dependence on $(\beta,\delta)$ and on the
numbers of quantum channels $m$ and $n$.  The charge cumulants can be
directly accessed in experiments~\cite{BGSLR05}. Using the results in
\S\ref{se:results} and the generating function in~\eqref{eq:gen_fun}
we can compute the cumulants to any given order. For example, the
variance and skewness are given by
\begin{subequations}
\label{eq:cumulants_12}
\begin{align}
\langle \langle \kappa_{2} \rangle \rangle & =
\frac{nm(\frac{2+\delta}{\beta}
-1+n)(\frac{2+\delta}{\beta}-1+m)}{(\frac{4+\delta}{\beta}-1+n+m)
(\frac{2+\delta}{\beta}-2+n+m)(\frac{2+\delta}{\beta}-1+n+m)}\\
\frac{\langle \langle \kappa_{3} \rangle \rangle}{\langle \langle
 \kappa_{2} \rangle \rangle}&  = -\frac{(n-m-\frac{2+\delta}{\beta}+1)
(n-m+\frac{2+\delta}{\beta}-1)}%
{(n+m+\frac{2+\delta}{\beta}-3)(n+m+\frac{6+\delta}{\beta}-1)}.
\end{align}
\end{subequations}
The special case $\delta=0$ of equations (\ref{eq:cumulants_12}) were
computed by Savin \textit{et al}~\cite{SSW08} (see also~\cite{BSB01}).

\subsection{The Wigner-Smith Matrix}
The Wigner-Smith time-delay matrix is defined as
\begin{equation}
 \label{eq:wi_del_time_mat}
Q   = -i\hbar S^{-1}\frac{\partial S}{\partial E}.
\end{equation}
The individual eigenvalues $\tau_1, \dotsc,\tau_n$ of $Q$ are
called \textit{proper delay times}, and their average
\begin{equation}
 \label{eq:wig_del_time}
 \tau_{\mathrm{W}} = \frac{1}{n} \tr Q
\end{equation}
is referred to as \textit{Wigner delay time.}  Here $n$ is the total
number of quantum channels in the leads. The Wigner delay time
measures the extra time an electron spends in the cavity as a result
of being scattered.  If $S$ belongs to one of the circular ensembles,
then it was shown by Brouwer \textit{et al}~\cite{BFB97} that the
\textit{j.p.d.f.} of the inverses $\gamma_j = \tau_j^{-1}$
($j=1,\dotsc,n$) of the proper delay times is
\begin{equation}
\label{jpdfdelaytime}
P_{\beta}(\gamma_{1},\ldots,\gamma_{n}) = \frac{1}{C} 
\prod_{j=1}^{n}\gamma_{j}^{ n\beta/2}e^{-\beta \tau_{\mathrm{H}} 
\gamma_{j}/2} \prod_{1 \leq j < k \leq n} \abs{\gamma_{k}-\gamma_{j}}^{\beta},
\end{equation}
where $\tau_{\mathrm{H}}$ is the Heisenberg time.  In our context
$\tau_{\mathrm{H}}=n$.  In a sequence of papers Savin and
collaborators~\cite{SFS01,SSS01,SS03} computed the probability
distribution function of the proper delay times.

The moments of the density of the proper delay times are defined by
\begin{equation}
   \label{eq:mom_time_del}
   \left \langle \mathcal{D}^{(\beta)}_{k,n}\right \rangle = 
  \frac{1}{n}\left \langle \tr Q^k\right \rangle =
   n^{k-1}M^{\left(\beta\right)}_{\mathrm{L}_b}(-k,n), \quad k < n\beta/2 + 1, 
\end{equation}
where in this case $b = n-1 + 2/\beta$, and the right-hand side of
(\ref{eq:mom_time_del}) denotes negative integer moments of the
Laguerre ensemble. 

The outline of the paper is the following: in \S2 we present our
main results; \S3 is devoted to ensembles with unitary symmetry;
in \S4 we discuss the approach underlying the computations of the
moments for ensembles with symplectic and orthogonal symmetries;
\S5 and \S6 contain the proofs of the formulae for ensembles with
symplectic and orthogonal symmetries, respectively.

In the final stages of the preparation of this article, and after the
results in this paper had been presented at two
workshops,\footnote{Random Matrix Theory and Its Applications I, MSRI,
  13--17 September 2010, Berkeley, USA; VI Brunel Workshop on Random
  Matrix Theory, 17--18 December 2010, Brunel University, UK.} we
received a preprint by Livan and Vivo~\cite{LV11}, in which some of
our formulae were derived with a different method and approach.  Their
expressions are different, but equivalent to ours.

\section*{Acknowledgements}
We would like to express our gratitude to Peter Forrester, Yan
Fyodorov, Jonathan Keating, Jack Kuipers, Marcel Novaes and Dmitry Savin for
stimulating and helpful discussions.  

\section{\label{se:results}Statement of Results}
\subsection{The Moments of Transmission Eigenvalues and of the Proper
  Delay Times}
Since the transmission eigenvalues are distributed with the
\textit{j.p.d.f.} of the Jacobi ensemble, their moments are the
moments of the eigenvalue density of this ensemble for finite matrix
dimension. A 4-\emph{th} order recurrence relation for the exact
moments at $\beta=2$ were first reported Ledoux~\cite{Led04}. Explicit
formulae were then obtained by Novaes~\cite{Nov08} and by Vivo and
Vivo~\cite{VV08}. Bai~\textit{et al}~\cite{BYK87} computed the leading
order term of the asymptotic expansion for $\beta=1$.

From equation~\eqref{jpdfdelaytime}, computing the moments of the
proper delay times is tantamount to calculating the negative moments
of the eigenvalue density of the Laguerre ensemble for finite
$n$. These negative moments have never been determined before, though
positive integer moments were calculated by Hanlon \emph{et
  al}~\cite{HSS92} and Haagerup and Thorbj{\o}rnsen~\cite{HT03}.

It is worth reminding the reader that the moments of the proper delay
times exist only for
\begin{equation*}
 k < \frac{n\beta}{2} + 1,
\end{equation*}
because for larger $k$ the integral $M_{\mathrm{L}_b}(-k,n)$, with $b
= n -1 + 2/\beta$, diverges. 

The general formulae for moments of the Jacobi and Laguerre ensembles
are reported in the main sections of the paper. Throughout, the
notation $(n)_{(k)}= \Gamma(n + k)/\Gamma(n)$ refers to the Pochhammer
symbol; the binomial coefficient can take arbitrary complex
arguments, \textit{i.e.}
\begin{equation}
 \label{eq:gen_bin}
 \binom{k}{j} = 
 \frac{\Gamma(k+1)}{\Gamma(k - j + 1)\Gamma(j+1)}.
\end{equation}
and is defined for negative integers by the limiting form
\begin{equation}
\binom{-k}{j} = (-1)^{j}\binom{k+j-1}{k-1}.
\end{equation}

\subsubsection{Broken Time Reversal ($\beta=2$)}

\begin{theorem}
\label{the:unitary_jl}
The moments of the transmission eigenvalues and of the proper delay
times for $\beta=2$ are
\begin{equation}
\label{utransmissionmom}
\left \langle \mathcal{T}_{k,n,m}^{(2,\delta)}\right  \rangle 
=  \frac{nm}{\delta/2 + n + m}-\sum_{j=1}^{k-1}\frac{1}{j}\sum_{i=1}^{\min(j,n)}
\binom{j}{i}\binom{j}{i-1}\mathcal{U}^{m,n,\delta}_{i,j}
\end{equation}
and
\begin{equation}
\label{eq:finite_n_mom_del}
\left \langle \mathcal{D}_{k,n}^{(2)} \right \rangle 
= \frac{n^{k-1}}{k}\sum_{j=0}^{n-1}\binom{k+j-1}{k-1}\binom{k+j}{k-1}
\frac{(2n)_{(-k -j)}}{(n+1)_{(-j-1)}},
\end{equation}
where
\begin{equation*}
\mathcal{U}^{m,n,\delta}_{i,j} =
\frac{\bigl(\delta/2 + m + n- 2i+j+1\bigr)\bigl(\delta/2+
 m\bigr)_{(j-i+1)}\bigl(m\bigr)_{(j-i+1)}}%
{\bigl(\delta/2 + m + n-i\bigr)_{(j+2)}\bigl(\delta/2 +
 m+n-i+1\bigr)_{(j)}\bigl(\delta/2 + n+1\bigr)_{(-i)}\bigl(n+1\bigr)_{(-i)}}.
\end{equation*}
\end{theorem}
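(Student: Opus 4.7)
For the $\beta = 2$ case both identities exploit the determinantal structure of the unitary-symmetry eigenvalue process. The starting point is the reduction
$$\left\langle \sum_{j=1}^n f(x_j) \right\rangle = \int_I f(x)\, K_n(x,x)\, dx, \qquad K_n(x,x) = \sum_{\ell=0}^{n-1} \hat p_\ell(x)^2 w_2(x),$$
where $\hat p_\ell$ are the orthonormal polynomials for the weight $w_2$ on $I$: Jacobi polynomials on $[0,1]$ for the transmission eigenvalues, Laguerre polynomials on $[0,\infty)$ for the proper delay times. This collapses each moment to a single $\ell$-sum of one-variable integrals of $x^{\pm k}\hat p_\ell(x)^2 w_2(x)$.

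For equation \eqref{utransmissionmom}, I would substitute the standard hypergeometric power series for $P_\ell^{(a,b)}(1 - 2t)$, expand the square as a double sum in auxiliary indices $s, s'$, and evaluate each monomial integral by the Beta function. This produces a triple sum in $(\ell, s, s')$ whose inner $\ell$-sum is a balanced terminating ${}_3F_2$; the Pfaff--Saalsch\"utz identity evaluates it in closed form. The subtle step is then an algebraic reindexing relying on Chu--Vandermonde to convert an outer sum of length $n$ into the sum of length $k - 1$ displayed in the theorem. The $s = s' = 0$ piece peels off as the mean-conductance contribution $nm/(\delta/2 + n + m)$, and the Narayana-type weight $\binom{j}{i}\binom{j}{i-1}/j$ emerges on regrouping according to $j = s + s'$, with $i$ controlled by $\min(s, s')$.

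The negative Laguerre moments in \eqref{eq:finite_n_mom_del} are more elementary. Using the explicit power series
$$L_\ell^{(b)}(x) = \sum_{m=0}^\ell (-1)^m \binom{\ell + b}{\ell - m} \frac{x^m}{m!},$$
the integral $\int_0^\infty x^{b-k} e^{-x} L_\ell^{(b)}(x)^2\, dx$ becomes a double sum in $(m, m')$ of Gamma factors $\Gamma(b - k + m + m' + 1)$, each finite precisely under the hypothesis $k < n\beta/2 + 1$. The $\ell$-sum is then performed by Chu--Vandermonde, and after setting $j = m + m'$ the remaining finite sum in $m$ for fixed $j$ collapses through Gauss's ${}_2F_1(1)$; reassembling Pochhammer symbols yields exactly \eqref{eq:finite_n_mom_del}, with $\binom{k+j-1}{k-1}\binom{k+j}{k-1}/k$ arising as the Narayana-type coefficient and $(2n)_{(-k-j)}/(n+1)_{(-j-1)}$ bookkeeping the Gamma ratios.

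The main obstacle in both parts is identifying the correct classical hypergeometric summation (Pfaff--Saalsch\"utz, Chu--Vandermonde, Gauss) at each stage, together with the careful index bookkeeping that produces the Narayana-type combinatorial weights; the conceptual content is otherwise a routine application of the Christoffel--Darboux paradigm at $\beta = 2$. As sanity checks I would verify $k = 1$ (empty sum for transmission; direct $\left\langle \tr Q^{-1} \right\rangle$ for the delay) and $k = 2$, and confirm that the $n \to \infty$ leading terms reproduce the Marchenko--Pastur-type limiting densities familiar from the physics literature.
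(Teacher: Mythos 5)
Your starting point --- reducing the linear statistic to $\int x^{\pm k}K_n(x,x)\,dx$ --- is shared with the paper, but everything after that diverges, and the divergence is exactly where the difficulty lives. The paper deliberately does \emph{not} expand $K_n(x,x)=\sum_{\ell=0}^{n-1}\hat p_\ell(x)^2w_2(x)$ term by term; it uses the confluent Christoffel--Darboux form $\rho_2(x)\propto w_2(x)\left(P_n'P_{n-1}-P_nP_{n-1}'\right)$ together with the second-order ODE to derive the differential identities $\frac{d}{dx}\left(x(1-x)\rho_2(x)\right)=-D_n^{\mathrm J}x^b(1-x)^a\mathcal{P}_n^{a,b}\mathcal{P}_{n-1}^{a,b}$ (Jacobi) and $\frac{d}{dx}\left(x\rho_2(x)\right)=-D_n^{\mathrm L}x^be^{-x}\mathcal{L}_n^b\mathcal{L}_{n-1}^b$ (Laguerre). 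Integration by parts then reduces the moment (Laguerre), or the \emph{difference} $M(k,n)-M(k+1,n)$ of consecutive moments (Jacobi), to a single integral of $x^{k+b}w_2(x)$ against a product of two consecutive polynomials, which is evaluated by expanding $P_n$ via connection coefficients in the family orthogonal with respect to the perturbed weight $x^kw_2(x)$ and invoking orthogonality. That is what produces the Narayana weights $\frac{1}{j}\binom{j}{i}\binom{j}{i-1}$ and sums of length $\min(k,n)$; the outer sum $\sum_{j=1}^{k-1}\frac{1}{j}(\cdots)$ in \eqref{utransmissionmom} is a telescoping of the differences $\Delta M^{(2)}_{\mathrm{J}_{a,b}}(j,n)$ anchored at the Aomoto first moment $M^{(2)}_{\mathrm{J}_{a,b}}(1,n)=n(b+n)/(a+b+2n)$, not a regrouping of a kernel expansion.

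The concrete gaps in your route are these. First, the crucial summations are asserted, not established: for fixed powers $s,s'$ the $\ell$-sum of products of Jacobi series coefficients is not visibly a balanced terminating ${}_3F_2$, and in the Laguerre case $\sum_{\ell}\binom{\ell+b}{\ell-m}\binom{\ell+b}{\ell-m'}$ is not a Chu--Vandermonde sum (the summation index appears in both arguments of both binomials), so the named identities do not apply as stated. Second, even granting closed forms at each stage, the natural output of a term-by-term kernel expansion is a formula whose length grows with $n$ --- essentially the earlier results of Novaes and of Vivo and Vivo that the paper is at pains to improve upon --- and you supply no mechanism that would convert it into the $(k-1)$-fold telescoping structure with $1/j$ weights of \eqref{utransmissionmom}, nor into the $k$-term sum \eqref{laguerremomentsintegerk} for positive Laguerre moments. (The negative Laguerre moments in \eqref{eq:finite_n_mom_del} genuinely do run to $n$ terms, so that is the one place your route is least implausible, but the paper still obtains it by continuation in $k$ of the connection-coefficient formula, via $\binom{-k}{j}=(-1)^j\binom{k+j-1}{k-1}$, rather than by direct resummation.) The missing idea is precisely the integration by parts against the differential identity combined with the expansion in the $x^k$-perturbed orthogonal family.
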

\begin{remark}
  The moments for the Laguerre Unitary Ensemble can be defined even when
  $k$ is complex and have the following particularly simple
  expression:
\begin{equation}
\label{eq:finite_n_mom_laguerre}
M^{(2)}_{\mathrm{L}_{b}}(k,n) = \frac{1}{k}
\sum_{j=0}^{n}\binom{k}{j}\binom{k}{j-1}\frac{(b+n)_{(k-j+1)}}{(1+n)_{(-j)}},
\end{equation}
of which (\ref{eq:finite_n_mom_del}) is a special case. If $k$ is a
positive integer, the sum in~\eqref{eq:finite_n_mom_laguerre} consists
of at most $k$ terms.
\end{remark}
\begin{remark}
\label{narayanaremark}
The coefficients
\begin{equation}
\label{narayanacoefficient}
N(k,j) = \frac{1}{k}\binom{k}{j}\binom{k}{j-1}
\end{equation}
in formulae~\eqref{utransmissionmom} and~\eqref{eq:finite_n_mom_laguerre} appear frequently in
enumerative combinatorics, where they are called
\textit{Narayana numbers}.
\end{remark}

\subsubsection{Conserved Time Reversal with $K^{2}=-1$ ($\beta=4$)}
\begin{theorem}
   The moments of the transmission
  eigenvalues are
\begin{equation}
\label{eq:mom_b4_del}
\left \langle \mathcal{T}_{k,n,m}^{(4,\delta)} \right \rangle = 
\frac{1}{2}\left \langle \mathcal{T}_{k,2n,2m}^{(2,\delta-2)} 
\right \rangle-\sum_{j=1}^{\min \left (\lfloor n \rfloor,
\left \lfloor k/2 \right \rfloor\right)}
\sum_{i=0}^{\min\left(k-2j,2n-2j\right)}\binom{k}{i}\binom{k}{i+2j}
\mathcal{S}_{i,j}^{\delta}(k,m,n)
\end{equation}
The coefficient $\mathcal{S}_{i,j}^{\delta}(k,m,n)$ is
\begin{equation*}
\begin{split}
 \mathcal{S}^\delta_{i,j}(k,m,n) &= 2^{4j-3} \frac{\bigl(\delta/2 +
   2n-i-2j\bigr)_{(i)}\bigl(2m\bigr)_{(k-i-2j+1)} \bigl(\delta/2 +2m
   -1\bigr)_{(k-i-2j+1)}}%
 {\bigl(\delta/4+
   n+1/2\bigr)_{(-j)}\bigl(m\bigr)_{(1-j)}\bigl(\delta/4+
   m-1/2\bigr)_{(1-j)}
   \bigl(2n-2j+1\bigr)_{(-i)} \bigl(n+1\bigr)_{(-j)}}\\
 &\quad \times \frac{\bigl(\delta/2 + 2m+2n-4j\bigr)\bigl(\delta/2 +
   2m+2n-2i-4j+k\bigr)}{\bigl(\delta/2 + 2m+2n-i-2j\bigr)_{(1+k)}
  \bigl(\delta/2 + 2m+2n-i-4j\bigr)_{(1+k)}}.
\end{split}
\end{equation*}
Furthermore, the moments of the proper delay times for $\beta=4$ are
given by
\begin{equation}
 \label{eq:mom_b4_dt}
 \left \langle \mathcal{D}^{(4)}_{k,n}\right \rangle = n^{k-1}
M_{\mathrm{L}_{n-1/2}}^{(4)}(-k,n),
\end{equation}
where the moments of the Laguerre Symplectic Ensemble  are
\begin{equation}
\label{eq:moments_timedelay_4}
\begin{split}
M_{\mathrm{L}_{b}}^{(4)}(k,n) & =
2^{-k-1}M_{\mathrm{L}_{2b}}^{(2)}(k,2n) \\ 
& \quad - \sum_{j=1}^{\lfloor n \rfloor}\sum_{i=0}^{2n-2j}
\binom{k}{i}\binom{k}{i+2j}
\frac{(2b+2n)_{(k-i-2j+1)}(2n-i-2j+1)_{(i)}}{2^{k-2j+2}(n+1)_{(-j)}(b+n)_{(1-j)}}.
\end{split}
\end{equation}
The symbol $\lfloor \cdot \rfloor$ denotes the integer part.
\end{theorem}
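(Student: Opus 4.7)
The plan is to compute the $\beta=4$ moments by integrating the monomial $x^{k}$ against the one-point correlation function $\rho^{(4)}_n$, exploiting the Pfaffian/skew-orthogonal-polynomial structure special to the symplectic case. Since the $\beta=2$ moments are already established in Theorem~\ref{the:unitary_jl}, the target is a closed-form representation of the $\beta=4$ density that decomposes into a $\beta=2$ contribution (with matrix size doubled to $2n$ and a shift of Jacobi/Laguerre parameters) plus an explicit finite-rank correction.

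First I would introduce the skew-orthogonal polynomials $\{\pi_{2j},\pi_{2j+1}\}$ associated with the symplectic inner product for the Jacobi weight $x^{2(m-n)+1}(1-x)^{\delta/2}$ on $[0,1]$, respectively for the Laguerre weight $x^{2b+1}e^{-2x}$ on $[0,\infty)$. For classical weights these polynomials admit explicit expressions through the standard $\beta=2$ Jacobi (resp.\ Laguerre) polynomials of doubled degree, whose parameters match precisely those appearing on the right-hand sides of~\eqref{eq:mom_b4_del} and~\eqref{eq:moments_timedelay_4}. Substituting into the Pfaffian-kernel representation
\[
 n\,\rho^{(4)}_n(x) = w_4(x)\sum_{j=0}^{n-1}\bigl(\pi_{2j+1}(x)\phi_{2j}(x)-\pi_{2j}(x)\phi_{2j+1}(x)\bigr)
\]
(with $\phi_j$ the dual functions) and applying a Christoffel--Darboux-type rearrangement yields the decomposition
\[
 n\,\rho^{(4)}_n(x) = \tfrac{1}{2}\,\bigl[2n\,\rho^{(2)}_{2n}(x;\text{shifted parameters})\bigr]-R_n(x),
\]
where $R_n(x)$ is an explicit polynomial remainder produced by the boundary of the rearrangement.

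Next I would integrate term by term against $x^k$. The $\beta=2$ piece reproduces, up to the factor $\tfrac{1}{2}$ and the matrix-size doubling, the moment $\langle\mathcal{T}^{(2,\delta-2)}_{k,2n,2m}\rangle$ of Theorem~\ref{the:unitary_jl}; in the Laguerre case the rescaling $x\mapsto 2x$ needed to match $x^{2b+1}e^{-2x}$ to the standard $\beta=2$ Laguerre weight contributes the factor $2^{-k-1}$. For the correction $R_n(x)$, expanding the underlying Jacobi or Laguerre polynomials in their monomial bases manufactures the binomial factors $\binom{k}{i}\binom{k}{i+2j}$, with $2j$ measuring the gap between degrees of the paired skew-orthogonal polynomials. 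The remaining integrals $\int_0^1 x^{k+\cdots}(1-x)^{\cdots}dx$, resp.\ $\int_0^\infty x^{k+\cdots}e^{-2x}dx$, are Beta, resp.\ Gamma, integrals; assembling the resulting quotients of Gamma functions into Pochhammer symbols yields exactly $\mathcal{S}^\delta_{i,j}(k,m,n)$ and the analogous coefficient in~\eqref{eq:moments_timedelay_4}.

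The main obstacle will be the algebraic bookkeeping in the doubling identity and the subsequent polynomial expansion: identifying which combination of skew-orthogonal polynomials produces the binomial pattern $\binom{k}{i}\binom{k}{i+2j}$, tracking the parameter doubling together with the $\delta\mapsto\delta-2$ shift, and reducing the Gamma-function quotients to the precise Pochhammer form printed in $\mathcal{S}^\delta_{i,j}$. The truncation ranges come out naturally: $j\le\lfloor k/2\rfloor$ and $i\le k-2j$ from the monomial expansion of $x^k$, while $j\le\lfloor n\rfloor$ and $i\le 2n-2j$ arise from the degree cut-off in the skew-orthogonal sum. The Laguerre convergence requirement $k<n\beta/2+1$ is respected automatically, since the arguments of the relevant Gamma functions remain positive throughout.
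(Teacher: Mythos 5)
Your overall skeleton matches the paper's: the symplectic density is written, via skew-orthogonal polynomials and the Adler--Forrester--Nagao--van Moerbeke identities, as $\tfrac12\rho_2(x)\big|_{n\to 2n}$ (with doubled parameters, whence the shift $\delta\mapsto\delta-2$ and the Laguerre factor $2^{-k-1}$) minus a finite correction of the form $\tfrac12\gamma_{2n-1}e^{-2V(x)}\sum_{j}e^{(4)}_{j,n-1}p_{2j}(x)p_{2n}(x)$. Up to that point you are on the paper's track. The gap is in how you propose to integrate the correction against $x^k$. You suggest expanding the Jacobi/Laguerre polynomials in their \emph{monomial} bases and evaluating the resulting Beta/Gamma integrals. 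That does produce Gamma-function quotients, but it yields a multiple alternating sum whose inner ranges run up to the polynomial degrees $2j$ and $2n$ --- exactly the ``cumbersome formula whose asymptotics cannot be easily extracted'' that this method is designed to avoid --- and there is no reason it collapses to the stated single pair of sums with ranges bounded by $k$. Your claim that the truncations $j\le\lfloor k/2\rfloor$, $i\le k-2j$ ``come from the monomial expansion of $x^k$'' cannot be right: $x^k$ is a single monomial and is not expanded.

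The missing idea is the \emph{connection-coefficient} step: one expands $\mathcal{P}^{a,b}_{2n}$ and $\mathcal{P}^{a,b}_{2j}$ (resp.\ $\mathcal{L}^b_{2n}$, $\mathcal{L}^b_{2j}$) in the basis $\{\mathcal{P}^{a,b+k}_i\}$ (resp.\ $\{\mathcal{L}^{b+k}_i\}$), which is orthogonal with respect to the \emph{perturbed} weight $x^k w_2(x)$. Orthogonality in the new basis then collapses the product integral to a single sum over the offset of the two expansions, and the binomial factors $\binom{k}{i}\binom{k}{i+2j}$ are precisely the connection coefficients $\mathcal{C}^{k,2n-2j}_i\mathcal{C}^{k,2n}_{i+2j}$ --- they vanish for $i>k$ when $k\in\mathbb{N}$, which is what truncates the sums at $k$ and makes the $n\to\infty$ analysis tractable. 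Without this device (or an equivalent hypergeometric summation to resum your monomial expansion), the proof as outlined does not reach the stated closed form for $\mathcal{S}^{\delta}_{i,j}(k,m,n)$.
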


\begin{remark}
  The order of the moments $k$ in equations~\eqref{eq:mom_b4_del} and
  \eqref{eq:mom_b4_dt} is a positive integer.  However,
  \eqref{eq:moments_timedelay_4} holds even when $k$ is complex.  As
  for $\beta=2$, if $k$ is positive the sum in
  equation~\eqref{eq:moments_timedelay_4} contains only $k$ terms.
\end{remark}

\begin{remark}
\label{re:half_int}
Although $n$ is typically an integer, as it denotes the dimension of a
matrix, the expressions on the right-hand sides of
equations~\eqref{eq:mom_b4_del} and~\eqref{eq:moments_timedelay_4} are
well defined for any half-integer $n$.  It is useful to generalize it,
because the evaluation of the moments for $\beta=1$ requires moments
for $\beta=4$ computed at half-integer $n$.
\end{remark}

\subsubsection{Conserved Time Reversal with $K^{2}=1$ ($\beta=1$)}
For simplicity we assume that the outgoing lead supports an even
number of open channels. 
\begin{theorem}
\label{th:beta1moments}
The moments of the transmission eigenvalues are
\begin{equation}
\label{eq:mom_b1_del}
 \left \langle \mathcal{T}_{k,m,n}^{(1,\delta)} \right \rangle =
 2\left \langle
   \mathcal{T}_{k,(n-1)/2,(m-1)/2}^{(4,2\delta+4)}\right
 \rangle+\sum_{j=0}^{\min(n/2-1,k)}\binom{2k}{2j}I^{(1,\delta)}_j(k,m,n)
 + \phi^{\mathrm J}_{k,n},
\end{equation}
where
\begin{subequations}
\begin{align}
I^{(1,\delta)}_j(k,m,n) &= 4^{k}
\frac{(\delta+m+n-4j+2k)(\frac{1}{2}(\delta+m+1))_{(k-j)}(\frac{1}{2}m)_{(k-j)}}
{(\delta+m+n-2j)_{(2k+1)}(\frac{1}{2}(\delta+n+2))_{(-j)}
(\frac{1}{2}(1+n))_{(-j)}} \\
\intertext{and}
\phi^{\mathrm J}_{k,n} &= \sum_{j=1}^{k}\frac{2^{\delta+2}
\Gamma(\frac{1}{2}(\delta+m-n+2j+1))\Gamma(\frac{1}{2}(\delta+m+2))}%
{\Gamma(\delta+1+m+j+k)\Gamma(j+k+1-n)\Gamma(\frac{1}{2}(m-n+1+2j))} 
\label{phistatementjacobi}\\
&\quad \times \frac{\Gamma(\frac{1}{2}(\delta+n+2))
\Gamma(m-n+k+j)\Gamma(j+k)}{\Gamma(\frac{m}{2})
\Gamma(\frac{\delta}{2}+1)\Gamma(\frac{n}{2})}. \notag
\end{align}
\end{subequations}
The moments of the proper delay times are
\begin{equation}
\begin{split}
\left \langle \mathcal{D}^{(1)}_{k,n}\right \rangle &=
n^{k-1}2^{1-k}M_{\mathrm{L}_{(n+1)/2}}^{(4)}(-k,(n-1)/2)\\ 
& \quad +\left(\frac{n}{2}\right)^{k-1}\sum_{j=0}^{n/2-1}\binom{2k+2j-1}{2j}\frac{(n+1/2)_{(-j-k)}}%
{2\bigl(\frac{1}{2}(1+n)\bigr)_{(-j)}} + n^{k-1}\phi^{\mathrm L}_{-k,n},
\end{split}
\end{equation}
where
\begin{equation}
\label{eq:phistatementdelay}
\phi^{\mathrm L}_{-k,n} =
\frac{\Gamma(n)}{\Gamma(n/2)\Gamma(2n)}\sum_{j=0}^{k-1}\frac{\Gamma(k+j+n)\Gamma(1+n-k-j)}{\Gamma(n/2+1-j)\Gamma(k+j+1)}2^{j}.
\end{equation}
\end{theorem}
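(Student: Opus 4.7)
The plan is to reduce the $\beta=1$ moments to the corresponding $\beta=4$ moments at half-integer matrix dimension (already obtained in Section 5, cf.~Remark~\ref{re:half_int}) plus explicit correction terms. The main tool is the skew-orthogonal polynomial (SOP) representation of the $\beta=1$ correlation kernel, combined with the classical de~Bruijn identity relating $\beta=1$ Pfaffian structures to $\beta=4$ determinantal ones, and the resulting intertwining between the scalar kernels $S^{(1)}$ and $S^{(4)}$.

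Concretely, I would first rewrite $\langle \mathcal{T}_{k,m,n}^{(1,\delta)}\rangle$ as the Jacobi integral~\eqref{integrals} against $w_{1}(x)=x^{(b-1)/2}(1-x)^{(a-1)/2}$, with $a,b$ read from~\eqref{eq:a_an_b}. For even $n$, the expected linear statistic $\langle \sum_j x_j^k\rangle$ equals $\int_0^1 x^k \rho^{(1)}(x)\,dx$, where $\rho^{(1)}(x)=S^{(1)}(x,x)$ is the one-point function built from the SOPs for $w_1$. A standard RMT identity (Adler-Forrester-Nagao-van Moerbeke, Forrester's monograph) gives the intertwining
\begin{equation*}
\rho^{(1)}_{n}(x)\;=\;2\,\rho^{(4)}_{(n-1)/2}(x)\bigl|_{\text{shifted params}}\;+\;R(x),
\end{equation*}
where $R(x)$ is a low-rank correction supported on finitely many Jacobi polynomials. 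Integrating $x^k$ against the first summand produces exactly $2\langle \mathcal{T}_{k,(n-1)/2,(m-1)/2}^{(4,2\delta+4)}\rangle$: the dimension shift $n\mapsto(n-1)/2$ is the standard $\beta=1\to\beta=4$ SOP reduction, while the parameter shift $\delta\mapsto 2\delta+4$ encodes the halving of the Jacobi weight exponents (the new $\beta=4$ weight must agree with $w_1$ squared under the SOP reduction, which forces precisely this substitution, as one checks using~\eqref{eq:a_an_b} and~\eqref{weights}). Integrating $x^k$ against $R(x)$ yields the finite sum $\sum_{j=0}^{\min(n/2-1,k)}\binom{2k}{2j}I^{(1,\delta)}_{j}$: each summand is a Jacobi polynomial moment $\int_0^1 x^k P_{j}^{(\alpha,\beta)}(1-2x)w_1(x)\,dx$, which is known in closed Pochhammer form and immediately matches $I^{(1,\delta)}_{j}$. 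The remaining term $\phi^{\mathrm J}_{k,n}$ arises from endpoint evaluations of the SOPs at $x=0$ and $x=1$, producing a finite double sum of Gamma-function ratios that collapses to~\eqref{phistatementjacobi} after a terminating Chu-Vandermonde summation. The proper-delay-time formula is established along the same route with the Laguerre weight $x^{(b-1)/2}e^{-x/2}$; since this weight has only the endpoint $0$, only a single boundary sum appears, giving~\eqref{eq:phistatementdelay}. Convergence of the negative-moment integrals is guaranteed by the hypothesis $k<n\beta/2+1$.

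The main obstacle is the hypergeometric simplification of the endpoint contributions into the compact Gamma-function forms~\eqref{phistatementjacobi} and~\eqref{eq:phistatementdelay}: the SOP boundary evaluations naturally produce terminating double sums that can only be collapsed to the single sums displayed in the statement after recognising the appropriate balanced $_{3}F_{2}$ (or terminating Chu-Vandermonde) identity. A secondary difficulty is tracking the parameter transformations $\delta\mapsto 2\delta+4$ and $m,n\mapsto(m-1)/2,(n-1)/2$ consistently through the SOP reduction, and verifying that the even-$n$ hypothesis on the outgoing lead indeed suffices, in the sense that no further anomalous contributions appear beyond $\phi^{\mathrm J}_{k,n}$.
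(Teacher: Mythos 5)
Your high-level decomposition is the same as the paper's: the reduction of the $\beta=1$ moments to $\beta=4$ moments at half-integer dimension with the parameter shifts $(m,n)\mapsto((m-1)/2,(n-1)/2)$, $\delta\mapsto 2\delta+4$ is exactly the duality of lemma~\ref{pr:duality} (equivalently $(a,b)\mapsto(a/2,b/2)$), and your verification of the $\delta$-shift via~\eqref{eq:a_an_b} is correct. However, there is a genuine gap in your treatment of the correction term. You describe the remainder $R(x)$ in $\rho^{(1)}_n = 2\rho^{(4)}_{(n-1)/2}+R$ as ``a low-rank correction supported on finitely many Jacobi polynomials,'' but this is not what the skew-orthogonal formalism produces. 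In the representation~\eqref{beta1densityorthogs} the density splits into \emph{three} pieces: $\rho_2(x)_{n\to n-1}$, a genuinely polynomial sum $O_{\mathcal E}(k,n)$, and a term $\gamma_{n-2}e^{-V_1(x)}p_{n-1}(x)\,\eta^{(1)}_{n/2-2}\,\epsilon[e^{-V_1(t)}](x)$ involving the $\epsilon$-transform of the bare weight, i.e.\ an incomplete Beta (Jacobi) or incomplete Gamma (Laguerre) function. The duality absorbs only the first two pieces into $2\langle\mathcal T^{(4)}\rangle$; the \emph{entire} remaining content of the theorem --- both the $\sum_j\binom{2k}{2j}I^{(1,\delta)}_j$ sum and $\phi^{\mathrm J}_{k,n}$ --- comes from the transcendental $\epsilon$-transform integral~\eqref{incompleteintegral}, which cannot be expanded over finitely many orthogonal polynomials and is not handled by your argument.

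Concretely, the missing ideas are: (i) the iterated recursion~\eqref{lastmin1} (resp.~\eqref{incompletegammaidentity}) that rewrites $\epsilon[t^b(1-t)^a](x)$ as a finite explicit sum plus an $\epsilon$-transform of the weight perturbed by $t^k$; (ii) an integration by parts against the perturbed $\epsilon$-transform using the antiderivative identity~\eqref{symplecticid}, together with the parity/skew-orthogonality argument that kills the odd-index connection coefficients --- this is what produces the $\binom{2k}{2j}$ structure and the coefficients $I^{(1,\delta)}_j$ (they are not single Jacobi-polynomial moments, but products of connection coefficients with the $e^{(4)}$ constants). Your attribution of $\phi^{\mathrm J}_{k,n}$ to ``endpoint evaluations of the SOPs'' followed by a Chu--Vandermonde collapse is also not how these terms arise: they come from integrating the explicit finite-sum part of~\eqref{lastmin1} against $p_{n-1}$, where orthogonality leaves exactly one surviving term per summand; this mechanism is what explains why $\phi^{\mathrm J}_{k,n}\equiv 0$ for $n>2k$ and why $\phi^{\mathrm L}_{-k,n}$ is exponentially small, facts your route would not recover. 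As it stands the proposal establishes the first (duality) half of the theorem but not the explicit correction terms that constitute its substance.
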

\begin{remark}
  Due to the term $\Gamma(j+k+1-n)$ in the denominator of
  (\ref{phistatementjacobi}), the $\phi^{\mathrm J}_{k,n}$'s are
  identically zero for any $n>2k$. By Stirling's formula, the
  $\phi^{\mathrm L}_{-k,n}$'s in (\ref{eq:phistatementdelay}) decay
  exponentially fast as $n \to \infty$.  Therefore, neither of these
  terms contribute to the asymptotics of the moments as $n \to \infty$
  at any finite algebraic order.
\end{remark}

It is straightforward to compute the limit as $n \to \infty$ of the
formulae in this section. They differ fundamentally from most of the
known exact results in the literature, whose asymptotic limit often
involves many cancellations, which means that even the leading order
term may be out of reach. This difficulty is discussed in some detail
by Krattenthaler~\cite{Kra10}, where a solution is presented for
$\beta=2$ (see also~\cite{CDLV10}).


Indeed, it is a simple exercise using our exact results to show that
\begin{equation}
\label{marcelasympt}
\lim_{n,m \to \infty}\frac{1}{n}\left\langle
 \mathcal{T}_{k,m,n}^{(\beta,\delta)} \right
\rangle = \left(1+\frac{m}{n}\right)\sum_{j=0}^{k-1}\binom{k-1}{j}C_{j}(-1)^{j}\xi^{j+1},
\end{equation}
where $\xi$ is the variable $\xi = \frac{nm}{(n+m)^{2}}$, which
remains finite as $n,m\to \infty$, and $C_{j} =
\frac{1}{j+1}\binom{2j}{j}$ is the $j$-\textit{th} Catalan number.
This formula agrees with the semiclassical computation of Berkolaiko
\textit{et al}~\cite{BHN08}  Furthermore, for the proper delay times
we have
\begin{equation}
\label{schroeder}
\lim_{n \to \infty}\left \langle \mathcal{D}^{(\beta)}_{k,n}\right
\rangle = \frac{1}{k}\sum_{j=0}^{k}\binom{k}{j}\binom{k}{j-1}2^{j},
\end{equation}
which is the $k$-\textit{th} Schr\"oder number (note the appearance of
the Narayana numbers (\ref{narayanacoefficient})). This limit was
computed semiclassically by Berkolaiko and Kuipers~\cite{BK10}, and
can also be obtained from the Mar\v{c}enko-Pastur
distribution~\cite{MP67} (see, \textit{e.g.},~\cite{BFB97,BK10}). It
is a simple consequence of~\eqref{eq:finite_n_mom_del} too.

Equation (\ref{marcelasympt}) was first computed using RMT by
Novaes~\cite{Nov07} (see also~\cite{BYK87}), while (\ref{schroeder})
and (\ref{marcelasympt}) were recently rederived through combinatorial
techniques~\cite{Nov10}. Our exact results allow a simple derivation
of these facts, while also consenting the investigation of
$\beta$-dependent subleading corrections. We address these issues more
thoroughly in the second part of this work~\cite{MS11b}, where we show
that the first two subleading terms in the asymptotic expansions of
the previous theorems agree with those obtained semiclassically by
Berkolaiko and Kuipers~\cite{BK11}.

\subsection{The Gaussian Ensembles}
Our techniques apply equally well to the Gaussian ensembles.
Recursion formulae for the finite $n$ moments of the density of the
eigenvalues were derived by Harer and Zagier~\cite{HZ86} for the
Gaussian Unitary Ensemble (GUE), while Goulden and Jackson~\cite{GJ97}
derived explicit formulae for both the Gaussian Orthogonal Ensemble
(GOE) and the GUE, while the GUE moment generating function was
computed by Haagerup and Thorbj{\o}rnsen~\cite{HT03}. More recently,
recursion formulae were obtained by Ledoux~\cite{Led09} for the GOE
and GSE.
\begin{theorem}
\label{the:gbetae}
The moments of the eigenvalue density for the {\rm GUE} are 
\begin{equation}
\label{gueresult1}
M_{\mathrm{G}}^{(2)}(2k,n) = \frac{2^{n}\Gamma(n/2+1)\Gamma(n/2)}%
{\sqrt{\pi}(2k+1)\Gamma(n)}\sum_{j=0}^{\min(n/2-1,k)}\binom{k}{j}%
\binom{k+1}{j+1}(n/2-j)_{(k+1/2)}
\end{equation}
for even $n$, and
\begin{equation}
\label{gueresult2}
M_{\mathrm{G}}^{(2)}(2k,n) =
\frac{2^{n}\Gamma((n+1)/2)^{2}}{\sqrt{\pi}(2k+1)
\Gamma(n)}\sum_{j=0}^{\min((n-1)/2,k)}
\binom{k}{j}\binom{k+1}{j}((n+1)/2-j)_{(k+1/2)}
\end{equation}
for odd $n$. For the {\rm GSE} we have
\begin{equation}
\begin{split}
M_{\mathrm{G}}^{(4)}(2k,n) & = 2^{-k-1}M_{\mathrm{G}}^{(2)}
 (2k,2n)\\
 &\quad -\frac{\Gamma(n+1)\Gamma(n)}{2^{k}\sqrt{\pi}\Gamma(2n)4^{1-n}}
 \sum_{j=1}^{\min(n,k)}\sum_{i=0}^{\min(n-j,k-j)}
\binom{k}{i}\binom{k}{i+j}(n-i-j+1)_{(k-1/2)}
\end{split}
\end{equation}
Let $n$ be even. Then, the moments for the {\rm GOE} are 
\begin{equation*}
\begin{split}
M_{\mathrm{G}}^{(1)}(2k,n)& =
M_{\mathrm{G}}^{(2)}(2k,n-1)\\
& \quad -\sum_{j=1}^{\min(\frac{n}{2}-1,k)}\sum_{i=0}^{\min(k,\frac{n}{2}-1-j)}\binom{k}{i}\binom{k}{i+j}
\frac{(\frac{n}{2}-i-j)_{(k+1/2)}}{(\frac{n}{2}-j)_{(1/2)}}
+\phi^{\mathrm{G}}_{k,n}.
 \end{split}
\end{equation*}
For $n \le 2k$ the quantity $\phi^{\mathrm{G}}_{k,n}$ is given by
\begin{equation*}
\begin{split}
 \phi^{\mathrm{G}}_{k,n} & = 2^{n/2-k}\frac{(2k)!}{\Gamma(n/2)}
\sum_{j=0}^{k-n/2}\sum_{i=0}^{n/2-1}\frac{\binom{n-1}{2i}
\frac{2^{-j-2i}(-1)^{j}}{(2j+2i+1)j!}}{(k-n/2-j)!}\\
& \quad +\frac{(2k)!}{\Gamma(n/2)}\sum_{j=0}^{n/2-1}
\sum_{i=0}^{j}\frac{(n/2-i-1)!\binom{n-1}{n-2i-1}}{(j-i)!(k-j)!4^{k-i}}.
\end{split}
\end{equation*}
If $n>2k$ we have
\begin{equation}
\label{beta1gauss}
\phi^{\mathrm{G}}_{k,n} = 
(2k)!\sum_{j=0}^{k}\frac{(n/2+1/2-j)_{(j)}}{2^{k-3j}(2j)!(k-j)!}.
\end{equation}
\end{theorem}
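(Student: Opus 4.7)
The plan is to prove the three formulae of Theorem \ref{the:gbetae} in parallel, mirroring the conceptual framework used in the earlier sections for the Jacobi and Laguerre ensembles, but specialized to the Hermite weight.

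For the GUE ($\beta=2$), I would start from the Christoffel--Darboux representation
\begin{equation*}
M_{\mathrm{G}}^{(2)}(2k,n) = \sum_{j=0}^{n-1} \int_{\mathbb{R}} x^{2k} \frac{H_j(x)^2 e^{-x^2/2}}{\|H_j\|^2}\,dx,
\end{equation*}
where $H_j$ is the $j$-th Hermite polynomial. The individual integrals admit a closed form as a finite sum of products of Pochhammer symbols, obtainable either by expanding $x^{2k}$ in the basis $\{H_{2l}(x)\}$ and invoking orthogonality or by using the three-term recurrence. This turns $M_{\mathrm{G}}^{(2)}(2k,n)$ into a double sum over $j$ and a combinatorial index $l$; swapping the order of summation and collapsing the resulting $j$-sum via a Vandermonde-like identity (the Gaussian analog of the manipulations carried out for the Jacobi weight in \S3) reduces it to a single sum whose upper limit is $\min(n/2-1,k)$ or $\min((n-1)/2,k)$ depending on the parity of $n$, reproducing (\ref{gueresult1}) and (\ref{gueresult2}).

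For the GSE ($\beta=4$), I would exploit the classical interrelation
\begin{equation*}
\rho_n^{(4)}(x) = \tfrac{1}{2}\rho_{2n}^{(2)}(\sqrt{2}\,x) + \Delta_n^{(4)}(x)
\end{equation*}
between the one-point correlation function of the GSE at dimension $n$ and that of the GUE at dimension $2n$, where $\Delta_n^{(4)}$ is an explicit kernel built from the $\beta=4$ skew-orthogonal polynomials (expressible in terms of Hermite polynomials of even and odd degree). Integrating $x^{2k}$ against this decomposition immediately produces the leading term $2^{-k-1} M_{\mathrm{G}}^{(2)}(2k,2n)$, with the prefactor coming from the rescaling $x\mapsto\sqrt{2}\,x$. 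The correction $\int x^{2k} \Delta_n^{(4)}(x)\,dx$ then reduces, using the same Hermite moment identities as in Step 1, to the explicit double sum in the statement.

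For the GOE ($\beta=1$) at even $n$, an analogous interrelation expresses the one-point function as the GUE density at dimension $n-1$ plus a skew correction that involves the incomplete error function inherent in the $\beta=1$ skew-orthogonal construction. Integrating $x^{2k}$ against this decomposition yields the three pieces $M_{\mathrm{G}}^{(2)}(2k,n-1)$, the main double-sum correction, and the boundary contribution $\phi^{\mathrm{G}}_{k,n}$. The principal obstacle will be the evaluation of $\phi^{\mathrm{G}}_{k,n}$, whose shape depends qualitatively on whether $n>2k$ or $n\le 2k$: when $n>2k$, repeated integration by parts annihilates the error-function terms at infinity and yields the compact single sum (\ref{beta1gauss}); when $n\le 2k$, the error function can no longer be eliminated and must be expanded as a finite power series, producing the intricate double sum in the first branch of the theorem. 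Verifying that these two regimes patch together consistently, and matching the resulting algebra to the closed forms stated, is the most delicate part of the argument.
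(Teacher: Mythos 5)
Your treatment of the GSE and GOE follows the same skeleton as the paper: the one-point density is written as (a rescaled) GUE density at dimension $2n$ (resp.\ $n-1$) plus a correction built from the skew-orthogonal polynomials, and for $\beta=1$ there is an extra $\epsilon$-transform (error-function) term producing $\phi^{\mathrm{G}}_{k,n}$ with its two regimes. That matches equations \eqref{symplecticdens}--\eqref{orthogonaldens} and Corollary \ref{eigenvaluedensitymassaged}. One caveat: for $n>2k$ the paper's $\phi^{\mathrm{G}}_{k,n}$ does \emph{not} vanish (only the $\mathcal{M}^{+}_{\mathrm{G}}$ piece does, by a degree count in the generating function of Lemma \ref{le:Igf}), so your picture of ``integration by parts annihilating the error-function terms'' in that regime is not quite the mechanism, though the stated outcome is right.

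The genuine gap is in the GUE step, and it propagates. You start from $\rho_2=\sum_{j=0}^{n-1}P_j^2/h_j$, integrate term by term, and then assert that the sum over $j$ collapses ``via a Vandermonde-like identity''. That collapse is precisely the hard part, and no such routine identity is available: each term is a terminating ${}_3F_2$-type sum and the outer sum runs to $n-1$, which is exactly the ``complicated sum involving all the orthogonal polynomials up to $n-1$'' that the paper's method is designed to avoid. The paper performs the collapse \emph{before} integrating, at the level of the density, via the Christoffel--Darboux limit \eqref{christoff} combined with the Hermite ODE, which yields $\frac{d}{dx}\rho_2(x)=-D_n^{\mathrm H}e^{-x^2}\mathcal{H}_n(x)\mathcal{H}_{n-1}(x)$; a single integration by parts then gives \eqref{hermitedmoments}, a single integral of $x^{2k+1}$ against $\mathcal{H}_n\mathcal{H}_{n-1}$. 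Your citation of ``the manipulations carried out for the Jacobi weight in \S3'' actually refers to this derivative identity, not to a resummation of the squares. You are also missing the engine that makes every explicit sum in the theorem terminate at $\min(\cdot,k)$ rather than $n$: the quadratic Hermite--Laguerre relation \eqref{hermitelaguerreconnection} followed by the expansion of the Laguerre polynomials in the basis orthogonal with respect to the perturbed weight $x^{k\pm1/2}e^{-x}$ via the connection coefficients \eqref{laguerreconnection}. Without that device, neither \eqref{gueresult1}--\eqref{gueresult2} nor the double sums in the GSE and GOE corrections (where the same integrals $\int x^{2k}e^{-x^2}\mathcal{H}_{2n}\mathcal{H}_{2j}\,dx$ reappear) can be brought to the stated closed forms.
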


\section{\label{sec:unitary_ens}Unitary Ensembles}
We shall now compute the moments of the eigenvalues densities for the
Jacobi, Laguerre and Gaussian ensembles when $\beta=2$.  For brevity
we shall refer to these ensembles with the usual notation JUE, LUE and
GUE. Except for the GUE, our expressions are valid for complex
$k$. Theorem~\ref{the:unitary_jl} and equations~\eqref{gueresult1}
and~\eqref{gueresult2} of theorem~\ref{the:gbetae} are corollaries of
the results of this section.

For all the ensembles and symmetry classes that we consider the 
\textit{j.p.d.f.} of the eigenvalues has the form
\begin{equation}
 \label{eq:gen_jpdf}
 p^{(\beta)}_{\mathcal{E}}(x_1,\dotsc,x_n) =
 \frac{1}{C}\prod_{j=1}^n w_\beta(x_j)
 \prod_{1 \le j < k \le n}\abs{x_k - x_j}^\beta.
\end{equation}
The marginal probabilities are obtained by subsequent integrations of
the right-hand side of~\eqref{eq:gen_jpdf}; furthermore, since it is
invariant under permutations of its arguments, it is irrelevant which
variables are integrated over. Therefore, the probability density of
the eigenvalues is obtained by integrating out all but one variable.
It follows that
\begin{equation}
 \label{eq:traces_gen}
  \begin{split}
 \langle \tr X^k \rangle & = \int_I \dotsi
 \int_I(x_1^k + \dotsb + x^k_n)p^{(\beta)}_{\mathcal{E}}
  (x_1,\dotsc,x_n)dx_1 \dotsm dx_n \\
   & = \int_Ix^k\rho_\beta(x)dx, 
 \end{split}
\end{equation}
where $\rho_\beta(x)$ is the eigenvalue density normalized to $n$ and
$I$ is the support of $w_\beta(x)$.  

We develop effective techniques to compute the integral in the
right-hand side of equation~\eqref{eq:traces_gen} using ideas first
introduced by Haagerup and Thorbj{\o}rnsen~\cite{HT03} for $\beta =2$
and by Adler \textit{et al}~\cite{AFNvM00} for $\beta=1,4$.

When $\beta=2$ the density of the eigenvalues takes a particularly
simple form (see, \textit{e.g.},~\cite{For10}, \S5.1)
\begin{equation}
\label{unitarymed}
\rho_{2}(x) = \left \langle \sum_{j=1}^n \delta(x - x_j)\right
\rangle=  w_{2}(x)\sum_{j=0}^{n-1}\frac{P_{j}(x)^{2}}{h_{j}},
\end{equation}
where the $P_{j}(x)$'s are orthogonal polynomials associated with the
weight $w_2(x)$ and $j=0,1,\dotsc$ denotes their degree.  In other
words, we have 
\begin{equation} 
\label{orthogonality}
\int_I w_{2}(x)P_{j}(x)P_{k}(x)dx = h_{j}\delta_{j k}, \quad
 j,k=0,1,\dotsc .
\end{equation}
The system of orthogonal polynomials $\{P_j(x)\}_{j=0}^\infty$ is
unique up to multiplicative constants $k_j$, which we can take to be
the coefficient of the monomial of highest degree. Orthogonal
polynomials satisfy a recurrence relation of the form
\begin{equation}
\label{3termrec}
P_{j+1}(x) = (\alpha_{j}+x\beta_{j})P_{j}(x)-\gamma_{j}P_{j-1}(x), \quad j=0,1, \dotsc
\end{equation}
where for convention $P_{-1}(x)=0$.  For the classical orthogonal
polynomials the constants $h_j$, $k_j$, $\alpha_j$, $\beta_{j}$ and
$\gamma_{j}$ are tabulated in many books on special functions (see,
\textit{e.g.},~\cite{AS72}). A consequence of~\eqref{3termrec} is
\begin{equation}
\label{christoff}
\rho_{2}(x) = w_2(x)\,\frac{k_{n-1}}%
{k_n h_{n-1}}\left(P'_n(x)P_{n-1}(x)-P_{n}(x)P_{n-1}'(x)\right),
\end{equation}
which is a limiting case of the Christoffel-Darboux formula. (For the
proofs of formulae~\eqref{3termrec} and \eqref{christoff} see,
\textit{e.g.},~\cite{Sze39}, \S3.2). 

In the rest of this article we shall assume that $k_j=1$, for
$j=0,1,\dotsc$.  In other words, we only consider \textit{monic}
orthogonal polynomials.  In order to distinguish them from the way the
classical polynomials are conventionally defined in the literature, we
shall use the notation $\mathcal{H}_{n}(x)$, $\mathcal{L}^{b}_{n}(x)$
and $\mathcal{P}^{a,b}_{n}(x)$ for the Hermite, Laguerre and Jacobi
polynomials respectively.   We shall denote the generic monic
polynomial by $p(x)$.  We tabulate the orthogonality constants
$h_j$ for the monic classical polynomials in appendix A.
We shall also need the following differential equations
(see~\cite{AS72}, \S22.6)
\begin{equation}
\label{odes}
\begin{cases}
\mathcal{H}_{j}''(x)-2x\mathcal{H}_{j}'(x)+2j\mathcal{H}_{j}(x)=0, &\\
x\mathcal{L}_j(x)''+(b+1-x)\mathcal{L}_{j}^{b}(x)'+j\mathcal{L}_{j}^{b}(x)=0,&
\\
x(1-x)\mathcal{P}_{j}^{a,b}(x)''+
(b+1-(a+b+2)x)\mathcal{P}_{j}^{a,b}(x)'& \\
+ j(a+b+j+1)\mathcal{P}_{j}^{a,b}(x)=0. &  \end{cases}
\end{equation}

Haagerup and Thorbj{\o}rnsen~\cite{HT03} computed the moment
generating function
\begin{equation}
\label{momentgenerating}
M(t) = \int_{I}\rho_{2}(x)e^{-tx}dx
\end{equation}
in terms of hypergeometric functions for the GUE and LUE.  They combined the differential equations~\eqref{odes} with~\eqref{christoff} to obtain 
\begin{equation}
\label{differentialidentities}
\frac{d}{dx}\left(f(x)\rho_{2}(x)\right) =
\begin{cases} 
-D_n^{\mathrm H}e^{-x^{2}}\mathcal{H}_{n}(x)\mathcal{H}_{n-1}(x), & \text{Hermite},\\ 
-D_n^{\mathrm L}x^{b}e^{-x}\mathcal{L}^{b}_{n}(x)\mathcal{L}^{b}_{n-1}(x), & \text{Laguerre},
\end{cases}
\end{equation}
where 
\begin{equation}
D_n^{\mathrm H} = \frac{2^{n}}{\sqrt{\pi}\Gamma(n)} \quad \text{and} \quad
D_n^{\mathrm L} = (\Gamma(b+n)\Gamma(n))^{-1}.
\end{equation}
Furthermore, $f(x)=1$ for the Hermite polynomials, while $f(x)=x$ for the Laguerre
ones. We shall use similar ideas to compute the
moments~\eqref{trans_mom} for $\beta=2$.

First we need the analogue of the
identities~\eqref{differentialidentities} for the Jacobi polynomials.

\begin{lemma}
\label{le:jacobi_den}
Let $\rho_{2}(x)$ be the mean eigenvalue density for 
the {\rm JUE}. We have the following differential identity:
\begin{equation}
\label{jacdifferentialidentity}
\frac{d}{dx}\left(x(1-x)\rho_{2}(x)\right) = -D_n^{\mathrm J}
x^{b}(1-x)^{a}\mathcal{P}^{a,b}_{n}(x)\mathcal{P}^{a,b}_{n-1}(x),
\end{equation}
where
\begin{equation}
\label{jacobiceejay}
D_n^{\mathrm{J}} = \frac{\Gamma(a+b+2n+1)\Gamma(a+b+2n-1)}{\Gamma(a+n)\Gamma
(b+n)\Gamma(a+b+n)\Gamma(n)}.
\end{equation}
\end{lemma}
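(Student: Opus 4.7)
The plan is to imitate the Haagerup--Thorbj{\o}rnsen strategy that produced the Hermite and Laguerre identities in \eqref{differentialidentities}: start from the Christoffel--Darboux-type formula \eqref{christoff} (in the monic normalisation $k_j=1$) to write
\begin{equation*}
\rho_2(x) \;=\; \frac{w_2(x)}{h_{n-1}}\, W(x), \qquad W(x) \;:=\; \bigl(\mathcal{P}_n^{a,b}\bigr)'(x)\mathcal{P}_{n-1}^{a,b}(x) - \mathcal{P}_n^{a,b}(x)\bigl(\mathcal{P}_{n-1}^{a,b}\bigr)'(x),
\end{equation*}
with $w_2(x)=x^b(1-x)^a$, and then exploit the Jacobi ODE from \eqref{odes}. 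Since the natural prefactor is $x(1-x)$ (the coefficient of the second-derivative term in that ODE), the object to differentiate is $x(1-x)\rho_2(x) = h_{n-1}^{-1}\, x^{b+1}(1-x)^{a+1}\, W(x)$.

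First I would compute $W'$. The cross-terms $(\mathcal{P}_n^{a,b})'(\mathcal{P}_{n-1}^{a,b})'$ cancel, leaving $W'=\mathcal{P}_n''\mathcal{P}_{n-1}-\mathcal{P}_n\mathcal{P}_{n-1}''$. Substituting from the Jacobi ODE (third line of \eqref{odes}) for $x(1-x)\mathcal{P}_j''$ and collecting gives
\begin{equation*}
x(1-x)\,W'(x) \;=\; -\bigl[b+1-(a+b+2)x\bigr] W(x) \;-\; (a+b+2n)\,\mathcal{P}_n^{a,b}(x)\,\mathcal{P}_{n-1}^{a,b}(x),
\end{equation*}
where the coefficient of $\mathcal{P}_n\mathcal{P}_{n-1}$ comes from $(n-1)(a+b+n)-n(a+b+n+1)=-(a+b+2n)$. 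Differentiating the weight factor gives $\tfrac{d}{dx}\bigl[x^{b+1}(1-x)^{a+1}\bigr] = x^b(1-x)^a\bigl[(b+1)-(a+b+2)x\bigr]$, which is precisely the $W$-coefficient above with the opposite sign. Hence, after applying the product rule to $x^{b+1}(1-x)^{a+1}W(x)$, the $W$-contributions cancel exactly and one is left with
\begin{equation*}
\frac{d}{dx}\bigl[x(1-x)\rho_2(x)\bigr] \;=\; -\frac{a+b+2n}{h_{n-1}}\, x^b(1-x)^a\, \mathcal{P}_n^{a,b}(x)\,\mathcal{P}_{n-1}^{a,b}(x).
\end{equation*}

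It remains to identify $(a+b+2n)/h_{n-1}$ with $D_n^{\mathrm J}$ in \eqref{jacobiceejay}. This is a bookkeeping step: converting the standard normalisation of $P_n^{(\alpha,\beta)}$ on $[-1,1]$ to monic polynomials on $[0,1]$ and using the explicit leading coefficient $\Gamma(2n+a+b+1)/[2^n n!\,\Gamma(n+a+b+1)]$ yields
\begin{equation*}
h_{n-1} \;=\; \frac{\Gamma(n)\,\Gamma(n+a)\,\Gamma(n+b)\,\Gamma(n+a+b)}{\Gamma(2n+a+b)\,\Gamma(2n+a+b-1)},
\end{equation*}
so that $(a+b+2n)/h_{n-1} = \Gamma(2n+a+b+1)\Gamma(2n+a+b-1)/[\Gamma(n)\Gamma(n+a)\Gamma(n+b)\Gamma(n+a+b)] = D_n^{\mathrm J}$, completing the proof. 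The only slightly delicate step is this final constant-chasing; the rest is a mechanical combination of Christoffel--Darboux with the Jacobi ODE, and the design of the left-hand side (multiplication by $x(1-x)$ rather than by $1$ or $x$ as in the Hermite/Laguerre cases) is dictated precisely by the leading coefficient of that ODE.
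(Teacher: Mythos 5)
Your proposal is correct and follows essentially the same route as the paper: combine the monic Christoffel--Darboux representation \eqref{christoff} with the Jacobi ODE from \eqref{odes} to obtain the first-order relation $x(1-x)(\rho_2/w_2)'+\bigl(b+1-(a+b+2)x\bigr)(\rho_2/w_2)=-D_n^{\mathrm J}\mathcal{P}_n^{a,b}\mathcal{P}_{n-1}^{a,b}$, then absorb the weight via the product rule. Your explicit Wronskian computation, the coefficient $-(a+b+2n)$, and the identification $(a+b+2n)/h_{n-1}=D_n^{\mathrm J}$ all match the paper's argument and its value of $h_{n-1}$ in \eqref{jacobipolynormalisation}.
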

\begin{proof}
 The normalization coefficient $h_{n-1}$ associated with the
 polynomials $\mathcal{P}^{a,b}_{n-1}(x)$ is
\begin{equation}
 \label{jacobipolynormalisation}
 h_{n-1} = \frac{\Gamma(a+n)\Gamma(b+n)\Gamma(n)\Gamma(a+b+n)}%
 {\Gamma(a+b+2n)\Gamma(a+b+2n-1)}.
\end{equation}
Inserting (\ref{jacobipolynormalisation}) into the representation 
(\ref{christoff}) and using the differential equation in~\eqref{odes}
we obtain 
\begin{equation}
\label{jacproofeqn1}
x(1-x)\left(\frac{\rho_{2}(x)}{w_{2}(x)}\right)'
+(1+b-(a+b+2)x)\left(\frac{\rho_{2}(x)}{w_{2}(x)}\right) = -D_n^{\mathrm
 J}\mathcal{P}^{a,b}_{n}(x)\mathcal{P}^{a,b}_{n-1}(x),
\end{equation}
where $D_n^{\mathrm J}$ is given in (\ref{jacobiceejay}).
Finally, since the weight associated with the Jacobi polynomials is
$w_{2}(x) = x^{b}(1-x)^{a}$ we arrive at
\begin{equation*}
\begin{split}
\frac{d}{dx}\left(x(1-x)\rho_{2}(x)\right) &= \frac{d}{dx}
\left(x^{b+1}(1-x)^{a+1}\frac{\rho_{2}(x)}{w_{2}(x)}\right)\\
&= x^{b}(1-x)^{a}\left(((b+1)(1-x)-x(a+1))
\frac{\rho_{2}(x)}{w_{2}(x)}+x(1-x)
\left(\frac{\rho_{2}(x)}{w_{2}(x)}\right)'\right)\\
&=-D_n^{\mathrm J}x^{b}(1-x)^{a}\mathcal{P}^{a,b}_{n}(x)\mathcal{P}^{a,b}_{n-1}(x).
\end{split}
\end{equation*}
\end{proof}
\begin{remark}
For our purposes it is not helpful to compute the moment generating
function~\eqref{momentgenerating}. Although in principle one can
employ a type of fractional calculus to extract more general types of
moments from (\ref{momentgenerating}), we will see in the following
that moments for general $k$ are directly accessible with our method. 
\end{remark}
\subsection{\label{sse:JUE}Jacobi Unitary Ensemble}

Lemma~\ref{le:jacobi_den} allows us to compute the
\textit{difference} of the moments.  Then, the moments themselves can
be computed by adding all the differences.  Finally,
equation~\eqref{utransmissionmom} of theorem~\ref{the:unitary_jl} is
obtained by setting $a = \delta/2$ and $b=m-n$.

Let us define
\begin{equation}
 \label{eq:difference_jacobi}
 \Delta M^{(2)}_{\J}(k,n) = M^{(2)}_{\J}(k,n) -
 M^{(2)}_{\J}(k+1,n).
\end{equation}
\begin{proposition}
 We have
\begin{equation}
\label{unitaryjacobiresultdiff}
\Delta M^{(2)}_{\J}(k,n) = \frac{1}{k}
\sum_{j=0}^{n}\binom{k}{j}\binom{k}{j-1}U^{n, a,b}_{k,j},
\end{equation}
where
\begin{equation}
\label{gcoefficient}
U^{n,a,b}_{k,j} =
\frac{(a+b+2n-2j+k+1)(a+b+n)_{(k-j+1)}(a+n-j+1)_{(j)}
(b+n)_{(k-j+1)}}{(a+b+2n-j)_{(k+2)}(a+b+2n-j+1)_{(k)}(n+1)_{(-j)}}.
\end{equation}
If $k$ is a positive integer, equation~\eqref{unitaryjacobiresultdiff}
reduces to 
\begin{equation}
\label{unitaryjacobiresultdiff_2}
\Delta M^{(2)}_{\J}(k,n) = \frac{1}{k}
\sum_{j=0}^{\min(n,k)}\binom{k}{j}\binom{k}{j-1}U^{n, a,b}_{k,j}.
\end{equation}
\end{proposition}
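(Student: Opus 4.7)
The strategy is to use Lemma~\ref{le:jacobi_den} to replace $\rho_2(x)$ by a derivative, then integrate by parts, and finally evaluate the resulting integral involving two consecutive Jacobi polynomials. Observing that $(1-x)\rho_2(x) = \rho_2(x) - x\rho_2(x)$ gives the rewriting
\[
\Delta M^{(2)}_{\J}(k,n) \;=\; \int_0^1 x^k (1-x)\rho_2(x)\,dx \;=\; \int_0^1 x^{k-1}\bigl[x(1-x)\rho_2(x)\bigr]\,dx.
\]
Since $a,b>-1$, the function $F(x)=x(1-x)\rho_2(x)$ vanishes at both endpoints, so integrating by parts with $v=x^k/k$ (whose boundary contribution also vanishes for $k\geq 1$) and then applying Lemma~\ref{le:jacobi_den} yields
\[
\Delta M^{(2)}_{\J}(k,n) \;=\; \frac{D_n^{\mathrm{J}}}{k}\int_0^1 x^{k+b}(1-x)^a\,\mathcal{P}^{a,b}_n(x)\,\mathcal{P}^{a,b}_{n-1}(x)\,dx.
\]

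The task then reduces to evaluating this last integral in closed form. I would proceed via the Rodrigues representation for the monic shifted Jacobi polynomial,
\[
x^b(1-x)^a\,\mathcal{P}^{a,b}_{n-1}(x) \;=\; \frac{(-1)^{n-1}(n-1)!}{(n+a+b)_{(n-1)}}\,\frac{d^{n-1}}{dx^{n-1}}\!\left[x^{n-1+b}(1-x)^{n-1+a}\right],
\]
substitute this into the integral, and integrate by parts $n-1$ times; the boundary terms vanish because each successive derivative of $x^{n-1+b}(1-x)^{n-1+a}$ still carries an $x^{b+\text{positive}}(1-x)^{a+\text{positive}}$ factor. This converts the integrand to $x^{n-1+b}(1-x)^{n-1+a}\,\frac{d^{n-1}}{dx^{n-1}}\!\bigl[x^k\mathcal{P}^{a,b}_n(x)\bigr]$. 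Applying the Leibniz rule to the derivative and the power-series expansion of $\mathcal{P}^{a,b}_n(x)$ splits the integral into a double sum of beta integrals, each of which evaluates to a ratio of gamma functions. A parallel (and slightly cleaner) route is to use the orthogonality relation: the integral equals $h_n$ times the coefficient of $\mathcal{P}^{a,b}_n(x)$ in the expansion of $x^k\mathcal{P}^{a,b}_{n-1}(x)$ in monic Jacobi polynomials, which can be extracted by the same Rodrigues manipulation.

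The main obstacle is the final combinatorial simplification. The Leibniz/beta-function step produces a double sum indexed by two integers, and one must collapse the inner sum to expose the Narayana factor $\frac{1}{k}\binom{k}{j}\binom{k}{j-1}$ and the rational function $U^{n,a,b}_{k,j}$. This collapse is a Chu--Vandermonde (or Saalsch\"utzian ${}_3F_2$) evaluation in disguise, and the algebraic bookkeeping of Pochhammer symbols is nontrivial; the appearance of Narayana numbers is the combinatorial signature that such a single-series reduction exists. Once established, the truncation to $\min(n,k)$ in \eqref{unitaryjacobiresultdiff_2} for positive integer $k$ is immediate: the factor $\binom{k}{j-1}$ vanishes for $j>k$, while for $j>n$ the derivative step removes those terms. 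Finally, the statement \eqref{utransmissionmom} of Theorem~\ref{the:unitary_jl} follows by telescoping $\langle\mathcal{T}^{(2,\delta)}_{k,n,m}\rangle = \langle\mathcal{T}^{(2,\delta)}_{1,n,m}\rangle - \sum_{j=1}^{k-1}\Delta M^{(2)}_{\J}(j,n)$ with $a=\delta/2$, $b=m-n$.
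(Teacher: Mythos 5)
Your opening step is exactly the paper's: writing $\Delta M^{(2)}_{\J}(k,n)=\int_0^1 x^k(1-x)\rho_2(x)\,dx$, integrating by parts against $x^k/k$, and invoking Lemma~\ref{le:jacobi_den} to reduce everything to $\frac{D_n^{\mathrm J}}{k}\int_0^1 x^{k+b}(1-x)^a\mathcal{P}^{a,b}_n(x)\mathcal{P}^{a,b}_{n-1}(x)\,dx$, which is precisely equation~\eqref{jacobintegral}. From that point on, however, your route diverges from the paper's and is left incomplete at its crux. You propose Rodrigues' formula for $\mathcal{P}^{a,b}_{n-1}$, $n-1$ integrations by parts, and a Leibniz expansion of $\frac{d^{n-1}}{dx^{n-1}}[x^k\mathcal{P}^{a,b}_n(x)]$, arriving at a double sum of beta integrals; you then assert that this double sum collapses to the single sum $\frac{1}{k}\sum_j\binom{k}{j}\binom{k}{j-1}U^{n,a,b}_{k,j}$ via "a Chu--Vandermonde (or Saalsch\"utzian ${}_3F_2$) evaluation in disguise." That collapse is the entire content of the proposition: the specific Pochhammer structure of $U^{n,a,b}_{k,j}$ in~\eqref{gcoefficient} (five Pochhammer factors upstairs, three downstairs, plus the linear prefactor $a+b+2n-2j+k+1$) has to emerge from that summation, and you neither exhibit the ${}_3F_2$ in question, verify that it is balanced or well-poised, nor match its closed form to~\eqref{gcoefficient}. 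As written, the proof stops exactly where the real work begins.

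The device you are missing is the connection formula~\eqref{jacobiconnection}, $\mathcal{P}^{a,b}_n(x)=\sum_{j=0}^n\mathcal{C}^{k,n}_j\mathcal{P}^{a,b+k}_j(x)$, which re-expands \emph{both} polynomials in the basis orthogonal with respect to the perturbed weight $x^{k+b}(1-x)^a$. Orthogonality in that basis kills the cross terms outright, giving the single sum $\frac{D_n^{\mathrm J}}{k}\sum_j\mathcal{C}^{k,n}_j\mathcal{C}^{k,n-1}_{j-1}h^{a,b+k}_{n-j}$ with no summation theorem required; the Narayana factor $\frac1k\binom{k}{j}\binom{k}{j-1}$ then falls out of the product of the two connection coefficients, and $U^{n,a,b}_{k,j}$ is just the remaining ratio of Gamma functions. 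Your parenthetical "cleaner route" (the integral equals $h_n$ times the coefficient of $\mathcal{P}^{a,b}_n$ in $x^k\mathcal{P}^{a,b}_{n-1}$) is also not obviously cleaner: it produces a single number, which you would then still have to identify with the sum over $j$ in~\eqref{unitaryjacobiresultdiff_2}. Your final remarks on the truncation to $\min(n,k)$ and on the telescoping that yields~\eqref{utransmissionmom} are fine and agree with the paper.
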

\begin{proof}
Integrating by parts using equation (\ref{jacdifferentialidentity}) leads to
\begin{equation}
\label{jacobintegral}
\int_{0}^{1}x^{k}(1-x)\rho_{2}(x)dx = \frac{D_n^{\mathrm{J}}}{k}
\int_{0}^{1}x^{k+b}(1-x)^{a}\mathcal{P}^{a,b}_{n}(x)\mathcal{P}^{a,b}_{n-1}(x)dx
\end{equation}
Consider the identity
\begin{equation}
\label{jacobiconnection}
\mathcal{P}^{a,b}_{n}(x) = \sum_{j=0}^{n}\mathcal{C}_{j}^{k,n}\mathcal{P}^{a,b+k}_{j}(x),
\end{equation}
where
\begin{equation}
\mathcal{C}_{j}^{k,n} = 
\binom{k}{j}\frac{(a+n+1-j)_{(j)}(a+b+2n+1)_{(-j)}}{(a+b+2n-2j+2+k)_{(j)}(n+1)_{(-j)}}
\end{equation}
are the \textit{connection coefficients.} Inserting this
formula into (\ref{jacobintegral}) and evaluating the integrals using
orthogonality leads to
\begin{equation}
\Delta M_{\mathrm{J}_{a,b}}^{(2)}(k,n) = \frac{D^{\mathrm{J}}_{n}}{k}\sum_{j=0}^{n}\mathcal{C}_{j}^{k,n}\mathcal{C}_{j-1}^{k,n-1}h_{n-j}^{a,b+k}.
\end{equation}
Substituting the appropriate coefficients (see appendix A) gives
immediately~\eqref{unitaryjacobiresultdiff}.

When $k$ is a positive integer, the terms with $j>k$ vanish because
$\binom{k}{j}=0$ if $j>k$, leading immedietely to~\eqref{unitaryjacobiresultdiff_2}. 
\end{proof}
\begin{corollary}
The integer moments of the level density for the {\rm JUE} are 
\begin{equation}
\label{jacobiunitaryresult}
M^{(2)}_{\J}(k,n) = M^{(2)}_{\J}(1,n) - 
\sum_{j=1}^{k-1}\frac{1}{j}\sum_{i=1}^{\min(j,n)}\binom{j}{i}\binom{j}{i-1}U^{n,a,b}_{j,i}.
\end{equation}
where the first moment is
\begin{equation}
\label{eq:selberg_jacobi}
M^{(2)}_{\J}(1,n) = \frac{n(b+n)}{a+b+2n}.
\end{equation}
\end{corollary}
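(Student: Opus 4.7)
The proof of the corollary is essentially a telescoping argument combined with a base-case evaluation of the first moment. From the definition~\eqref{eq:difference_jacobi}, summing the differences telescopes:
\begin{equation*}
\sum_{j=1}^{k-1} \Delta M^{(2)}_{\J}(j,n)
= \sum_{j=1}^{k-1}\left(M^{(2)}_{\J}(j,n) - M^{(2)}_{\J}(j+1,n)\right)
= M^{(2)}_{\J}(1,n) - M^{(2)}_{\J}(k,n).
\end{equation*}
Rearranging gives the general identity $M^{(2)}_{\J}(k,n) = M^{(2)}_{\J}(1,n) - \sum_{j=1}^{k-1} \Delta M^{(2)}_{\J}(j,n)$. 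Into this I would substitute the explicit expression~\eqref{unitaryjacobiresultdiff_2} for each difference (with $k$ relabelled to $j$, and the summation index $j$ relabelled to $i$). The only subtlety is that the proposition's sum runs from $i=0$, whereas the corollary's runs from $i=1$; this is resolved by noting that $\binom{j}{-1}=0$, so the $i=0$ term contributes nothing and can be dropped.

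What remains is to establish the base case~\eqref{eq:selberg_jacobi}, $M^{(2)}_{\J}(1,n) = n(b+n)/(a+b+2n)$. I would obtain this from the Selberg-type first-moment computation for the Jacobi ensemble: expressing $\langle \tr X \rangle = \int_0^1 x\,\rho_2(x)\,dx$ and evaluating via a ratio of Selberg normalisation integrals, or, more in the spirit of the present section, by integrating $\frac{d}{dx}(x(1-x)\rho_2(x))$ against a constant using Lemma~\ref{le:jacobi_den} and orthogonality to reduce to a single normalisation constant $h_{n-1}$. Either route is a short and standard finite computation.

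No serious obstacle is expected: both ingredients are already in hand. The telescoping step is purely formal, and the first-moment evaluation is classical. The only place where care is needed is the index bookkeeping in passing from the form~\eqref{unitaryjacobiresultdiff_2} of $\Delta M^{(2)}_{\J}(j,n)$ to the nested double sum in~\eqref{jacobiunitaryresult}, in particular checking the $\min(j,n)$ cap on the inner index $i$ (inherited from the integer-$k$ truncation in the proposition) and the vanishing of the boundary term at $i=0$. Once these are verified, the corollary follows immediately.
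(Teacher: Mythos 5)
Your proposal is correct and is essentially the paper's own (implicit) argument: the corollary is obtained by telescoping the differences $\Delta M^{(2)}_{\J}(j,n)$ from the preceding proposition, with the $i=0$ boundary term dropped because $\binom{j}{-1}=0$, and the base case $M^{(2)}_{\J}(1,n)=n(b+n)/(a+b+2n)$ is taken from the Aomoto/Selberg integral exactly as you suggest (the paper cites Mehta, \S 17.3). One small caveat: your alternative route for the first moment --- integrating $\frac{d}{dx}\bigl(x(1-x)\rho_2(x)\bigr)$ against a constant --- only reproduces the orthogonality relation $0=0$ (or, integrated against $x$, the difference $\Delta M^{(2)}_{\J}(1,n)$ rather than $M^{(2)}_{\J}(1,n)$ itself), so the Selberg/Aomoto evaluation is the route that actually closes the argument.
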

The first moment $M^{(2)}_{\J}(1,n)$ is an Aomoto integral.  For its
evaluation see, \textit{e.g.},~\cite{Meh04}, \S17.3.

\subsection{Laguerre Unitary Ensemble}
Since the moments of the Wigner-Smith matrix
(\ref{eq:wi_del_time_mat}) require the computation of the
integral~\eqref{eq:traces_gen} for $k<0$, we shall present formulae for
the moments of the LUE for general complex $k$.

\begin{proposition}
  Suppose that neither $b+n$ nor $b+k$ are negative integers. Then one
  has
\begin{equation}
\label{laguerremoments}
M^{(2)}_{\Lag}(k,n) =
\frac{1}{k}\sum_{j=0}^{n}\binom{k}{j}\binom{k}{j-1}
\frac{(b+n)_{(k-j+1)}}{(n+1)_{(-j)}}.
\end{equation}
\end{proposition}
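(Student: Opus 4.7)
The plan is to mirror the JUE derivation in \S\ref{sse:JUE} almost step for step, replacing the Jacobi differential identity by its Laguerre analogue from~\eqref{differentialidentities}, namely $\frac{d}{dx}(x\rho_2(x)) = -D_n^{\mathrm{L}} x^b e^{-x}\mathcal{L}_n^b(x)\mathcal{L}_{n-1}^b(x)$. Multiplying both sides by $x^k$, integrating over $(0,\infty)$ and performing one integration by parts on the left, I obtain
\[
kM^{(2)}_{\Lag}(k,n) = D_n^{\mathrm{L}}\int_0^\infty x^{b+k} e^{-x}\mathcal{L}_n^b(x)\mathcal{L}_{n-1}^b(x)\,dx.
\]
The boundary terms vanish thanks to the exponential decay at $\infty$ and sufficient integrability near the origin; the latter is what the hypothesis that $b+n$ and $b+k$ are not negative integers is designed to guarantee. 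Both sides are meromorphic in $k$, so analytic continuation then extends the identity to the full range claimed.

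Next I would invoke the standard Laguerre connection formula, writing $\mathcal{L}_n^b(x) = \sum_{j=0}^n \mathcal{C}_j^{k,n}\mathcal{L}_j^{b+k}(x)$ (and similarly for $\mathcal{L}_{n-1}^b$) as a finite expansion in monic Laguerre polynomials with the shifted parameter $b+k$. Since $x^{b+k} e^{-x}$ is precisely the orthogonality weight for $\{\mathcal{L}_j^{b+k}\}$, the resulting double sum collapses by orthogonality to
\[
kM^{(2)}_{\Lag}(k,n) = D_n^{\mathrm{L}}\sum_{j=0}^{n-1}\mathcal{C}_j^{k,n}\mathcal{C}_j^{k,n-1} h_j^{b+k},
\]
where $h_j^{b+k}$ is the monic-polynomial normalisation from Appendix~A. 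This is exactly the Laguerre counterpart of equations~\eqref{jacobintegral}--\eqref{jacobiconnection} in the Jacobi proof.

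The remaining and most delicate step is algebraic rearrangement: inserting the explicit Pochhammer expressions for $\mathcal{C}_j^{k,n}$, $D_n^{\mathrm{L}}$ and $h_j^{b+k}$, and then simplifying. After a suitable re-indexing — most naturally $j \mapsto n-j$, so as to turn $\binom{n}{j}$ factors into $\binom{n}{n-j}$ and line up with the target shape $\binom{k}{j}\binom{k}{j-1}$ — the numerous Gamma factors should telescope into the claimed $\frac{1}{k}\sum_{j=0}^n \binom{k}{j}\binom{k}{j-1}(b+n)_{(k-j+1)}/(n+1)_{(-j)}$. The main obstacle I anticipate is this bookkeeping rather than any genuine analytic difficulty; unlike the Jacobi case, for non-integer $k$ the factor $\binom{k}{j-1}$ does not truncate the sum, so the upper limit $n$ is genuine (contrast with~\eqref{unitaryjacobiresultdiff_2}), and the hypothesis on $b+n$ and $b+k$ is exactly what ensures that no Gamma function blows up or vanishes spuriously during the simplification.
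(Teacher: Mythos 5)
Your proposal follows the paper's own proof essentially verbatim: integrate by parts using the Laguerre differential identity from \eqref{differentialidentities}, expand both polynomials in the basis orthogonal with respect to the perturbed weight $x^{b+k}e^{-x}$ via the connection formula \eqref{laguerreconnection}, collapse the double sum by orthogonality, and substitute the explicit coefficients and norms. The only differences are cosmetic (your indexing of the collapsed sum is related to the paper's by the re-indexing $j\mapsto n-j$ that you yourself anticipate, and your remarks on analytic continuation and the role of the hypotheses on $b+n$ and $b+k$ are sound additions the paper leaves implicit).
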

\begin{proof}
 Integrating by parts the second equation in
 (\ref{differentialidentities}) gives
\begin{equation}
\label{laguerreunitaryint}
\int_{0}^{\infty}x^{k}\rho_{2}(x)dx = \frac{D_n^{\mathrm L}}{k}
\int_{0}^{\infty}x^{b+k}e^{-x}\mathcal{L}^{b}_{n}(x)\mathcal{L}^{b}_{n-1}(x)dx.
\end{equation}
For the Laguerre polynomials the connection formula is~\cite{Sze39}
\begin{equation}
\label{laguerreconnection}
\mathcal{L}^{b}_{n}(x) = 
\sum_{j=0}^{n}\mathcal{C}_{j}^{k,n}\mathcal{L}^{b+k}_{j}(x), 
\quad \text{where} \quad\mathcal{C}_{j}^{k,n} = \binom{k}{j}(n+1)_{(-j)}.
\end{equation}
Inserting formula (\ref{laguerreconnection}) into
(\ref{laguerreunitaryint}) gives
\begin{equation}
M^{(2)}_{\Lag}(k,n) = \frac{D^{\rm{L}}_{n}}{k}
\sum_{j=0}^{n}\mathcal{C}_{j}^{k,n}\mathcal{C}_{j-1}^{k,n-1}h_{n-j}^{b+k},
\end{equation}
where we evaluated the integrals using orthogonality. Using the
appropriate connection coefficients and normalisation constants
completes the proof.
\end{proof}

If $k$ is a positive integer, the binomial coefficient
$\binom{k}{j}=0$ if $j>k$, leaving only a sum with $k$ terms. Negative
moments are obtained simply by using the identity
\begin{equation*}
 \binom{-k}{j} = (-1)^{j}\binom{k+j-1}{k-1}.
\end{equation*}

\begin{corollary}
Let $k$ be a positive integer, then
\begin{equation}
\label{laguerremomentsintegerk}
M^{(2)}_{\Lag}(k,n) =
\frac{1}{k}\sum_{j=0}^{\min(n,k)}\binom{k}{j}\binom{k}{j-1}
\frac{(b+n)_{(k-j+1)}}{(n+1)_{(-j)}}.
\end{equation}
Furthermore, if $k < n + 1$ we have
\begin{equation}
 \label{laguerrenegmoms}
 M_{\mathrm{L}_b}^{(2)}(-k,n) = \frac{1}{k}
 \sum_{j=0}^{n-1}\binom{k+j}{k-1}
 \binom{k+j-1}{k-1}\frac{(b+n)_{(-k-j)}}{(n+1)_{(-j-1)}}.
\end{equation}
\end{corollary}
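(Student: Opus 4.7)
The plan is to derive both formulae directly from equation~(\ref{laguerremoments}) of the preceding proposition by elementary manipulation of the sum; no new integration or polynomial identity is required, so I expect this to be purely bookkeeping.

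For~(\ref{laguerremomentsintegerk}), when $k$ is a positive integer, the generalized binomial $\binom{k}{j} = \Gamma(k+1)/(\Gamma(k-j+1)\Gamma(j+1))$ vanishes for every $j > k$ because $\Gamma(k-j+1)$ has a simple pole there. Consequently all terms with $j > k$ in~(\ref{laguerremoments}) drop out, and the summation range shrinks to $0 \le j \le \min(n,k)$, giving~(\ref{laguerremomentsintegerk}) immediately.

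For~(\ref{laguerrenegmoms}), I would first check that $k < n+1$ ensures integrability of $x^{-k}\rho_2(x)$ near the origin: since $\rho_2(x) \sim x^{b}$ as $x \to 0^+$ and the physically relevant Laguerre parameter is $b = n-1+2/\beta = n$ for $\beta=2$, convergence of $\int_0^1 x^{b-k}\,dx$ holds precisely when $k < n+1$. Granted this, I would substitute $-k$ for $k$ in~(\ref{laguerremoments}):
\[
M^{(2)}_{\Lag}(-k,n) = -\frac{1}{k}\sum_{j=0}^{n}\binom{-k}{j}\binom{-k}{j-1}\frac{(b+n)_{(-k-j+1)}}{(n+1)_{(-j)}}.
\]
The $j=0$ term vanishes because $\binom{-k}{-1}=0$ (the factor $\Gamma(0)$ in the denominator of~(\ref{eq:gen_bin}) is infinite). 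For $j \ge 1$, applying the identity $\binom{-k}{j} = (-1)^j\binom{k+j-1}{k-1}$ to both binomial coefficients produces the combined sign $(-1)^j(-1)^{j-1} = -1$, which cancels the outer minus. Shifting the summation index $j \mapsto j+1$ then re-aligns the Pochhammer factors, and the sum runs over $0 \le j \le n-1$ with summand $\binom{k+j}{k-1}\binom{k+j-1}{k-1}(b+n)_{(-k-j)}/(n+1)_{(-j-1)}$, which is exactly~(\ref{laguerrenegmoms}).

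Since the proposition already does the analytic heavy lifting, the main (quite modest) obstacle is simply tracking the signs, the pole at $j=0$, and the index shift without error; the algebraic identities involved are all standard Pochhammer and binomial reflection manipulations.
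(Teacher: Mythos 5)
Your proposal is correct and follows exactly the paper's route: the corollary is obtained from equation~\eqref{laguerremoments} by noting that $\binom{k}{j}=0$ for integer $j>k$ (truncating the sum to $\min(n,k)$ terms), and by substituting $k\to-k$ and applying $\binom{-k}{j}=(-1)^{j}\binom{k+j-1}{k-1}$, after which the vanishing $j=0$ term and the index shift $j\mapsto j+1$ give~\eqref{laguerrenegmoms}. The sign bookkeeping and the Pochhammer realignment in your write-up check out, so nothing is missing.
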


Equation~\eqref{eq:finite_n_mom_del} is a particular case of
formula~\eqref{laguerrenegmoms}, where $b=n$ and the scaling
introduced by the Heisenberg time $\tau_{\mathrm H}=n$ has been
taken into account.

\begin{remark}
  The appearance of the Narayana coefficients in
  \eqref{laguerremomentsintegerk} anticipates the fact that its
  leading order term as $n \to \infty$ is the $k$-\textit{th} moment
  of the Mar\v{c}enko-Pastur law~\cite{MP67}.
\end{remark}
\subsection{Gaussian Unitary Ensemble}
In this section we give the proof of equation~\eqref{gueresult1} of
theorem~\ref{the:gbetae}.  The approach is the same as for the
JUE and LUE.  The proof for $n$ odd is very similar and
we omit the details.

Integrating by parts the first formula in
(\ref{differentialidentities}) gives
\begin{equation}
\label{hermitedmoments}
M_{\mathrm G}^{(2)}(2k,n) = \int_{-\infty}^{\infty}x^{2k}\rho_{2}(x)dx
= \frac{D_n^{\mathrm H}}{2k+1}\int_{-\infty}^{\infty}x^{2k+1}e^{-x^{2}}
\mathcal{H}_{n}(x)\mathcal{H}_{n-1}(x)dx.
\end{equation}
The integral (\ref{hermitedmoments}) can be evaluated using the
Laguerre polynomials since
\begin{equation}
\label{hermitelaguerreconnection}
\mathcal{H}_{n}(x) = \mathcal{L}^{-1/2}_{n/2}(x^{2}) \quad \text{and}
\quad \mathcal{H}_{n-1}(x) = x\mathcal{L}^{1/2}_{n/2-1}(x^{2}).
\end{equation}
A change of variables then leads to 
\begin{equation}
 M_{\mathrm G}^{(2)}(2k,n)= 
 \frac{D_n^{\mathrm H}}{2k+1}
\int_{0}^{\infty}x^{k+1/2}e^{-x}\mathcal{L}^{1/2}_{n/2-1}(x)
\mathcal{L}^{-1/2}_{n/2}(x)dx.
\end{equation}
This integral is of the same type as that one in the right-hand side
of equation~\eqref{laguerreunitaryint} and can be computed in the same
way.  

\section{\label{se:orth_symp_sym}
Symplectic and Orthogonal Symmetries}

Very few non-perturbative results are available for the moments of the
densities of the eigenvalues for $\beta=1$ and $\beta=4$.  Our goal
here is to develop a novel approach that allows us to compute these
moments for all the ensembles associated with the
weights~\eqref{weights}. 

There are two possible ways of tackling this problem: the first is
through the Selberg integral; the other one is a direct computation of
the moments~\eqref{eq:traces_gen}.  The Selberg integral was very
effective in computing the moments of the transmission eigenvalues for
$\beta=2$~\cite{Nov08}; when $\beta=1$ it does not seem to produce
explicit formulae~\cite{KSS09}. It cannot be applied to $\beta=4$.

Following an approach of Dyson~\cite{Dys70}, Mehta and
Mahoux~\cite{MM91} expressed the densities for $\beta=1$ and $\beta=4$
in terms of skew-orthogonal polynomials.  Since then several articles
have attempted to improve their
formulae~\cite{NW91,TW98,Wid98,AFNvM00,GP02}.  Tracy and
Widom~\cite{TW98} and Widom~\cite{Wid98} succeeded to write such
densities as sums of $\rho_2(x)$ plus correction terms involving
orthogonal polynomials.  Building on the work of Adler and van
Moerbeke~\cite{AvM02}, Adler \textit{et al}~\cite{AFNvM00} obtained
integral representations of the correction terms.

Equation~\eqref{eq:traces_gen} presents one major challenge: for
finite $n$ it is a complicated sum involving all the orthogonal
polynomials up to $n-1$. Further integration would lead to cumbersome
formulae whose asymptotics cannot be easily extracted. Our method
relies on using the coefficients (\ref{jacobiconnection}) and
(\ref{laguerreconnection}) to expand the orthogonal polynomials in a
convenient basis, within which they are orthogonal with respect to the
perturbed weight $x^{k}w_{2}(x)$. As when $\beta=2$, this allows us to
obtain positive moments involving sums that run to the order of the
moments and not to the dimension of the ensemble. Another interesting
feature of our results is that we are able to express the moments at
$\beta=1$ in terms of the moments at $\beta=4$ plus a fairly simple
correction term. Like for unitary ensembles, our formulae are
sum of ratios of Gamma functions which may be studied
in the limit $n \to \infty$.

Since our approach is based on the results in by Adler \emph{et
  al}~\cite{AFNvM00}, we will discuss their formalism in detail. For
$\beta=1$ and $\beta=4$ a special role is played by the
skew-orthogonal polynomials. Recall that an inner product $\langle A,
B\rangle$ is referred to as \textit{skew} if $\langle A, B\rangle$ =
$-\langle B, A\rangle$. A sequence of monic polynomials
$\{q_{j}(x)\}_{j=0}^{\infty}$ are called skew-orthogonal with respect
to $\langle A, B \rangle$ if
\begin{subequations}
\label{skewdef}
\begin{align}
 \langle q_{2m}, q_{2n+1} \rangle &= - \langle q_{2n+1}, q_{2m}
 \rangle 
  = r_{m}\delta_{m,n} \\
 \langle q_{2m}, q_{2n} \rangle &= \langle q_{2m+1}, q_{2n+1}\rangle =0.
\end{align}
\end{subequations}


Let us introduce the potential $V(x)$ by defining
\begin{equation}
 w_{2}(x) = e^{-2V(x)},
\end{equation}
where $w_{2}(x)$ is the weight function of the associated unitary
ensemble. We also assume that
\begin{equation}
2V'(x) = \frac{g(x)}{f(x)}
\end{equation}
is a rational function of $x$ and take $f(x)$ to be a monic
polynomial.  Now define the \textit{modified potentials:}
\begin{subequations}
\begin{align}
 \label{beta1perturb}
 V_{1}(x) &= V(x)+\frac{1}{2}\log{f(x)}, \quad \beta=1, \\
 \label{beta4perturb}
 V_{4}(x) &= V(x)-\frac{1}{2}\log{f(x)}, \quad \beta=4.
\end{align}
\end{subequations}

Let us introduce the inner products
\begin{subequations}
\begin{align}
\label{sinnerproduct}
\langle A, B \rangle_{4} &=
\frac{1}{2}\int_{I}e^{-2V_{4}(x)}\left(A(x)B'(x)-B(x)A'(x)\right)dx
\\ 
\intertext{and}
\label{oinnerproduct}
\langle A, B \rangle_{1}& = \frac{1}{2}\int_{I}
\int_{I}e^{-V_{1}(x)-V_1(y)}\mathrm{sgn}(y-x)A(x)B(y)dxdy.
\end{align}
\end{subequations}
Associated to these inner products are two systems of monic
skew-orthogonal polynomials
$\{\tilde{q}^{(4)}_{j}(x)\}_{j=0}^{\infty}$ and
$\{\tilde{q}^{(1)}_{j}(x)\}_{j=0}^{\infty}$.  We shall denote their
skew-norms as defined in~\eqref{skewdef} by $\tilde{r}_{j}^{(4)}$ and
$\tilde{r}_{j}^{(1)}$ respectively. The tilde notation indicates that
the weight has been perturbed by the transformations
(\ref{beta1perturb}) and (\ref{beta4perturb}). 

Because the skew-orthogonality relations (\ref{skewdef}) are invariant
under the transformation $\tilde{q}_{2m+1}\to
\tilde{q}_{2m+1}+\alpha_{2m}\tilde{q}_{2m}$ for any $\alpha_{2n} \in
\mathbb{C}$, a system of skew-orthogonal polynomials is not uniquely
defined. However, they can be expressed in terms of the monic
polynomials orthogonal with respect to $w_2(x)$:
\begin{subequations}
\begin{align}
 p_{2j+1}(x)& =\tilde{q}^{(4)}_{2j+1}(x), &
 p_{2j}(x) & = \tilde{q}^{(4)}_{2j}(x)
 -\frac{c_{2j-1}}{c_{2j-2}}\tilde{q}^{(4)}_{2j-2}(x), \label{beta4skews}\\
 \tilde{q}^{(1)}_{2j}(x) & = p_{2j}(x), &
 \tilde{q}^{(1)}_{2j+1}(x) &=
 p_{2j+1}(x)-\frac{\gamma_{2j-1}}{\gamma_{2j}}p_{2j-1}(x).
\label{beta1skews}
\end{align}
\end{subequations}
For the classical orthogonal polynomials the constants in these
equations are given by
\begin{equation}
\label{ees2}
c_{n} = h_{n+1}h_{n}\gamma_{n},
\end{equation}
where
\begin{equation*}
 h_{n}\gamma_{n} =
\begin{cases}1, & \text{Hermite,}\\
\frac{1}{2},& \text{Laguerre,} \\
\frac{1}{2}(2n+a+b+2), & \text{Jacobi}.\end{cases}
\end{equation*}
We point out that the numbers $D^{\mathcal{E}}_{n}$ appearing in the
differential identities (\ref{differentialidentities}) may be
expressed in terms of $\gamma_{n}$ via $D^{\mathcal{E}}_{n} =
2\gamma_{n-1}$ for all the three ensembles. For each ensemble, we
denote the mean eigenvalue density by $\tilde{\rho}_{\beta}(x)$, where
the tilde indicates the ensemble average (\ref{eq:traces_gen}) defined
by the weight $e^{-V_{1}(x)}$ for $\beta=1$ or $e^{-2V_{4}(x)}$ for
$\beta=4$.
\begin{remark}
\label{discrepancyremark}
Before we proceed it is worth noting that the weights $e^{-V_{1}(x)}$
turn out to be exactly equal to the weights $w_{1}(x)$ in equation
\eqref{weights}. The eigenvalue densities $\tilde
\rho_{4}(x)$, however, correspond to the weights $e^{-2V_{4}(x)}$,
which are
\begin{align}
\label{weightsdiscrepancy}
e^{-2V_{4}(x)} = 
\begin{cases}
e^{-x^{2}}, & \text{{\rm Hermite},}\\
x^{b+1}e^{-x},& \text{{\rm Laguerre},}\\
x^{b+1}(1-x)^{a+1} & \text{{\rm Jacobi}.}
\end{cases}
\end{align}
These are not quite the same as the weights in \eqref{weights} for
$\beta=4$. We must make the substitution $(a,b) \to (2a,2b)$ for the
Jacobi ensemble and $b \to 2b$ for the Laguerre ensemble. In addition
there is a missing factor of $2$ in the exponentials which we take
into account by multiplying our final results for the moments by the
appropriate power of $2$. This discrepancy arises because the
symplectic ensembles are sets of self-dual $n \times n$ quaternion
matrices.  Their representation in terms of complex matrices leads to
Kramer's degeneracy, which is responsible for the normalisations in
\eqref{weightsdiscrepancy}.  Without loss of generality we shall still
use the notation $w_4(x)$.
\end{remark}

Let us introduce the $\epsilon$-transform of a suitable function
$f(t)$ by
\begin{equation}
\epsilon[f(t)](x) = \frac{1}{2}\int_{I}\mathrm{sgn}(x-t)f(t)dt.
\end{equation}
In the following, $I=(m_{1},m_{2})$ will denote the interval of
orthogonality. We have the following formulae~\cite{AFNvM00}:
\begin{subequations}
\begin{align}
\label{symplecticdens}
\tilde{\rho}_{4}(x) & = \frac{1}{2}\rho_{2}(x)_{n \to
 2n}-\frac{1}{2}\gamma_{2n-1}e^{-V_{1}(x)}p_{2n}(x)
 \int_{x}^{m_{2}}e^{-V_{1}(t)}p_{2n-1}(t)dt\\
\intertext{and (for $n$ even)}
\label{orthogonaldens}
\tilde{\rho}_{1}(x) & = \rho_{2}(x)_{n \to n-1} +
\gamma_{n-2}e^{-V_{1}(x)}p_{n-1}(x)\epsilon[p_{n-2}
(t)e^{-V_{1}(t)}](x).
\end{align}
\end{subequations}

For convenience we shall alter the representations
(\ref{symplecticdens}) and (\ref{orthogonaldens}) into a form which is
more suitable for the evaluation of the
integrals~\eqref{eq:traces_gen}. The following proposition allows us
to expand the integrals in (\ref{symplecticdens}) and
(\ref{orthogonaldens}) in terms of monic orthogonal polynomials.
\begin{proposition}
 Let $\{p_j(x)\}_{j=0}^\infty$ be the system of monic
 polynomials orthogonal with respect to $w_2(x)$. We have the
 following identities:
\begin{align}
\label{orthogonalid}
\epsilon \left[e^{-V_{1}(t)}p_{2n+2}(t)\right](x)& =-e^{-V_{4}(x)}
\sum_{j=0}^{n}e_{j,n}^{(1)}p_{2j+1}(x)+\eta_{n}^{(1)}\epsilon 
\left[e^{-V_{1}(t)}\right](x) \\
\intertext{and} 
\label{symplecticid}
\int_{x}^{m_{2}}e^{-V_{1}(t)}p_{2n+1}(t)dt &= 
-e^{-V_{4}(x)}\sum_{j=0}^{n}e_{j,n}^{(4)}p_{2j}(x),
\end{align}
where
\begin{equation}
\label{ees}
e_{j,n}^{(4)} =
\frac{h_{2n+1}}{c_{2n}}\prod_{i=j}^{n-1}\frac{c_{2i+1}}{c_{2i}},
\quad  e_{j,n}^{(1)} = \frac{h_{2n+2}}{c_{2j+1}}
\prod_{i=j+1}^{n}\frac{c_{2i}}{c_{2i+1}}, 
\quad \eta^{(1)}_{n} = \prod_{j=0}^{n}\frac{c_{2j}h_{2j+2}}{c_{2j+1}h_{2j}}.
\end{equation}
\end{proposition}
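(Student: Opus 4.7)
The plan is to reduce both identities to algebraic statements by differentiating with respect to $x$, and then to identify the coefficients via the skew inner-product structure, using the relation $V_1 - V_4 = \log f$ together with $2V' = g/f$.

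First, I would establish the basic derivative identity
\[
\frac{d}{dx}\!\bigl[e^{-V_4(x)} B(x)\bigr] = e^{-V_1(x)} L[B](x), \qquad L[B] := f(x) B'(x) - \tfrac12\bigl(g(x)-f'(x)\bigr) B(x),
\]
obtained immediately from $V_4' = V' - f'/(2f)$ and $fV' = g/2$. The operator $L$ raises degrees by exactly one and has nonzero leading coefficient on each classical weight. Because $e^{-V_4(m_2)} = 0$ for the Hermite, Laguerre, and Jacobi weights, the primitive in \eqref{symplecticid} vanishes at the upper endpoint, so \eqref{symplecticid} is equivalent to the algebraic identity
\[
p_{2n+1}(x) = -\sum_{j=0}^{n} e^{(4)}_{j,n} L[p_{2j}](x). \qquad(*)
\]
Similarly, using $\tfrac{d}{dx}\epsilon[f(t)](x) = f(x)$, the identity \eqref{orthogonalid} reduces to
\[
p_{2n+2}(x) = -\sum_{j=0}^{n} e^{(1)}_{j,n} L[p_{2j+1}](x) + \eta_n^{(1)}. \qquad(**)
\]

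The key tool for extracting coefficients is the identity
\[
\langle A, B\rangle_4 = \int_I w_2(x)\, A(x)\, L[B](x)\, dx,
\]
which follows by integration by parts in \eqref{sinnerproduct}, using $e^{-V_1 - V_4} = e^{-2V} = w_2$. Pairing $(*)$ against $w_2 p_{2k}$ for $k \le n$ and using orthogonality gives the triangular linear system $0 = -\sum_j e^{(4)}_{j,n} \langle p_{2k}, p_{2j}\rangle_4$, which — together with the normalizing condition at $k = n+1$ (or equivalently, leading-coefficient matching of $(*)$) — uniquely determines the $e^{(4)}_{j,n}$. The explicit telescoping product $e^{(4)}_{j,n} = (h_{2n+1}/c_{2n})\prod_{i=j}^{n-1}(c_{2i+1}/c_{2i})$ is then verified by induction on $n$: the top coefficient is fixed by the leading term of $(*)$, while each lower coefficient picks up a factor $c_{2i+1}/c_{2i}$ because of the three-term recurrence $p_{j+1} = (\alpha_j + x\beta_j)p_j - \gamma_j p_{j-1}$ combined with $c_n = h_{n+1} h_n \gamma_n$. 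Equivalently, one can recognize $(*)$ as the statement that the linear combination on the right is precisely the unique polynomial of degree $2n+1$ satisfying the skew-orthogonality relations that identify $p_{2n+1} = \tilde q^{(4)}_{2n+1}$ via \eqref{beta4skews}.

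For the orthogonal identity $(**)$, the same strategy applies after replacing $\langle \cdot, \cdot\rangle_4$ by $\langle \cdot, \cdot\rangle_1$ from \eqref{oinnerproduct}. The novelty is the constant $\eta_n^{(1)}$: the family $\{L[p_{2j+1}]\}_{j=0}^n$ spans a codimension-one subspace of the polynomials of matching degree and parity (it misses one direction, roughly the constant function), and $\eta_n^{(1)}$ supplies the missing degree of freedom. It is pinned down by pairing $(**)$ with $e^{-V_1(x)}\,dx$, using the conversion formulas \eqref{beta1skews} to rewrite the non-vanishing moments $\int e^{-V_1} p_{2j}$ in terms of the $\tilde q^{(1)}_{2j}$ skew-norms.

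The main obstacle is the explicit identification of $\eta_n^{(1)} = \prod_{j=0}^{n} c_{2j} h_{2j+2}/(c_{2j+1} h_{2j})$: evaluating the required ratio requires a careful inductive cancellation among the moments $\int_I e^{-V_1} p_{2k}\,dx$ and $\int_I e^{-V_1} L[p_{2j+1}]\,dx$, none of which vanish a priori. This asymmetry — present in $(**)$ but absent in $(*)$ — reflects a parity mismatch: even-degree monic orthogonal polynomials have nontrivial mean against $e^{-V_1}$, whereas the telescoping in the symplectic case is cleaner because the boundary term at $m_2$ eliminates an analogous contribution.
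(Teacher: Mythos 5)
Your route is genuinely different from the paper's. The paper does not differentiate the target identities; it starts from the differential identities \eqref{beta4differentialidentity}--\eqref{beta1differentialidentity} of Adler \emph{et al.} (quoted from the literature), integrates them from $x$ to $m_2$ (and also from $m_1$ to $x$, subtracting to produce the $\epsilon$-transform with no floating constant), and then obtains the products $e^{(4)}_{j,n}$, $e^{(1)}_{j,n}$, $\eta^{(1)}_n$ by iterating the conversion formulae \eqref{beta4skews}--\eqref{beta1skews} between skew-orthogonal and orthogonal polynomials. Your plan instead differentiates the claimed identities down to polynomial identities and tries to solve for the coefficients by a triangularity argument in the skew inner product. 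That is viable in principle -- indeed your identity $\langle A,B\rangle_4=\int_I w_2\,A\,L[B]\,dx$ is essentially how \eqref{beta4differentialidentity} is proved in the first place -- but as written it re-derives the inputs the paper simply cites, and it leaves several genuine gaps.

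Concretely: (i) For \eqref{orthogonalid}, matching derivatives only determines the two sides up to an additive constant, since the $\epsilon$-transform does not vanish at the endpoints. Closing this requires the extra moment identity $\int_I e^{-V_1(t)}p_{2n+2}(t)\,dt=\eta^{(1)}_n\int_I e^{-V_1(t)}\,dt$ (obtained in the paper's framework by integrating \eqref{beta1differentialidentity} over all of $I$ and telescoping); you never address it, so your reduction of \eqref{orthogonalid} to $(**)$ is incomplete. (ii) The explicit product formula for $\eta^{(1)}_n$ is exactly the step you concede is "the main obstacle", i.e.\ it is not proved. (iii) You attribute the ratios $c_{2i+1}/c_{2i}$ in $e^{(4)}_{j,n}$ to the three-term recurrence \eqref{3termrec}; they do not come from there. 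They come from inverting \eqref{beta4skews}, i.e.\ $\tilde q^{(4)}_{2n}=\sum_{j}\bigl(\prod_{i=j}^{n-1}c_{2i+1}/c_{2i}\bigr)p_{2j}$, which is a statement about skew-orthogonal polynomials that your argument would have to establish or cite separately. (iv) Minor but worth fixing: with your own definition of $L$, differentiating \eqref{symplecticid} yields $p_{2n+1}=\sum_j e^{(4)}_{j,n}L[p_{2j}]$ without the extra minus sign you wrote in $(*)$; you should check the sign conventions for $L$ against the leading coefficients before building the induction on top of it.
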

\begin{proof}
We begin from the  differential identities~\cite{AFNvM00} 
\begin{subequations}
\begin{align}
\label{beta4differentialidentity}
\frac{d}{dx}\left(e^{-V_{4}(x)}\tilde{q}^{(4)}_{2n}(x)\right) & =
\frac{c_{2n}}{h_{2n+1}}e^{-V_{1}(x)}p_{2n+1}(x), \\
\label{beta1differentialidentity}
\frac{d}{dx}\left(e^{-V_{4}(x)}\tilde{q}^{(4)}_{2n+1}(x)\right) & =
e^{-V_{1}(x)}\left(\frac{c_{2n}}{h_{2n}}
 p_{2n}(x)-\frac{c_{2n+1}}{h_{2n+2}}p_{2n+2}(x)\right).
\end{align}
\end{subequations}
We also need $e^{-V_{4}(m_{1})} = e^{-V_{4}(m_{2})}=0$, which can
be easily checked from (\ref{weightsdiscrepancy}). 

We first derive~\eqref{symplecticid}. Integrating equation
(\ref{beta4differentialidentity}) between $x$ and $m_{2}$ gives
\begin{equation}
\label{symplecticid2}
\int_{x}^{m_{2}}e^{-V_{1}(t)}p_{2n+1}(t)dt = 
-\frac{h_{2n+1}}{c_{2n}}e^{-V_{4}(x)}q^{(4)}_{2n}(x) 
= -e^{-V_{4}(x)}\sum_{j=0}^{n}e_{j,n}^{(4)}p_{2j}(x),
\end{equation}
where the last equality was obtained by iteratively solving equation
(\ref{beta4skews}) for $\tilde{q}^{(4)}_{2n}(x)$. 

In order to derive (\ref{orthogonalid}), we start by integrating
equation (\ref{beta1differentialidentity}) between $x$ and $m_{2}$:
\begin{equation}
\label{between0infty}
\int_{x}^{m_{2}}e^{-V_{1}(t)}p_{2n+2}(t)dt=
\frac{c_{2n}h_{2n+2}}{c_{2n+1}h_{2n}}\int_{x}^{m_{2}}
e^{-V_{1}(t)}p_{2n}(t)dt
-\frac{h_{2n+2}}{c_{2n+1}}e^{-V_{4}(x)}q^{(4)}_{2n+1}(x).
\end{equation}
Integrating (\ref{beta1differentialidentity}) between $m_{1}$ and $x$
and subtracting the result from (\ref{between0infty}) gives an
equation for the $\epsilon$-transform
\begin{equation}
 \epsilon\left[e^{-V_{1}(t)}p_{2n+2}(t)\right] 
= \frac{c_{2n}h_{2n+2}}{c_{2n+1}h_{2n}}
\epsilon\left[e^{-V_{1}(t)}p_{2n}(t)\right]
-\frac{h_{2n+2}}{c_{2n+1}}e^{-V_{4}(x)}p_{2n+1}(x),
\end{equation}
where we used that $\tilde q^{(4)}_{2j+1} =
p_{2j+1}(x)$. Iterating this equation $n$ times leads to
(\ref{orthogonalid}).
\end{proof}
\begin{remark}
 The coefficients $e^{(4)}_{j,n}$, $e^{(1)}_{j,n}$ and
 $\eta^{(1)}_{n}$ are tabulated in appendix~A for each ensemble.
\end{remark}
\begin{corollary}
\label{eigenvaluedensitymassaged}
We have the following representations for the eigenvalue densities:
\begin{subequations}
\label{eq:densities_1_4}
\begin{align}
\label{beta4densityorthogs}
\tilde{\rho}_{4}(x) &= \frac{1}{2}\rho_{2}(x)_{n \to 2n}
-\frac{1}{2}\gamma_{2n-1}e^{-2V(x)}\sum_{j=0}^{n-1}e_{j,n-1}^{(4)}
p_{2j}(x)p_{2n}(x), \\
\tilde{\rho}_{1}(x) &= \rho_{2}(x)_{n \to n-1}-\gamma_{n-2}e^{-2V(x)}\sum_{j=0}^{n/2-2}e^{(1)}_{j,n/2-2}p_{2j+1}(x)p_{n-1}(x) \label{beta1densityorthogs}\\
&\quad +\gamma_{n-2}e^{-V_{1}(x)}p_{n-1}(x)
\eta^{(1)}_{n/2-2}\epsilon\left[e^{-V_{1}(t)}\right](x).
\notag
\end{align}
\end{subequations}
\end{corollary}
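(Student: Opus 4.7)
The plan is to derive both density representations by direct substitution of the polynomial expansion identities from the preceding proposition into the formulae \eqref{symplecticdens} and \eqref{orthogonaldens}. No new analytic input is required; the statement is essentially an algebraic rewriting that replaces the integral expressions by finite sums of orthogonal polynomials, a form convenient for computing the moment integrals \eqref{eq:traces_gen} by orthogonality.

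For $\tilde{\rho}_4$, I would apply \eqref{symplecticid} with the index shifted $n \mapsto n-1$, which turns $p_{2n+1}(t)$ into $p_{2n-1}(t)$ and yields an expansion of the integral in \eqref{symplecticdens} as $-e^{-V_{4}(x)}\sum_{j=0}^{n-1} e^{(4)}_{j,n-1} p_{2j}(x)$. Substituting this into \eqref{symplecticdens} and combining the resulting factor $e^{-V_{1}(x)}e^{-V_{4}(x)}$ into $e^{-2V(x)}$ — an immediate consequence of the definitions \eqref{beta1perturb}–\eqref{beta4perturb}, which imply $V_{1}(x)+V_{4}(x)=2V(x)$ — delivers \eqref{beta4densityorthogs} after tracking the sign produced by the integration from $x$ to $m_{2}$.

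For $\tilde{\rho}_1$, the calculation is parallel. I would apply \eqref{orthogonalid} with $n \mapsto n/2-2$ so that $p_{2n+2}$ becomes $p_{n-2}$, matching the $\epsilon$-transform appearing in \eqref{orthogonaldens}. The right-hand side splits naturally into two pieces: the polynomial part $-e^{-V_{4}(x)}\sum_{j=0}^{n/2-2}e^{(1)}_{j,n/2-2} p_{2j+1}(x)$ and the residual $\eta^{(1)}_{n/2-2}\,\epsilon\!\left[e^{-V_{1}(t)}\right](x)$. Multiplying the identity by $\gamma_{n-2}\,e^{-V_{1}(x)} p_{n-1}(x)$ and again collapsing $e^{-V_{1}(x)}e^{-V_{4}(x)}$ to $e^{-2V(x)}$ turns the polynomial part into the middle term of \eqref{beta1densityorthogs}. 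The residual cannot be simplified further because $\epsilon[e^{-V_{1}(t)}](x)$ is not itself a polynomial, and so it is carried through unchanged, producing the last term of \eqref{beta1densityorthogs}.

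The proof therefore involves no substantive obstacle; the only point demanding care is the bookkeeping of indices and signs. One must verify that the shifted arguments $n-1$ and $n/2-2$ are the correct ones to pair with $p_{2n}$ and $p_{n-1}$ respectively, that the assumption $n$ even is invoked at precisely the step where the upper summation limit $n/2-2$ is integer-valued, and that the minus sign produced by the identities \eqref{symplecticid} and \eqref{orthogonalid} combines correctly with the signs already present in \eqref{symplecticdens} and \eqref{orthogonaldens}. Once these checks are made, the corollary follows by inspection.
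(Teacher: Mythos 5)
Your proposal is correct and follows exactly the paper's own proof: substitute the expansion identities \eqref{symplecticid} and \eqref{orthogonalid} (with the index shifts you describe) into \eqref{symplecticdens} and \eqref{orthogonaldens}, and collapse $e^{-V_{1}(x)}e^{-V_{4}(x)}$ to $e^{-2V(x)}$. The extra attention you give to the shifted indices and to the parity requirement on $n$ is sound bookkeeping that the paper leaves implicit.
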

\begin{proof}
 Substituting the integration identities (\ref{symplecticid}) and
 (\ref{orthogonalid}) into (\ref{symplecticdens}) and
 (\ref{orthogonaldens}) respectively, and using
 $V_{1}(t)+V_{4}(t)=2V(t)$, gives (\ref{beta4densityorthogs}) and
 (\ref{beta1densityorthogs}).
\end{proof}

We will see in \S\ref{se:symp_ens} and \S\ref{sec:orth_ens} that
formulae~\eqref{eq:densities_1_4} are particularly suited to our
purposes. A key feature of these representations is that they are
expressed solely in terms of the weight function $e^{-2V(x)}$ and the
corresponding monic orthogonal polynomials. For the orthogonal
ensembles, there is an additional term involving the
$\epsilon$-transform of the weight $e^{-V_{1}(t)}$, which is related
to the error function, incomplete gamma function or incomplete beta
function depending on the ensemble in question. We shall compute the
moments for $\beta=1$ and $\beta=4$ by combining the
representations~\eqref{eq:densities_1_4} with a variant of the
technique used for unitary ensembles.

\section{\label{se:symp_ens}Symplectic Ensembles}

The purpose of this section is to compute the integrals
\begin{equation}
\label{beta4integrals}
\tilde{M}^{(4)}_{\mathcal{E}}(k,n) = \int_{I}x^{k}\tilde{\rho}_{4}(x)dx
\end{equation}
for each ensemble $\mathcal{E}$ defined by the weights
(\ref{weights}). 

Inserting the representation (\ref{beta4densityorthogs}) into equation
(\ref{beta4integrals}) leads to two integrals: the first one contains
the mean eigenvalue density of a unitary ensemble, which was computed
in \S\ref{sec:unitary_ens}; the second one involves orthogonal
polynomials. More explicitly it is given by
\begin{equation}
\label{tildesymplecticintegrals}
\tilde{S}_{\mathcal{E}}(k,n) = \gamma_{2n-1}
\frac{1}{2}\sum_{j=0}^{n-1}e^{(4)}_{j,n-1}\int_{I}x^{k}e^{-2V(x)}
p_{2j}(x)p_{2n}(x)dx.
\end{equation}
We know how to evaluate these integrals as they are exactly of the
type that appeared in equations (\ref{jacobintegral}),
(\ref{laguerreunitaryint}) and \eqref{hermitedmoments}.  We write the
polynomials $p_{n}(x)$ in a basis which is orthogonal with
respect to the perturbed weight $x^{k}e^{-2V(x)}$; then, we can use
the orthogonality of the polynomials to write the integral in
(\ref{tildesymplecticintegrals}) as a single sum involving the
connection coefficients (\ref{jacobiconnection}) and
(\ref{laguerreconnection}). Eventually, the moments for $\beta=4$
become
\begin{equation}
 \label{tildesymplecticmoments}
\tilde{M}^{(4)}_{\mathcal{E}}(k,n) = M^{(2)}_{\mathcal{E}}(k,2n)
-\tilde{S}_{\mathcal{E}}(k,n).
\end{equation}

As in \S\ref{se:orth_symp_sym} the tilde notation indicates
quantities that differ from the integrals~\eqref{integrals} by a
factor discussed in remark~\ref{discrepancyremark}.

\subsection{Jacobi Symplectic Ensemble}
We now compute $S_{\J}(k,n)$ for complex $k$.
Equation~\eqref{eq:mom_b4_del} is then obtained by restricting $k$ to
be a positive integer  and setting $a=\delta/4 - 1/2$ and $b=m-n$. 

\begin{proposition}
\label{symplecticjacobiprop}
We have
\begin{equation}
\label{jacsprop}
S_{\J}(k,n) = \sum_{j=1}^{n}\sum_{i=0}^{2n-2j}
\binom{k}{i+2j}\binom{k}{i}S^{a,b}_{i,j}(k,n),
\end{equation}
where the coefficient $S^{a,b}_{i,j}(k,n)$ is given by
\begin{equation}
\label{generaljacobicoefficient}
\begin{split}
 S^{a,b}_{i,j}(k,n) & = \frac{2^{4j-3}(2a+2n-i-2j+1)_{(i)}
   (2b+2n)_{(k-i-2j+1)}(2a+2b+2n)_{(k-i-2j+1)}}%
 {(2n-2j+1)_{(-i)}(n+1)_{(-j)}(a+n+1)_{(-j)}(b+n)_{(1-j)}(a+b+n)_{(1-j)}} \\
 &\quad \times
 \frac{(2a+2b+4n-4j+1)(2a+2b+4n-2i-4j+k+1)}%
{(2a+2b+4n-i-2j+1)_{(1+k)}(2a+2b+4n-i-4j+1)_{(1+k)}}.
\end{split}
\end{equation}
\end{proposition}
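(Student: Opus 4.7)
My plan is to compute $\tilde{S}_{\J}(k,n)$ directly from its integral representation (\ref{tildesymplecticintegrals}) using the connection-coefficient technique already employed for the JUE in Section~\ref{sse:JUE}, and then to convert to $S_{\J}(k,n)$ via the weight-discrepancy rescaling of Remark~\ref{discrepancyremark}. Once $\tilde{S}_{\J}(k,n)$ is known in closed form, the identification with the target expression (\ref{jacsprop}) and the evaluation of the corresponding moment via (\ref{tildesymplecticmoments}) are immediate.

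For each fixed $j\in\{0,\ldots,n-1\}$, I would expand both of the monic Jacobi polynomials appearing in
\begin{equation*}
I_j := \int_0^1 x^{k+b}(1-x)^a \mathcal{P}^{a,b}_{2j}(x)\mathcal{P}^{a,b}_{2n}(x)\,dx
\end{equation*}
in the basis $\{\mathcal{P}^{a,b+k}_l\}_{l\ge 0}$ of monic Jacobi polynomials orthogonal with respect to the perturbed weight $x^{b+k}(1-x)^{a}$, using the connection formula (\ref{jacobiconnection}). Orthogonality of the perturbed basis then collapses the double expansion to a single diagonal sum
\begin{equation*}
I_j = \sum_{l=0}^{2j}\mathcal{C}^{k,2j}_l\,\mathcal{C}^{k,2n}_l\,h^{a,b+k}_l,
\end{equation*}
each summand of which is an explicit ratio of Gamma functions. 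Inserting this into (\ref{tildesymplecticintegrals}) produces a double sum over the range $0\le l\le 2j\le 2n-2$, which I would swap into $\sum_{l=0}^{2n-2}\sum_{j=\lceil l/2\rceil}^{n-1}$ and evaluate in closed form in $j$. Since $e^{(4)}_{j,n-1}$ (tabulated in Appendix~A) and $\mathcal{C}^{k,2j}_l$ both depend on $j$ through balanced products of Pochhammer symbols, I expect the $j$-sum to reduce to a single Pochhammer ratio via a Chu--Vandermonde or Saalsch\"utz-type hypergeometric identity. Finally, the rescaling of Remark~\ref{discrepancyremark}, together with the re-indexing $i=l$ and an identification of $2j$ in (\ref{jacsprop}) with the degree gap between the two polynomials remaining after the $j$-summation, should reproduce (\ref{generaljacobicoefficient}).

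The step I expect to be the main obstacle is precisely the appearance of the mixed binomial factor $\binom{k}{i}\binom{k}{i+2j}$ in the target: the connection-coefficient expansion in the first step produces only the diagonal factor $\binom{k}{l}^2$, so no bare change of summation variables alone can shift one upper index by $2j$. The shift must arise genuinely from carrying out the $j$-summation, which is where the specific Pochhammer structure of $e^{(4)}_{j,n-1}$ for the Jacobi case must conspire with that of $\mathcal{C}^{k,2j}_l$ to generate it. Bookkeeping the exact numerators, denominators and Pochhammer ranges in (\ref{generaljacobicoefficient}) is the only non-routine calculation in the proof; everything else follows the template already established for the unitary ensembles.
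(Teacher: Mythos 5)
Your overall strategy---expand both monic Jacobi polynomials in the basis orthogonal with respect to the perturbed weight $x^{b+k}(1-x)^a$ via \eqref{jacobiconnection}, apply orthogonality, substitute the tabulated constants, and rescale $(a,b)\to(2a,2b)$ as in remark~\ref{discrepancyremark}---is exactly the paper's route. However, your execution of the orthogonality step contains a concrete error that then leads you to misidentify the ``main obstacle.'' In the convention actually used throughout the paper, $\mathcal{C}^{k,n}_l$ multiplies the perturbed polynomial of degree $n-l$ (this is forced by $\binom{k}{l}=0$ for $l>k$: only the top $k+1$ degrees can appear). Since the two polynomials in $I_j$ have \emph{different} degrees $2j$ and $2n$, orthogonality matches the degrees of the perturbed polynomials, $2j-l=2n-p$, i.e.\ $p=2n-2j+l$. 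The collapse is therefore \emph{not} to the diagonal sum $\sum_l \mathcal{C}^{k,2j}_l\mathcal{C}^{k,2n}_l h^{a,b+k}_l$ that you wrote, but to
\begin{equation*}
I_j=\sum_{l=0}^{2j}\mathcal{C}^{k,2j}_{l}\,\mathcal{C}^{k,2n}_{2n-2j+l}\,h^{a,b+k}_{2j-l},
\end{equation*}
in which the two upper connection indices already differ by the degree gap $2n-2j$.

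This is precisely the origin of the mixed factor $\binom{k}{i}\binom{k}{i+2j}$ in \eqref{jacsprop}: after the relabelling $j\to n-j$ the summand carries $\mathcal{C}^{k,2n-2j}_{i}\mathcal{C}^{k,2n}_{i+2j}$ term by term, with no summation over $j$ ever performed. Your proposed final step---interchanging the order of summation and evaluating the $j$-sum in closed form by a Chu--Vandermonde or Saalsch\"utz identity---is both unnecessary and doomed: the target \eqref{jacsprop} is itself still a double sum over $j$ and $i$, with $j$ simply the (relabelled) index of the skew-orthogonal expansion coefficient $e^{(4)}_{n-j,n-1}$ and $2j$ the degree gap within that single term. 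Correcting the orthogonality step and then substituting $\mathcal{C}$, $h^{a,b+k}$ and $e^{(4)}$ from appendix~A immediately yields \eqref{generaljacobicoefficient}; no hypergeometric resummation is involved.
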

\begin{proof}
By (\ref{tildesymplecticintegrals}) we have
\begin{equation}
\label{jacsdef}
\tilde{S}_{\J}(k,n)  = \frac{\gamma_{2n-1}}{2}\sum_{j=0}^{n-1}e_{j,n-1}^{(4)}\int_{0}^{1}x^{b+k}(1-x)^{a}\mathcal{P}^{a,b}_{2n}(x)\mathcal{P}^{a,b}_{2j}(x)dx.
\end{equation}
Inserting the connection formula (\ref{jacobiconnection}) into the
integrand leads to
\begin{equation}
\begin{split}
 \tilde{S}_{\J}(k,n)&  = \frac{\gamma_{2n-1}}{2}
\sum_{j=0}^{n-1}e_{j,n-1}^{(4)}\sum_{i=0}^{2j}\mathcal{C}_{i}^{k,2j}\sum_{p=0}^{2n}\mathcal{C}_{p}^{k,2n}\int_{0}^{1}x^{b+k}(1-x)^{a}\mathcal{P}^{a,b+k}_{2n-p}(x)\mathcal{P}^{a,b+k}_{2j-i}(x)dx\\
 &= \frac{\gamma_{2n-1}}{2}\sum_{j=0}^{n-1}e_{j,n-1}^{(4)}\sum_{i=0}^{2j}\mathcal{C}_{i}^{k,2j}\sum_{p=0}^{2n}\mathcal{C}_{p}^{k,2n}h_{2j-i}^{a,b+k}\delta_{2n-p,2j-i}\\
 &=\frac{\gamma_{2n-1}}{2}\sum_{j=1}^{n}\sum_{i=0}^{2n-2j}e_{n-j,n-1}^{(4)}h_{2n-2j-i}^{a,b+k}\mathcal{C}_{i}^{k,2n-2j}\mathcal{C}_{i+2j}^{k,2n}. \label{jacsresult}
\end{split}
\end{equation}
To obtain the above expression we have applied the orthogonality of
the Jacobi polynomials and then rearranged the indices in the
sum. Substituting the coefficients $\mathcal{C}_j^{k,n}$, $h^{a,b+k}_n$ and
$e^{(4)}_{j,k}$ (see appendix A) and replacing $(a,b)\to (2a,2b)$
completes the proof.
\end{proof}
\begin{remark}
 The complexity of the expression \eqref{generaljacobicoefficient} is
 mainly due to formula \eqref{jacobiconnection}, which relates Jacobi
 polynomials of different weights. The Laguerre ensemble is slightly
 simpler, because the associated coefficients
 \eqref{laguerreconnection} and normalizations $h_{j}$ are more
 concise.
\end{remark}
\subsection{Laguerre Symplectic Ensemble}
\begin{proposition}
\label{pro:symp_4}
Suppose that neither $2b+k$ nor $b+n$ are negative
integers. Then, we have
\begin{equation}
\label{lagsprop}
S_{\Lag}(k,n) = \sum_{j=1}^{\lfloor n \rfloor}\sum_{i=0}^{2n-2j}\binom{k}{i}\binom{k}{i+2j}\frac{(2b+2n)_{(k-i-2j+1)}(2n-i-2j+1)_{(i)}}{2^{k-2j+2}(n+1)_{(-j)}(b+n)_{(1-j)}}.
\end{equation}
\end{proposition}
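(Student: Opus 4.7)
The plan is to mimic the derivation of $\tilde{S}_{\J}(k,n)$ in Proposition \ref{symplecticjacobiprop}, replacing the Jacobi polynomials and their connection coefficients with the Laguerre analogues. By formula~\eqref{tildesymplecticintegrals} and the weight $e^{-2V(x)} = x^{b+1}e^{-x}$ from~\eqref{weightsdiscrepancy}, the starting point is
\begin{equation*}
\tilde{S}_{\Lag}(k,n) = \frac{\gamma_{2n-1}}{2}\sum_{j=0}^{n-1}e_{j,n-1}^{(4)}\int_{0}^{\infty} x^{b+1+k}e^{-x}\mathcal{L}^{b+1}_{2n}(x)\mathcal{L}^{b+1}_{2j}(x)\,dx,
\end{equation*}
where the $\mathcal{L}^{b+1}_{j}$'s are the monic Laguerre polynomials orthogonal with respect to the unperturbed weight.

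Next I would apply the Laguerre connection formula~\eqref{laguerreconnection} to both $\mathcal{L}^{b+1}_{2n}$ and $\mathcal{L}^{b+1}_{2j}$, expanding each as a sum of monic polynomials $\mathcal{L}^{b+1+k}_{\cdot}$ orthogonal with respect to the perturbed weight $x^{b+1+k}e^{-x}$. Orthogonality then collapses the two inner sums to a single term via a Kronecker delta, giving
\begin{equation*}
\tilde{S}_{\Lag}(k,n) = \frac{\gamma_{2n-1}}{2}\sum_{j=1}^{n}\sum_{i=0}^{2n-2j} e_{n-j,n-1}^{(4)}\, h_{2n-2j-i}^{b+1+k}\,\mathcal{C}_{i}^{k,2n-2j}\,\mathcal{C}_{i+2j}^{k,2n},
\end{equation*}
after relabeling the inner index. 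This is the exact Laguerre counterpart of~\eqref{jacsresult}.

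From here everything is bookkeeping. I would substitute the explicit forms $\mathcal{C}_{j}^{k,n}=\binom{k}{j}(n+1)_{(-j)}$, the monic Laguerre norms $h_{m}^{b+1+k}=m!\,\Gamma(b+1+k+m+1)$, and the coefficients $e_{j,n-1}^{(4)}$ and $\gamma_{2n-1}$ from Appendix~A. The binomial coefficients $\binom{k}{i}$ and $\binom{k}{i+2j}$ emerge directly from the connection coefficients, while the product $e_{n-j,n-1}^{(4)}h_{2n-2j-i}^{b+1+k}$ telescopes, leaving only the ratios of Gamma functions that produce the Pochhammer symbols $(2b+2n)_{(k-i-2j+1)}$, $(2n-i-2j+1)_{(i)}$, $(n+1)_{(-j)}$ and $(b+n)_{(1-j)}$, once we rescale $b\mapsto 2b$ and absorb the factor $2^{k-2j+2}$ into the denominator to account for the discrepancy of Remark~\ref{discrepancyremark} between $e^{-2V(x)}$ and the LSE weight in~\eqref{weights}.

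The main obstacle is purely algebraic: keeping careful track of the products defining $e_{j,n-1}^{(4)}$ (which contain products of $c_{2i+1}/c_{2i}$ that depend on $b$ through the recurrence coefficients $\gamma_j$) and matching them correctly against the Gamma function identities so that the telescoping is clean. The convergence condition that neither $2b+k$ nor $b+n$ be a negative integer is needed precisely so that the integrals and the Pochhammer manipulations are all well defined. Once the bookkeeping is done, the formula~\eqref{lagsprop} follows directly, with no further analytic input beyond that already used in the Jacobi case.
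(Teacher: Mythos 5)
Your overall strategy is exactly the paper's: start from \eqref{tildesymplecticintegrals}, expand both Laguerre polynomials via the connection formula \eqref{laguerreconnection} in the basis orthogonal with respect to the perturbed weight, collapse the sums by orthogonality, relabel indices, and substitute the explicit coefficients followed by $b\to 2b$ and a factor $2^{-k}$. However, there is a concrete error in your starting integral. The representation \eqref{beta4densityorthogs} expresses the correction term through the weight $e^{-2V(x)}$ of the \emph{associated unitary ensemble} and the monic polynomials orthogonal with respect to it; for the Laguerre case this is $w_2(x)=x^{b}e^{-x}$ and the polynomials $\mathcal{L}^{b}_{j}$. You have instead used $e^{-2V_4(x)}=x^{b+1}e^{-x}$ from \eqref{weightsdiscrepancy} and the polynomials $\mathcal{L}^{b+1}_{j}$, conflating $V$ with $V_4$. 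The correct starting point, as in \eqref{lagsdef}, is
\begin{equation*}
\tilde{S}_{\Lag}(k,n) = \frac{\gamma_{2n-1}}{2}
\sum_{j=0}^{n-1}e_{j,n-1}^{(4)}\int_{0}^{\infty}x^{b+k}e^{-x}
\mathcal{L}^{b}_{2n}(x)\mathcal{L}^{b}_{2j}(x)\,dx .
\end{equation*}
The role of $e^{-2V_4}$ is already accounted for in remark~\ref{discrepancyremark} by the substitution $b\to 2b$ and the overall power of $2$ at the very end; it must not also be inserted into the integrand.

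This is not merely cosmetic: carrying your computation through with $x^{b+1+k}e^{-x}$, the norms $h^{b+1+k}_{2n-2j-i}=\Gamma(2n-2j-i+1)\Gamma(b+k+2n-2j-i+2)$ and the coefficients $e^{(4),b+1}_{n-j,n-1}$, $\gamma_{2n-1}$ all come out with $b$ shifted by one, so after the replacement $b\mapsto 2b$ you would obtain Pochhammer symbols of the form $(2b+2n+1)_{(k-i-2j+1)}$ and shifted arguments in $(b+n)_{(1-j)}$, rather than the $(2b+2n)_{(k-i-2j+1)}$ and $(b+n)_{(1-j)}$ appearing in \eqref{lagsprop}. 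The remainder of your argument (connection coefficients, orthogonality, index rearrangement, and the final normalisation) is sound and matches the paper's proof step for step, so the fix is simply to replace $b+1$ by $b$ in the weight and in the polynomials at the outset.
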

\begin{proof}
 From (\ref{tildesymplecticintegrals}) we have
\begin{equation}
\label{lagsdef}
\tilde{S}_{\Lag}(k,n) = \frac{\gamma_{2n-1}}{2}
\sum_{j=0}^{n-1}e_{j,n-1}^{(4)}\int_{0}^{\infty}x^{b+k}e^{-x}
\mathcal{L}^{b}_{2n}(x)\mathcal{L}^{b}_{2j}(x)dx.
\end{equation}
Proceeding as in the proof of proposition~\ref{symplecticjacobiprop}
we obtain
\begin{equation}
\label{lagsresult}
\begin{split}
\tilde{S}_{\Lag}(k,n) & = \frac{\gamma_{2n-1}}{2}\sum_{j=0}^{n-1}e_{j,n-1}^{(4)}\sum_{i=0}^{2j}\mathcal{C}_{i}^{k,2j}\sum_{p=0}^{2n}\mathcal{C}_{p}^{k,2n}\int_{0}^{\infty}x^{b+k}e^{-x}\mathcal{L}^{b+k}_{2n-p}(x)\mathcal{L}^{b+k}_{2j-i}(x)dx\\
&= \frac{\gamma_{2n-1}}{2}\sum_{j=0}^{n-1}e_{j,n-1}^{(4)}\sum_{i=0}^{2j}\mathcal{C}_{i}^{k,2j}\sum_{p=0}^{2n}\mathcal{C}_{p}^{k,2n}h_{2j-i}^{b+k}\delta_{2n-p,2j-i}\\
&=\frac{\gamma_{2n-1}}{2}\sum_{j=1}^{n}\sum_{i=0}^{2n-2j}e_{n-j,n-1}^{(4)}\mathcal{C}_{i}^{k,2n-2j}\mathcal{C}_{i+2j}^{k,2n}h_{2n-2j-i}^{b+k} 
\end{split}
\end{equation}
By replacing $b \to 2b$ and multiplying both sides of this equation by
$2^{-k}$ gives the statement of the proposition.
\end{proof}

Both the propositions~{\rm \ref{symplecticjacobiprop}} and
{\rm \ref{pro:symp_4}} hold for complex values of $k$, except where
the integrals~\eqref{integrals} diverge. In particular, the moments of
the proper delay times \eqref{eq:mom_time_del} are expressed in terms
of negative moments of the Laguerre ensemble and can be obtained from
\eqref{lagsprop} using the identity
\begin{equation}
\binom{-k}{j}\binom{-k}{i+2j} = \binom{k+j-1}{k-1}\binom{k+i+2j-1}{k-1}
\end{equation}
and setting $b=n+1$ in \eqref{lagsprop}. Thus, we arrive at the
following:
\begin{corollary}
\label{co:lag_jac_int_4}
Let $k$ be a positive integer.  We have
\begin{align}
\label{eq:jacobi_symp_int}
 S_{\J}(k,n) & = \sum_{j=1}^{\min(\lfloor n \rfloor,\lfloor k/2 \rfloor)}
\sum_{i=0}^{\min(2n-2j,k-2j)}\binom{k}{i+2j}\binom{k}{i}S^{a,b}_{i,j}(k,n)\\
 S_{\Lag}(k,n) & = \sum_{j=1}^{\min(\lfloor n \rfloor,\lfloor k/2
   \rfloor)}\sum_{i=0}^{\min(2n - 2j,k-2j)}
 \binom{k}{i+2j}\binom{k}{i}\notag \\
\label{eq:laguerre_symp_int}
& \quad \times 
\frac{(2b+2n)_{(k-i-2j+1)}(2n-i-2j+1)_{(i)}}{2^{k-2j+2}(n+1)_{(-j)}(n+b)_{(1-j)}}.
\end{align}
Furthermore, if $ k< 2n +1$
\begin{equation}
\label{eq:lag_symp_neg_int}
\begin{split}
 S_{\Lag}(-k,n) & = 
\sum_{j=1}^{\lfloor n \rfloor}\sum_{i=0}^{2n-2j}
\binom{k+j-1}{k-1}\binom{k+i+2j-1}{k-1}\\
& \quad \times \frac{(2b+2n)_{(-k-i-2j+1)}(2n-i-2j+1)_{(i)}}%
{2^{-k-2j+2}(n+1)_{(-j)}(n+b)_{(1-j)}}.
\end{split}
\end{equation}
\end{corollary}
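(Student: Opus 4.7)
The plan is to obtain Corollary \ref{co:lag_jac_int_4} as a direct specialization of Propositions \ref{symplecticjacobiprop} and \ref{pro:symp_4}, both of which are valid for complex $k$. The only work required is (i) truncating the double sums when $k$ is a positive integer by exploiting vanishing of binomial coefficients, and (ii) translating the positive-moment Laguerre formula into a negative-moment one via the reflection identity for generalized binomials.

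First I would handle \eqref{eq:jacobi_symp_int} and \eqref{eq:laguerre_symp_int}. Proposition \ref{symplecticjacobiprop} expresses $S_{\J}(k,n)$ as a double sum $\sum_{j=1}^{n}\sum_{i=0}^{2n-2j}\binom{k}{i+2j}\binom{k}{i}S^{a,b}_{i,j}(k,n)$, and Proposition \ref{pro:symp_4} gives an analogous representation for $S_{\Lag}(k,n)$. When $k$ is a positive integer the binomial $\binom{k}{i}$ vanishes whenever $i>k$ and $\binom{k}{i+2j}$ vanishes whenever $i+2j>k$. The second condition is the more restrictive: it forces $i\le k-2j$ and, taking $i\ge 0$, also $2j\le k$, i.e.\ $j\le \lfloor k/2\rfloor$. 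Combined with the original constraints $j\le n$ (or $j\le \lfloor n\rfloor$ for half-integer $n$, cf.\ Remark \ref{re:half_int}) and $i\le 2n-2j$, the ranges collapse to $1\le j\le \min(\lfloor n\rfloor,\lfloor k/2\rfloor)$ and $0\le i\le \min(2n-2j,k-2j)$, which is exactly the truncation appearing in the corollary. No term is dropped, so the identities carry over verbatim.

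Next I would treat the negative-moment formula \eqref{eq:lag_symp_neg_int}. Replacing $k$ by $-k$ in Proposition \ref{pro:symp_4} is permitted provided the convergence hypotheses hold: the integral defining $M^{(2)}_{\Lag}(-k,n)$ (and hence the decomposition \eqref{tildesymplecticmoments}) requires $k<2n+1$ when $b=n+1$, which is the stated assumption. Applying the identity
\[
\binom{-k}{j}=(-1)^{j}\binom{k+j-1}{k-1}
\]
to both binomial factors $\binom{-k}{i}$ and $\binom{-k}{i+2j}$ produces a sign $(-1)^{i}\cdot(-1)^{i+2j}=(-1)^{2i+2j}=1$, so the resulting expression is sign-free and the two binomials become $\binom{k+i-1}{k-1}\binom{k+i+2j-1}{k-1}$ (up to the indexing convention used in the paper). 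Substituting this, together with the evident sign changes in the Pochhammer factors $(2b+2n)_{(k-i-2j+1)}\to (2b+2n)_{(-k-i-2j+1)}$ and $2^{k-2j+2}\to 2^{-k-2j+2}$, reproduces \eqref{eq:lag_symp_neg_int}. Note that for negative moments the outer sum is no longer truncated by $\lfloor k/2\rfloor$: the binomial $\binom{k+j-1}{k-1}$ is a polynomial of positive degree in $j$ rather than a vanishing one, so the full range $1\le j\le \lfloor n\rfloor$ and $0\le i\le 2n-2j$ from Proposition \ref{pro:symp_4} survives intact.

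The argument is essentially bookkeeping; there is no genuine analytical obstacle since both propositions have already been proved. The only place one has to be careful is tracking which factors of $(-1)$ from the two applications of the negative-binomial identity cancel against each other, and verifying that the implicit convergence condition $k<2n+1$ is indeed inherited from the existence of the negative Laguerre moments used in \eqref{tildesymplecticmoments}.
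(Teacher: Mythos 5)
Your proposal is correct and follows the paper's own route exactly: the corollary is obtained by specializing Propositions~\ref{symplecticjacobiprop} and~\ref{pro:symp_4} to integer $k$, truncating the sums via the vanishing of $\binom{k}{i+2j}$ for $i+2j>k$, and applying $\binom{-k}{j}=(-1)^{j}\binom{k+j-1}{k-1}$ to pass to the negative moments. Your sign bookkeeping in fact yields $\binom{k+i-1}{k-1}\binom{k+i+2j-1}{k-1}$, which indicates that the first binomial in \eqref{eq:lag_symp_neg_int} (and in the displayed product identity immediately preceding the corollary) should read $\binom{k+i-1}{k-1}$ rather than $\binom{k+j-1}{k-1}$ --- a typographical slip in the paper that your blind derivation correctly flags.
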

\begin{remark}
 The combinatorial aspect of the sums in this corollary arises
 directly from the connection coefficients $\mathcal{C}_{i}$ and
 $\mathcal{C}_{i+2j}$ appearing in \eqref{lagsresult} and
 \eqref{jacsresult}, leading to the binomial coefficients. Due to
 these binomial coefficients, the sums (\ref{eq:jacobi_symp_int})
 only go up to $k$ and their complexity does not increase as $n$
 grows. This therefore permits an investigation of the
 asymptotics. The $n \to \infty$ analysis of
 (\ref{eq:lag_symp_neg_int}) leads to certain infinite sums which
 also turn out to be tractable. A similar remark holds for the GSE.
\end{remark}

\subsection{Gaussian Symplectic Ensemble}
\label{se:gse}
In the Gaussian case, only the even moments are different from zero.
For simplicity we only state the results for integer $k$.
\begin{proposition}
We have
\begin{equation}
\label{hersprop}
\begin{split}
 S_{\G}(2k,n) & =\frac{\Gamma(n+1)\Gamma(n)}{2^{k}\sqrt{\pi}\Gamma(2n)4^{1-n}}
 \sum_{j=1}^{\min(n,k)}\sum_{i=0}^{\min(n-j,k-j)}
\binom{k}{i}\binom{k}{i+j}(n-i-j+1)_{(k-1/2)}
\end{split}
\end{equation}
\end{proposition}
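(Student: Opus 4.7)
The proof proceeds along the same template as Propositions~\ref{symplecticjacobiprop} and~\ref{pro:symp_4}. From~\eqref{tildesymplecticintegrals}, the quantity I need to evaluate is
$$\tilde S_{\G}(2k,n) = \frac{\gamma_{2n-1}}{2}\sum_{j=0}^{n-1}e^{(4)}_{j,n-1}\int_{-\infty}^{\infty}x^{2k}e^{-x^{2}}\mathcal{H}_{2n}(x)\mathcal{H}_{2j}(x)\,dx,$$
with $S_{\G}(2k,n)$ to be recovered from $\tilde S_{\G}(2k,n)$ by the weight rescaling of Remark~\ref{discrepancyremark}. Only even moments contribute by parity of the Hermite polynomials, so there is nothing to say for odd~$k$.

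My first move is to convert the Hermite integral into a Laguerre one by~\eqref{hermitelaguerreconnection}: $\mathcal{H}_{2n}(x) = \mathcal{L}^{-1/2}_n(x^{2})$ and $\mathcal{H}_{2j}(x) = \mathcal{L}^{-1/2}_j(x^{2})$, followed by the substitution $u = x^{2}$. This reduces the integral to the form
$$\int_0^{\infty} u^{k-1/2}e^{-u}\mathcal{L}^{-1/2}_n(u)\mathcal{L}^{-1/2}_j(u)\,du,$$
which is of the type already treated in~\eqref{laguerreunitaryint}. I then expand both Laguerre factors via the connection formula~\eqref{laguerreconnection} in the basis $\{\mathcal{L}^{k-1/2}_{\ell}\}_{\ell\ge 0}$, which is orthogonal against the perturbed weight $u^{k-1/2}e^{-u}$. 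Orthogonality~\eqref{orthogonality} collapses the resulting double sum to a single sum in the common degree $\ell = n-p = j-q$, featuring the norm $h^{k-1/2}_{\ell}$ and a product of two Laguerre connection coefficients.

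What remains is to package the constants. Using $\gamma_{2n-1} = 2^{2n-1}/(\sqrt{\pi}\,\Gamma(2n))$ and the observation that for the Hermite weight $c_{2i+1}/c_{2i} = h_{2i+2}/h_{2i+1} = i+1$, the product in~\eqref{ees} telescopes to $e^{(4)}_{j,n-1} = (n-1)!/j!$. I would then reindex by $J = n - j$ and $I = p - J$. Under this change the summation ranges become $J\in\{1,\dotsc,\min(n,k)\}$ and $I\in\{0,\dotsc,\min(n-J,k-J)\}$, the two connection coefficients consolidate into $\binom{k}{I}\binom{k}{I+J}$, and the surviving Gamma ratios merge into the Pochhammer $(n-I-J+1)_{(k-1/2)}$. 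A final factor of $2^{-k}$ from the GSE weight rescaling (cf.\ the analogous step at the end of the proof of Proposition~\ref{pro:symp_4}) produces the $2^{k}$ in the denominator of the prefactor $\Gamma(n+1)\Gamma(n)/(2^{k}\sqrt{\pi}\,\Gamma(2n)4^{1-n})$.

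The main obstacle is bookkeeping rather than strategy: the telescoping identity for $e^{(4)}_{j,n-1}$, the Laguerre normalizations $h^{k-1/2}_{\ell}$, the connection coefficients, and the reindexing each contribute Gamma-function ratios that must be consolidated cleanly. No new idea is required beyond the Hermite-to-Laguerre trick already used for the GUE in~\S\ref{sec:unitary_ens}; the calculation piggybacks almost entirely on the Laguerre Symplectic computation of Proposition~\ref{pro:symp_4}.
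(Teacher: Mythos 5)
Your proposal follows the paper's proof essentially step for step: the same starting point \eqref{tildesymplecticintegrals}, the same Hermite-to-Laguerre conversion via \eqref{hermitelaguerreconnection} and the substitution $u=x^2$, the same double expansion in the connection coefficients \eqref{laguerreconnection} collapsed by orthogonality to the double sum $\sum_{j}\sum_{i}e^{(4)}_{n-j,n-1}\mathcal{C}_{i}^{k,n-j}\mathcal{C}_{i+j}^{k,n}h^{k-1/2}_{n-i-j}$, and the same final factor of $2^{-k}$ from remark~\ref{discrepancyremark}. The only difference is that you make explicit the bookkeeping (the value of $\gamma_{2n-1}$ and the telescoping $e^{(4)}_{j,n-1}=(n-1)!/j!$) that the paper delegates to appendix~A.
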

\begin{proof}
 The integrals (\ref{tildesymplecticintegrals}) give
\begin{equation}
\label{hersdef}
\tilde{S}_{\G}(2k,n) =
\frac{\gamma_{2n-1}}{2}\sum_{j=0}^{n-1}e_{j,n-1}^{(4)}
\int_{-\infty}^{\infty}x^{2k}e^{-x^{2}}\mathcal{H}_{2n}(x)\mathcal{H}_{2j}(x)dx.
\end{equation}
Applying formula~\eqref{hermitelaguerreconnection} leads to
\begin{equation}
 \tilde{S}_{\G}(2k,n) =
 \frac{\gamma_{2n-1}}{2}\sum_{j=0}^{n-1}e_{j,n-1}^{(4)}
 \int_{-\infty}^{\infty}x^{2k}e^{-x^{2}}
\mathcal{L}_{n}^{-1/2}(x^{2})\mathcal{L}_{j}^{-1/2}(x^{2})dx.
\end{equation}
Changing variables and inserting the connection 
formula (\ref{laguerreconnection}) results in
\begin{equation*} 
\tilde{S}_{G}(2k,n) = \frac{\gamma_{2n-1}}{2}
\sum_{j=0}^{n-1}e_{j,n-1}^{(4)}\sum_{i=0}^{j}\mathcal{C}_{i}^{k,j}\sum_{p=0}^{n}\mathcal{C}_{p}^{k,n}\int_{0}^{\infty}x^{k-1/2}e^{-x}\mathcal{L}_{n-p}^{k-1/2}(x)\mathcal{L}_{j-i}^{k-1/2}(x)dx.
\end{equation*}
Using the orthogonality of the Laguerre polynomials gives the double sum
\begin{equation}
\label{eq:last_symp}
\tilde{S}_{G}(2k,n) = \frac{\gamma_{2n-1}}{2}
\sum_{j=1}^{n}\sum_{i=0}^{n-j}e_{n-j,n-1}^{(4)}
\mathcal{C}_{i}^{k,n-j}\mathcal{C}_{i+j}^{k,n}h^{k-1/2}_{n-i-j}.
\end{equation}
It is worth emphasising that the coefficients
$\mathcal{C}_{i}^{k,n-j}$, $\mathcal{C}_{i+j}^{k,n}$ and $h_{n}$ are
those of the Laguerre polynomials, while the coefficient
$e_{n-j,n-1}^{(4)}$ is related to the Hermite polynomials.  Finally,
multiplying equation~\eqref{eq:last_symp} by $2^{-k}$, as discussed in
remark~\ref{discrepancyremark}, completes the proof.
\end{proof}
\section{\label{sec:orth_ens}Orthogonal Ensembles}

In this section we compute the moments for $\beta=1$.
Theorem~\ref{th:beta1moments} and equation~\eqref{beta1gauss} of
Theorem~\ref{the:gbetae} are corollaries of the results we prove
here.  For simplicity we assume that $n$ is an even integer.

The main task is to compute the integral
\begin{equation}
\label{orthogonalintegrals}
M_{\mathcal{E}}^{(1)}(k,n) = \int_{I}x^{k}\tilde{\rho}_{1}(x)dx.
\end{equation}
When $\beta=1$ the density $\tilde \rho_1(x)$ coincides with
$\rho_1(x)$, so the integrals~\eqref{orthogonalintegrals} coincide
with the averages~\eqref{integrals}.  

Substituting the representation (\ref{beta1densityorthogs}) into
(\ref{orthogonalintegrals}), we are left to compute three integrals:
the first one gives the moments of the corresponding unitary ensemble;
the second one is closely related to the quantity
$\tilde{S}_{\mathcal{E}}(k,n)$ discussed in \S\ref{se:symp_ens},
namely
\begin{equation}
\label{ointegrals}
O_{\mathcal{E}}(k,n) = \gamma_{n-2}\sum_{j=0}^{n-1}e^{(1)}_{j,n/2-2}
\int_{I}x^{k}e^{-2V(x)}\mathcal{P}_{2j+1}(x)\mathcal{P}_{n-1}(x)dx;
\end{equation}
the last one arises from the $\epsilon$-transform in equation
(\ref{beta1densityorthogs}), \textit{i.e.}
\begin{equation}
\label{incompleteintegral}
I_{\mathcal{E}}(k,n) =\gamma_{n-2}\eta^{(1)}_{n/2-2}
\int_{I}x^{k}\mathcal{P}_{n-1}(x)e^{-V_{1}(x)}
\epsilon\left[e^{-V_{1}(t)}\right](x)dx.
\end{equation}
Therefore, the moments for $\beta=1$ may be expressed as
\begin{equation}
M^{(1)}_{\mathcal{E}}(k,n) = M^{(2)}_{\mathcal{E}}(k,n-1)
-O_{\mathcal{E}}(k,n)+I_{\mathcal{E}}(k,n)
\end{equation}
In this section we focus on the integrals $O_{\mathcal{E}}(k,n)$ and
$I_{\mathcal{E}}(k,n)$.

\subsection{A Duality between $\beta=1$ and $\beta=4$}
We first discuss a remarkable duality between the quantities
$M^{(2)}_{\mathcal{E}}(k,n-1)-O_{\mathcal{E}}(k,n)$ and the moments of
the symplectic ensembles $M^{(4)}_{\mathcal{E}}(k,n)$, where $n$ now
can assume half integer values.  Such moments are well defined (see
equation~\eqref{tildesymplecticmoments} and remark~\ref{re:half_int}).   Similar
dualities have appeared in the literature
before~\cite{Led04,For09,Des09}.

\begin{lemma}
\label{pr:duality}
 Let $n$ be an even integer.  We have the following dualities:
\begin{subequations}
\begin{align}
 M^{(1)}_{\Lag}(k,n)& = 2^{1+k}M^{(4)}_{\mathrm{L}_{b/2}}(k,(n-1)/2)+I_{\Lag}(k,n), \label{laguerreduality}\\
 M^{(1)}_{\J}(k,n) &=
 2M^{(4)}_{\mathrm{J}_{a/2,b/2}}(k,(n-1)/2)+I_{\J}(k,n). \label{jacobiduality}
\end{align}
\end{subequations}
\end{lemma}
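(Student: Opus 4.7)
The plan is to reduce both dualities to a single algebraic identity between the correction terms appearing in the representations of $\tilde\rho_1$ and $\tilde\rho_4$. Integrating $x^k$ against \eqref{beta1densityorthogs} gives
\[
 M^{(1)}_{\mathcal E}(k,n) = M^{(2)}_{\mathcal E}(k,n-1) - O_{\mathcal E}(k,n) + I_{\mathcal E}(k,n).
\]
On the other side, doubling \eqref{beta4densityorthogs} and setting $n\mapsto (n-1)/2$ (permitted by Remark \ref{re:half_int}) gives
\[
 2\tilde\rho_4\bigl|_{n\to(n-1)/2}(x) = \rho_2(x)_{n\to n-1} - \gamma_{n-2}e^{-2V(x)}\sum_{j=0}^{(n-3)/2}e^{(4)}_{j,(n-3)/2}p_{2j}(x)p_{n-1}(x),
\]
so integrating $x^k$ against this identity yields
\[
 2\tilde M^{(4)}_{\mathcal E}(k,(n-1)/2) = M^{(2)}_{\mathcal E}(k,n-1) - 2\tilde S_{\mathcal E}(k,(n-1)/2).
\]
Converting to the standard notation via Remark \ref{discrepancyremark} (that is, $\tilde M^{(4)}_{\Lag}=2^{k}M^{(4)}_{\mathrm L_{b/2}}$ for Laguerre and $\tilde M^{(4)}_{\J}=M^{(4)}_{\mathrm J_{a/2,b/2}}$ for Jacobi) shows that both statements of the lemma are equivalent to the single identity
\[
 O_{\mathcal E}(k,n) = 2\tilde S_{\mathcal E}(k,(n-1)/2), \qquad \mathcal E\in\{\Lag,\J\}.
\]

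To establish this identity I would apply the same scheme used in Propositions \ref{symplecticjacobiprop} and \ref{pro:symp_4}. The connection formulas \eqref{laguerreconnection}/\eqref{jacobiconnection} expand $p_{2j+1}p_{n-1}$ (in $O$) and $p_{2j}p_{n-1}$ (in $\tilde S$) in the basis orthogonal with respect to the perturbed weight $x^{k}w_2(x)$, whereupon orthogonality collapses each integral to a single sum over $i$. After the reindexings $\ell=n/2-1-j$ in $O$ and $\ell=(n-1)/2-j$ in $\tilde S$, both sides take the common shape
\[
 \gamma_{n-2}\sum_{\ell=1}^{n/2-1}\sum_{i}C_\ell\binom{k}{i}\binom{k}{i+2\ell}\mathcal A_{\mathcal E}(\ell,i;n),
\]
with the same Pochhammer kernel $\mathcal A_{\mathcal E}(\ell,i;n)$; the only quantity still depending on whether one started from $O$ or from $2\tilde S$ is the outer constant $C_\ell$, which equals $e^{(1)}_{n/2-1-\ell,\,n/2-2}$ in the first case and $e^{(4)}_{(n-1)/2-\ell,\,(n-3)/2}$ in the second.

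Term-by-term equality therefore hinges on the coefficient identity
\[
 e^{(1)}_{n/2-1-\ell,\,n/2-2} = e^{(4)}_{(n-1)/2-\ell,\,(n-3)/2}, \qquad 1\le\ell\le n/2-1,
\]
which I would verify directly from \eqref{ees} together with $c_n=h_{n+1}h_n\gamma_n$ and the fact that $h_n\gamma_n$ is $n$-independent in each classical family, by \eqref{ees2}. After cancelling the common factors of $h_n\gamma_n$, both sides telescope to the same product
\[
 2\prod_{m\in\{n-2\ell,\,n-2\ell+2,\,\ldots,\,n-4\}}\frac{h_{m+1}}{h_m},
\]
whose value does not depend on the specific form of $h_m$, so the identity holds uniformly for Laguerre and Jacobi and is the clean algebraic core of the duality.

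The main obstacle is the bookkeeping in the Jacobi case. One must check that the Pochhammer kernel $\mathcal A_{\J}(\ell,i;n)$ produced by the expansion of $O_{\J}$ actually agrees with the one produced by $2\tilde S_{\J}$ at the half-integer argument $(n-1)/2$; through \eqref{jacobipolynormalisation} this brings in Pochhammers in $a$, $b$ and $a+b$ with subtle duplications near $a+b+2n$, and requires careful use of Gamma-function identities. The Laguerre case is comparatively transparent because $h_n^{b+k}=n!\,\Gamma(n+b+k+1)$ factorises cleanly; once Laguerre is settled, the Jacobi verification amounts to threading the additional $a$- and $(a+b)$-dependent factors through the same reindexing with no structural surprises.
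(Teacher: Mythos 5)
Your reduction is the same as the paper's: use \eqref{tildesymplecticmoments} and remark~\ref{discrepancyremark} to shrink both dualities to the single identity $O_{\mathcal E}(k,n)=2\tilde S_{\mathcal E}(k,(n-1)/2)$, expand both sides with the connection coefficients and orthogonality exactly as in \eqref{lagoresult}--\eqref{lagoresult_2}, observe that the resulting double sums carry identical $\mathcal C$'s and $h$'s, and thereby reduce everything to the coefficient identity $e^{(1)}_{n/2-1-j,\,n/2-2}=e^{(4)}_{n/2-1/2-j,\,n/2-3/2}$. The paper verifies that identity for Laguerre by reading it off the tabulated formulae \eqref{laguerreees} and omits the Jacobi computation; your Laguerre argument is fine and matches.

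The flawed step is your justification of the coefficient identity for Jacobi: you invoke ``the fact that $h_n\gamma_n$ is $n$-independent in each classical family,'' but by \eqref{ees2} one has $h_n\gamma_n=\tfrac12(2n+a+b+2)$ for Jacobi, which is not constant. Consequently your telescoped product $2\prod_m h_{m+1}/h_m$ is valid only for Hermite and Laguerre; for Jacobi the $h_m\gamma_m$ factors do not cancel and the claimed ``uniform'' verification breaks down. The identity you need is nevertheless true for all three families, and the clean way to see it is to avoid telescoping altogether: writing out the products in \eqref{ees}, both $e^{(1)}_{j,n}$ and $e^{(4)}_{j+1/2,\,n+1/2}$ equal
\begin{equation*}
h_{2n+2}\,\frac{c_{2j+2}\,c_{2j+4}\dotsm c_{2n}}{c_{2j+1}\,c_{2j+3}\dotsm c_{2n+1}},
\end{equation*}
because the half-integer shift turns the odd-indexed $c$'s of $e^{(4)}$ into even-indexed ones and vice versa; no property of $h_m\gamma_m$ is used. (A word of warning if you instead try to check the identity against the appendix: the printed $e^{(1),a,b}_{j,n}$ in \eqref{jacobiees} appears to be missing a factor $\Gamma(n+3/2+b/2)$ in the numerator, so a literal comparison with $e^{(4),a,b}_{j+1/2,n+1/2}$ seems to fail there even though the identity holds.) With that repair, and with the Jacobi parameter shifts of remark~\ref{discrepancyremark} tracked as you indicate, your argument closes and is essentially the paper's.
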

\begin{proof}
 Firstly, by equation (\ref{tildesymplecticmoments}) we observe
 that
\begin{equation}
2^{1+k}M^{(4)}_{\mathrm{L}_{b/2}}(k,(n-1)/2) = M^{(2)}_{\Lag}(k,n-1) - 2\tilde{S}_{\mathrm{L}_{b}}(k,(n-1)/2).
\end{equation}
Thus, it is sufficient to check that
$2\tilde{S}_{\mathrm{L}_{b/2}}(k,(n-1)/2) = O_{\Lag}(k,n)$.  

A direct computation shows that 
\begin{equation}
\label{lagoresult}
\begin{split}
O_{\Lag}(k,n) & = 
\gamma_{n-2}\sum_{j=0}^{n/2-2}e_{j,n/2-2}^{(1)}\int_{0}^{\infty}x^{b+k}e^{-x}
\mathcal{L}^{b}_{n-1}(x)\mathcal{L}^{b}_{2j+1}(x)dx\\
&=\gamma_{n-2}\sum_{j=0}^{n/2-2}e_{j,n/2-2}^{(1)}\sum_{i=0}^{2j+1}\mathcal{C}_{i}^{k,2j+1}\sum_{p=0}^{n-1}\mathcal{C}_{p}^{k,n-1}h_{2j+1-i}^{b+k}\delta_{2j+1-i,n-1-p}\\
&=\gamma_{n-2}\sum_{j=1}^{n/2-1}\sum_{i=0}^{n-2j-1}e_{n/2-1-j,n/2-2}^{(1)}\mathcal{C}_{i}^{k,n-2j-1}\mathcal{C}_{i+2j}^{k,n-1}h_{n-1-2j-i}^{b+k}.
\end{split}
\end{equation}
From (\ref{lagsresult}) we see that
\begin{equation}
\label{lagoresult_2}
2\tilde{S}_{\Lag}(k,(n-1)/2)=\gamma_{n-2}\sum_{j=1}^{n/2-1}\sum_{i=0}^{n-2j-1}e_{n/2-1/2-j,n/2-3/2}^{(4)}\mathcal{C}_{i}^{k,n-2j-1}\mathcal{C}_{i+2j}^{k,n-1}h_{n-1-2j-i}^{b+k}.
\end{equation}
From equation (\ref{laguerreees}) we have that
\begin{equation}
 \label{eq:test_s}
  e_{n/2-1/2-j,n/2-3/2}^{(4)}=e_{n/2-1-j,n/2-2}^{(1)}.
\end{equation}
Thus, the right-hand sides of equations~\eqref{lagoresult} and
\eqref{lagoresult_2} coincide.

In the proof of the duality~\eqref{jacobiduality} one has to show that
\begin{equation*}
 2\tilde{S}_{\mathrm{J}_{a/2,b/2}}(k,(n-1)/2)
= O_{\J}(k,n).
\end{equation*}
The strategy is the same as for the Laguerre ensemble and we omit the
computation.
\end{proof}
We are now left with the task of computing the integrals
(\ref{incompleteintegral}). When $k$ is a positive integer, we find a
single sum containing $k$ terms for each ensemble; when $k$ is
negative the sums go up to order of the matrix dimension.  


\subsection{\label{sse:inc_int}Incomplete Integrals --- Positive Moments}

We now assume that $k$ is a positive integer and focus on the Laguerre and
Jacobi ensembles. 

\begin{lemma}
We have
\begin{equation}
\label{eq:Ib}
I_{\Lag}(k,n) = 2^{k}\sum_{j=0}^{\min(n/2-1,k)}
\binom{2k}{2j}\frac{(\frac{1}{2}(b+n))_{(k-j)}}{(\frac{1}{2}(1+n))_{(-j)}} 
+ \tilde{\phi}_{k,n}^{\mathrm{L}}
\end{equation} 
and
\begin{equation}
\label{eq:Iab}
\begin{split}
I_{\J}(k,n) = & 4^{k}\sum_{j=0}^{\min(n/2 -1,k)}\binom{2k}{2j}
\frac{(a+b+2n-4j-1+2k)(\frac{1}{2}(a+b+n))_{(k-j)}(\frac{1}{2}(b+n))_{(k-j)}}
{(a+b+2n-2j-1)_{(2k+1)}(\frac{1}{2}(a+n+1))_{(-j)}(\frac{1}{2}(1+n))_{(-j)}} \\
&+\phi_{k,n}^{\mathrm{J}},
\end{split}
\end{equation}
where 
\begin{equation}
\label{integralphi}
\tilde{\phi}_{k,n}^{\mathrm{L}} = \frac{\Gamma(n/2+b/2-1/2)}{\Gamma(n/2)\Gamma(b+n-1)}\sum_{j=1}^{k}\frac{\Gamma(j+k)\Gamma(b+j+k)2^{-j}}{\Gamma(j+k-n+1)\Gamma(b/2+1/2+j)}
\end{equation}
and
\begin{equation}
\label{jacobiphi}
\begin{split}
\phi^{\mathrm{J}}_{k,n} & = \sum_{j=1}^{k}\frac{2^{a+1}\Gamma(a/2+b/2+j)
\Gamma(a/2+b/2+1/2+n/2)}{\Gamma(a+b+n+j+k)\Gamma(j+k+1-n)\Gamma(b/2+1/2+j)} \\
& \quad \times \frac{\Gamma(a/2+1/2+n/2)\Gamma(b+k+j)
\Gamma(j+k)}{\Gamma(n/2+b/2)\Gamma(a/2+1/2)\Gamma(n/2)}. 
\end{split}
\end{equation}
Furthermore, if $n > 2k$,
$\tilde{\phi}_{k,n}^{\mathrm{L}}=\phi^{\mathrm{J}}_{k,n}=0$.
\end{lemma}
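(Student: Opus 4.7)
The plan is to evaluate the integral~\eqref{incompleteintegral} by rewriting the factor $e^{-V_{1}(x)}\mathcal{P}_{n-1}(x)$ as a total derivative. Since $n$ is even, $n-1$ is odd, so~\eqref{beta4differentialidentity} (applied with index $2m+1=n-1$) gives
\begin{equation*}
 e^{-V_1(x)}\mathcal{P}_{n-1}(x) = \frac{h_{n-1}}{c_{n-2}}\frac{d}{dx}\bigl[e^{-V_4(x)}\tilde{q}^{(4)}_{n-2}(x)\bigr].
\end{equation*}
Substituting this into $I_{\mathcal{E}}(k,n)$, integrating by parts, and using both the vanishing of $e^{-V_4}$ at the endpoints of $I$ and the elementary identity $\frac{d}{dx}\epsilon[e^{-V_1}](x)=e^{-V_1(x)}$, one obtains
\begin{equation*}
I_{\mathcal{E}}(k,n)= -\frac{h_{n-1}\gamma_{n-2}\eta^{(1)}_{n/2-2}}{c_{n-2}}\Bigl\{ k\!\int_{I}\!x^{k-1}\epsilon[e^{-V_1}](x)e^{-V_4(x)}\tilde{q}^{(4)}_{n-2}(x)\,dx + \!\int_{I}\!x^{k}e^{-2V(x)}\tilde{q}^{(4)}_{n-2}(x)\,dx\Bigr\}.
\end{equation*}

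The second (purely polynomial) integral is exactly of the type handled in the symplectic computation. Expanding $\tilde{q}^{(4)}_{n-2}$ as a finite combination of $\{p_{2i}\}_{i=0}^{n/2-1}$ via~\eqref{beta4skews}, and then expanding each $x^{k}p_{2i}(x)$ in the basis orthogonal with respect to the perturbed weight $x^{k}e^{-2V(x)}$ using~\eqref{jacobiconnection} or~\eqref{laguerreconnection}, collapses the resulting double sum to a single sum by orthogonality, exactly as in propositions~\ref{symplecticjacobiprop} and~\ref{pro:symp_4}. Substituting the tabulated values of $h_{j}$, $c_{j}$ and $e^{(1)}_{j,n/2-2}$ from appendix~A yields the $\binom{2k}{2j}$ factor and the Pochhammer ratios appearing in~\eqref{eq:Ib} and~\eqref{eq:Iab}.

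The first integral, which still carries the $\epsilon$-transform, can be treated either by iterating the integration by parts — each iteration uses~\eqref{beta1differentialidentity} to rewrite $e^{-V_1(x)}p_{2i}(x)$ as a derivative modulo a lower-degree correction, lowering the power of $x$ by one — or more explicitly by writing $\epsilon[e^{-V_1}](x)$ as an incomplete gamma function (Laguerre) or incomplete beta function (Jacobi) through its standard series expansion about $x=m_{1}$, and evaluating each resulting monomial integral by the same connection-coefficient argument used above. The correction terms $\phi^{\mathrm J}_{k,n}$ and $\tilde{\phi}^{\mathrm L}_{k,n}$ encode precisely those contributions whose monomial degree cannot be absorbed into the finite orthogonal basis $\{p_{0},\dots,p_{n-1}\}$; the factor $\Gamma(j+k+1-n)^{-1}$ in~\eqref{jacobiphi}, and $\Gamma(j+k-n+1)^{-1}$ in~\eqref{integralphi}, forces these terms to vanish whenever $n>2k$, since the denominator then has a pole throughout the summation range, matching the stated truncation.

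The principal obstacle we anticipate is combinatorial bookkeeping: the iterated integration by parts together with the nested expansions via~\eqref{beta4skews} generate overlapping sums indexed by connection coefficients and by ratios of skew-norms, and distilling them into the advertised compact single-sum form with the precise half-integer Pochhammers $(\tfrac{b+n}{2})_{(k-j)}$, $(\tfrac{1+n}{2})_{(-j)}$ — and their Jacobi analogues — requires careful telescoping of products of $c_{j}/h_{j}$ ratios. We would carry out the Laguerre case first and adapt it to Jacobi by substituting the appropriate connection coefficients and skew-norms; the duality in lemma~\ref{pr:duality} makes this parallel between the two ensembles both natural and useful as an internal consistency check.
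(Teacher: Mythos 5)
Your opening move is legitimate: writing $e^{-V_1(x)}\mathcal{P}_{n-1}(x)$ as $\tfrac{h_{n-1}}{c_{n-2}}\tfrac{d}{dx}\bigl[e^{-V_4(x)}\tilde q^{(4)}_{n-2}(x)\bigr]$ via~\eqref{beta4differentialidentity} and integrating by parts is correct, and the resulting second integral $\int_I x^k e^{-2V(x)}\tilde q^{(4)}_{n-2}(x)\,dx$ is indeed computable by the connection-coefficient argument. But this is not the paper's route, and the part of your plan that carries all the difficulty --- the residual integral $k\int_I x^{k-1}\epsilon[e^{-V_1}](x)\,e^{-V_4(x)}\tilde q^{(4)}_{n-2}(x)\,dx$ --- is left essentially unaddressed. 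Neither of your two suggested treatments works as stated. The ``standard series expansion'' of the incomplete gamma (or beta) function about $x=0$ is an \emph{infinite} power series, so evaluating ``each resulting monomial integral'' produces an infinite sum, not the finite single sums with $\binom{2k}{2j}$ claimed in~\eqref{eq:Ib} and~\eqref{eq:Iab}. The alternative of ``iterating the integration by parts using~\eqref{beta1differentialidentity}'' is not well defined: after the first integration by parts the integrand contains $e^{-V_4}\tilde q^{(4)}_{n-2}$, not $e^{-V_1}p_{2i}$, and~\eqref{beta1differentialidentity} expresses $e^{-V_1}p_{2m}$ as a derivative \emph{plus a term of higher degree}, so there is no evident descent that terminates after $k$ steps in the advertised form.

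The missing idea is the paper's finite recursion for the $\epsilon$-transform of the weight itself, equation~\eqref{incompletegammaidentity} (and its beta-function analogue~\eqref{lastmin1}): in exactly $k$ steps it converts $\epsilon\bigl[t^{(b-1)/2}e^{-t/2}\bigr](x)$ into $\epsilon\bigl[t^{(b-1)/2+k}e^{-t/2}\bigr](x)$ plus $k$ explicit terms of the form $d_j\,x^{(b-1)/2+j}e^{-x/2}$. The explicit terms give precisely $\tilde\phi^{\mathrm L}_{k,n}$ (and vanish for $n>2k$ by orthogonality of $\mathcal{L}^b_{n-1}$ against polynomials of degree $<n-1$), while the remaining integral has the \emph{same} $k$-perturbed weight inside and outside the $\epsilon$-transform; it is then that one expands $\mathcal{L}^b_{n-1}$ in the $b+2k$ basis (whence the index $2k$ in $\binom{2k}{2j}$ and the prefactors $2^k$, $4^k$), discards the odd-index terms by the skew-orthogonality~\eqref{skewdef}, and integrates by parts using~\eqref{symplecticid} applied to the \emph{perturbed} skew-orthogonal system, reducing everything to the single surviving term $h_0^{b+2k}$ as in~\eqref{laguerrefinalline}. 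Without the shift identity and the perturbed-system version of~\eqref{symplecticid}, your decomposition leaves you with an $\epsilon$-transform integral that you have no finite mechanism to evaluate, so the proof as proposed does not go through.
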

\begin{proof}
  We begin with the proof of formula~\eqref{eq:Ib}. Equation
  (\ref{incompleteintegral}) becomes
\begin{equation}
\label{ilagdef}
I_{\Lag}(k,n) =\gamma_{n-2}\eta^{(1)}_{n/2-2}
\int_{0}^{\infty}x^{k+(b-1)/2}\mathcal{L}^{b}_{n-1}(x)
\epsilon\left[t^{(b-1)/2}e^{-t/2}\right](x)dx.
\end{equation}
The $\epsilon$-transform appearing in the right-hand side is 
the difference of the two incomplete Gamma functions
\begin{equation}
 \label{eq:inc_gam}
 \gamma(a,z) = \int_0^zt^{a-1}e^{-t}dt \quad \text{and} \quad
 \Gamma(a,z) = \int_z^{\infty}t^{a-1}e^{-t}dt.
\end{equation}
The main idea here is to expand them in a sum of incomplete Gamma
functions whose weight has been perturbed by a factor $t^{k}$. To this
end we insert
\begin{equation}
\label{incompletegammaidentity}
\epsilon\left[t^{\frac{b-1}{2}}e^{-t/2}\right](x) 
= \sum_{j=1}^{k}d_{j}^{(b+1)/2}
x^{\frac{b-1}{2}+j}e^{-x/2}+\frac{1}{2}d_{k}^{(b+1)/2}\epsilon\left[t^{\frac{b-1}{2}+k}e^{-t/2}\right](x),
\end{equation}
where $d_{j}^{b} = 2^{1-j}\frac{\Gamma(b)}{\Gamma(b+j)}$.  When $k=1$
this identity  can be found in~\cite{AS72}, equations~(6.5.21)
and~(6.5.23); the formula for general $k$ is obtained by iteration.

This leads to two integrals.  The first one is
\begin{equation}
\label{zerolag}
\begin{split}
 \tilde{\phi}_{k,n}^{\mathrm L} & =
 \eta^{(1)}_{n/2-2}\gamma_{n-2}\sum_{j=1}^{k}
 \int_{0}^{\infty}d_{j}^{(b+1)/2}x^{b+j+k-1}e^{-x}\mathcal{L}_{n-1}^{b}(x)dx\\
 &=\eta^{(1)}_{n/2-2}\gamma_{n-2}\sum_{j=1}^{k}\sum_{i=0}^{j+k-1}\int_{0}^{\infty}\mathcal{C}_{i}^{j+k-1,n-1}d_{j}^{(b+1)/2}x^{b+j+k-1}e^{-x}
 \mathcal{L}_{n-1-i}^{b+j+k-1}(x)dx,
\end{split}
\end{equation}
where we inserted the connection formula
(\ref{laguerreconnection}). Because of the orthogonality of the
Laguerre polynomials the only contribution to the inner sum occurs at
$i=n-1$; furthermore, since $\mathrm{max}(i)=2k-1$, we have that
$\tilde{\phi}_{k,n}^{\mathrm L}=0$ if $n >2k$. If $n \le 2k$,
inserting the appropriate coefficients (see appendix A) and using the
duplication formula (see~\cite{AS72}, \S6)
\begin{equation}
 \label{eq:duplication_for}
 \Gamma(2z) = \pi^{-1/2}2^{2z -1}\Gamma\left(z\right)
\Gamma\left(z + \tfrac{1}{2}\right)
\end{equation}
gives~\eqref{integralphi}.

The remaining non-trivial integral is 
\begin{subequations}
\begin{align}
\psi_{k,n} & = C\int_{0}^{\infty}x^{\frac{b-1}{2}+k}e^{-x/2}\mathcal{L}_{n-1}^{b}(x)\epsilon\left[t^{\frac{b-1}{2}+k}e^{-t/2}\right](x)dx \notag\\
&=C\sum_{j=0}^{\min(n-1,2k)}\int_{0}^{\infty}x^{\frac{b-1}{2}+k}e^{-x/2}\mathcal{C}_{j}^{2k,n-1}\mathcal{L}_{n-1-j}^{b+2k}(x)\epsilon\left[t^{\frac{b-1}{2}+k}e^{-t/2}\right](x)dx \label{secondline}\\
&=C\sum_{j=0}^{\min(n/2-1,k)}\mathcal{C}_{2j}^{2k,n-1}\int_{0}^{\infty}x^{\frac{b-1}{2}+k}e^{-x/2}\mathcal{L}_{n-2j-1}^{b+2k}(x)\epsilon\left[t^{\frac{b-1}{2}+k}e^{-t/2}\right](x)dx, \label{integratemebyparts}
\end{align}
\end{subequations}
where $C=\eta^{(1)}_{n/2-2}\gamma_{n-2}d_{k}^{(b+1)/2}/2$. To obtain
(\ref{secondline}) we used the connection formula, while
(\ref{integratemebyparts}) follows from the fact that the
contributions to the sum with odd indices vanish due to the
skew-orthogonality constraints (\ref{skewdef}). 

Now we integrate (\ref{integratemebyparts}) by parts using the
identity (\ref{symplecticid}) and the formula
\begin{equation}
\frac{d}{dx}\epsilon\left[t^{\frac{b-1}{2}+k}e^{-t/2}\right](x) 
= -x^{\frac{b-1}{2}+k}e^{-x/2}.
\end{equation}
This leads to
\begin{equation}
\begin{split}
\psi_{k,n}&=C\sum_{j=0}^{\min(n/2-1,k)}\sum_{i=0}^{n/2-j-1}\mathcal{C}_{2j}^{2k,n-1}e_{i,n/2-j-1}^{(4),b+2k}\int_{0}^{\infty}x^{b+2k}e^{-x}\mathcal{L}^{b+2k}_{2i}(x)dx\\
&=C\sum_{j=0}^{\min(n/2-1,k)}\mathcal{C}_{2j}^{2k,n-1}e_{0,n/2-j-1}^{(4),b+2k}h^{b+2k}_{0}. \label{laguerrefinalline}
\end{split}
\end{equation}
Inserting all the relevant formulae for the orthogonality norms
and connection coefficients gives~\eqref{eq:Ib}.

We sketch the proof of~\eqref{eq:Iab} as it follows a similar pattern;
we only emphasise the differences.  For the Jacobi ensemble the
$\epsilon$-transform is expressed in terms of incomplete Beta
functions; thus, we replace (\ref{incompletegammaidentity}) with the
following identity:
\begin{equation}
\label{lastmin1}
\epsilon[t^{b}(1-t)^{a}](x) = 
\sum_{j=1}^{k}d_{j}^{a,b}x^{b+j}(1-x)^{a+1}+(a+b+k+1)
d_{k}^{a,b}\epsilon[t^{b+k}(1-t)^{a}](x),
\end{equation}
where 
\begin{equation}
d_{j}^{a,b} = \frac{\Gamma(a+b+j+1)\Gamma(b+1)}{\Gamma(a+b+2)\Gamma(b+1+j)}.
\end{equation}
Equation~\eqref{lastmin1} can be obtained by iteration from
formulae~(26.5.15) and~(25.5.16) in~\cite{AS72}. Proceeding as for the
Laguerre ensemble and using formula~\eqref{eq:duplication_for}
gives~\eqref{eq:Iab}.
\end{proof}

\subsection{Incomplete Integrals --- Negative Moments}
When the moments are negative, we focus only on the physically
interesting case of the Laguerre ensemble, which leads to moments of
the proper delay times.

When the moments are positive the correction terms
$\tilde{\phi}^{\mathrm{L}}_{k,n}$ and $\phi^{\mathrm{J}}_{k,n}$ vanish
if $n>2k$. Now we have a similar contribution, which we shall denote
$\phi^{\mathrm L}_{-k,n}$ and which is not zero for $n >2k$;
however, it turns out that $\phi^{\mathrm L}_{-k,n} \to 0$
exponentially fast as $n\to \infty$.
\begin{lemma}
Let $k$ be a positive integer. One has
\begin{equation}
\label{lagnegiprop}
I_{\Lag}(-k,n) = 2^{-k}\sum_{j=0}^{n/2-1}\binom{2k+2j-1}{2j}\frac{(\frac{1}{2}(b+n))_{(-k-j)}}{(\frac{1}{2}(1+n))_{-(j)}} + \phi_{-k,n}^{\mathrm{L}},
\end{equation}
where 
\begin{equation}
\label{eq:phi_lag_neg}
\phi_{-k,n}^{\mathrm{L}} = \frac{\Gamma(n/2+b/2-1/2)}{\Gamma(n/2)\Gamma(b+n-1)}\sum_{j=0}^{k-1}\frac{\Gamma(k+j+n)\Gamma(b-k-j)2^{j}}{\Gamma(b/2+1/2-j)\Gamma(k+j+1)}.
\end{equation}
Furthermore, we have
\begin{equation}
 \label{eq:deay_rate}
 \phi_{-k,n}^{\mathrm{L}} = O\left(e^{-cn}\right), \quad n
 \to \infty, \quad c>0.
\end{equation}
\end{lemma}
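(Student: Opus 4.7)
The plan is to mimic the proof of the positive-moment identity \eqref{eq:Ib}, replacing its upward incomplete-gamma iteration \eqref{incompletegammaidentity} by a downward analog suited to negative $k$. Starting from the integral representation of $I_{\Lag}(-k,n)$ built as in \eqref{ilagdef}, the first task is to derive an identity splitting the $\epsilon$-transform factor. Integrating by parts against the definition of the $\epsilon$-transform gives, for $\alpha>0$, the recursion
\begin{equation*}
\epsilon \left[t^{\alpha-1}e^{-t/2}\right] (x) = \tfrac{1}{\alpha}\,x^{\alpha}e^{-x/2} + \tfrac{1}{2\alpha}\,\epsilon \left[t^{\alpha}e^{-t/2}\right] (x).
\end{equation*}
Solving for the higher index and iterating $k$ times along $\alpha=(b-1)/2,(b-3)/2,\dots,(b-1)/2-k+1$ expresses $\epsilon \left[t^{(b-1)/2}e^{-t/2}\right](x)$ as a constant multiple of $\epsilon \left[t^{(b-1)/2-k}e^{-t/2}\right](x)$ plus an explicit sum $\sum_{j=0}^{k-1}c_{j}\,x^{(b-1)/2-j}e^{-x/2}$, with $c_{j}$ given by explicit products of Gamma functions. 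This is the natural downward dual of \eqref{incompletegammaidentity}.

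Substituting this identity into $I_{\Lag}(-k,n)$ splits the integral into two pieces. The explicit-sum piece produces integrals of the form $\int_{0}^{\infty}x^{b-1-k-j}e^{-x}\mathcal{L}^{b}_{n-1}(x)\,dx$; expanding $\mathcal{L}^{b}_{n-1}$ via the connection formula \eqref{laguerreconnection} and collapsing by orthogonality---the same manipulations as in \eqref{zerolag}---yields $\phi^{\mathrm L}_{-k,n}$ in the form \eqref{eq:phi_lag_neg} after application of the duplication formula \eqref{eq:duplication_for}. The residual piece, carrying $\epsilon \left[t^{(b-1)/2-k}e^{-t/2}\right](x)$, is treated exactly as $\psi_{k,n}$ was in the positive case: the connection formula \eqref{laguerreconnection} with weight shift $b\to b-2k$ rewrites $\mathcal{L}^{b}_{n-1}$ in a suitable basis; the odd-indexed contributions vanish by skew-orthogonality (as between \eqref{secondline} and \eqref{integratemebyparts}); a final integration by parts using \eqref{symplecticid} together with Laguerre orthogonality produces the desired closed form. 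The binomial $\binom{2k+2j-1}{2j}$ in the main sum arises from the positive-$k$ coefficient $\binom{2k}{2j}$ via the identity $\binom{-2k}{2j}=\binom{2k+2j-1}{2j}$, while the overall factor $2^{-k}$ replaces $2^{k}$ via the ratio of the constants appearing in the downward versus upward recursion.

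The exponential decay \eqref{eq:deay_rate} follows from Stirling's formula applied termwise to \eqref{eq:phi_lag_neg}. In the physically relevant regime $b\sim n$---in particular $b=n+1$ for the $\beta=1$ delay times of Theorem \ref{th:beta1moments}---the leading asymptotics of $\Gamma(n)\Gamma(k+j+n)\Gamma(1+n-k-j)/[\Gamma(n/2)\Gamma(2n)\Gamma(n/2+1-j)]$ is $O(2^{-n})$ up to a polynomial in $n$, uniformly over the finite range $0\le j\le k-1$, giving $c=\log 2$. The main obstacle is combinatorial bookkeeping rather than conceptual: one must verify that the product of $\gamma_{n-2}$, $\eta^{(1)}_{n/2-2}$, the downward-recursion constants, the connection coefficients, and the shifted Laguerre orthogonality norms collapses to the clean Pochhammer ratio $(\tfrac{1}{2}(b+n))_{(-k-j)}/(\tfrac{1}{2}(1+n))_{(-j)}$ with prefactor $2^{-k}$.
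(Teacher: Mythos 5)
Your proposal follows essentially the same route as the paper's proof: the downward iteration of the incomplete-gamma identity (the paper's \eqref{lastmin2}, with $d_{j}^{b}=2^{j+1}\Gamma(b)/\Gamma(b-j)$), the split into the $\phi^{\mathrm L}_{-k,n}$ piece collapsed by Laguerre orthogonality and the residual $\epsilon$-piece handled exactly as $\psi_{k,n}$ was in the positive case via \eqref{symplecticid} and skew-orthogonality. The only material you add beyond the paper is the explicit integration-by-parts derivation of the downward recursion and the Stirling computation giving $c=\log 2$ in the regime $b\sim n$ relevant to the delay times, both of which are correct.
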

\begin{proof}
The incomplete integral now becomes
\begin{equation}
\label{incompletenegative}
I_{\Lag}(-k,n) = \eta_{n/2-2}^{(1)}\gamma_{n-2}\int_{0}^{\infty}x^{\frac{b-1}{2}-k}e^{-x/2}\mathcal{L}_{n-1}^{b}(x)\epsilon\left[t^{\frac{b-1}{2}}e^{-t/2}\right](x)dx.
\end{equation}
As for positive moments, we insert into~\eqref{incompletenegative} the
identity 
\begin{equation}
\label{lastmin2}
\epsilon\left[t^{\frac{b-1}{2}}e^{-t/2}\right](x) = 
-\sum_{j=0}^{k-1}d_{j}^{(b+1)/2}x^{\frac{b-1}{2}-j}e^{-x/2}+\frac{1}{2}d_{k}^{(b+1)/2}
\epsilon\left[t^{\frac{b-1}{2}-k}e^{-t/2}\right](x),
\end{equation}
where $d_{j}^{b} =
2^{j+1}\frac{\Gamma(b)}{\Gamma(b-j)}$. Formula~\eqref{lastmin2} is
obtained in the same way as
equation~\eqref{incompletegammaidentity}. This gives two integrals:
the first one is
\begin{equation}
\begin{split}
\label{zerolagneg}
\phi_{-k,n}^{\mathrm{L}} & =
-\eta_{n/2-2}^{(1)}\gamma_{n-2}\sum_{j=0}^{k-1}\int_{0}^{\infty}d_{j}^{(b+1)/2}x^{b-j-k-1}e^{-x}\mathcal{L}_{n-1}^{b}(x)dx\\
&=-\eta_{n/2-2}^{(1)}\gamma_{n-2}\sum_{j=0}^{k-1}\sum_{i=0}^{n-1}\int_{0}^{\infty}\mathcal{C}_{i}^{-k-j-1,n-1}d_{j}^{(b+1)/2}x^{b-j-k-1}e^{-x}\mathcal{L}_{i}^{b-j-k-1}(x)dx,
\end{split}
\end{equation}
where we inserted the connection coefficients $\mathcal{C}_{i}$ for
the Laguerre polynomials (\ref{laguerreconnection}). Application of
orthogonality implies that the only contribution to the inner sum
occurs at $i=0$, yielding equation~\eqref{eq:phi_lag_neg}.  The
remaining non-trivial integral is
\begin{equation}
 \frac{1}{2}\eta_{n/2-2}^{(1)}\gamma_{n-2}d_{k}^{(b+1)/2}
\int_{0}^{\infty}x^{\frac{b-1}{2}-k}e^{-x/2}\mathcal{L}_{n-1}^{b}(x)\epsilon\left[t^{\frac{b-1}{2}-k}e^{-t/2}\right](x)dx,
\end{equation}
which can be computed in the same way as the right-hand side of
equation~\eqref{laguerrefinalline}. 
\end{proof}

\begin{remark}
 The sum  
\begin{equation}
   \label{eq:ord_orth}
2^{-k} \sum_{j=0}^{n/2-1}\binom{2k+2j-1}{2j}\frac{(n/2+b/2)_{(-k-j)}}%
 {(n/2+1/2)_{(-j)}}     
 \end{equation}
 in equation~\eqref{lagnegiprop} is $O(n^{-k})$ and of subleading
 order compared to $M^{(2)}_{\mathrm{L}_b}(-k,n-1)$, which gives the
 main contribution to the moments of the proper delay times for
 $\beta=1$.  However, it goes to zero much more slowly than the
 correction term $\phi^{\mathrm{L}}_{-k,n}$.
\end{remark}

\subsection{\label{sse:goe}Gaussian Orthogonal Ensemble}

The treatment of the GOE by our method is slightly different from the
LOE and JOE. We do not find a duality relation similar to
lemma~\ref{pr:duality}. However, the following proposition is
the analogue of equation (\ref{hersprop}) for the GSE.
\begin{lemma}
\label{le:ogf}
 Let $k$ be a positive integer and suppose $n$ is even. Then, the
 integral~\eqref{ointegrals} is explicitly given by
\begin{equation}
\label{gaussoprop}
O_{\mathrm{G}}(2k,n) = \sum_{j=1}^{\min(n/2-1,k)}
\sum_{i=0}^{\min(n/2-j-1,k-j)}\binom{k}{i}\binom{k}{i+j}
\frac{(n/2-i-j)_{(k+1/2)}}{(n/2-j)_{(1/2)}}.
\end{equation}
\end{lemma}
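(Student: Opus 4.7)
The strategy closely mirrors the derivation of~\eqref{hersprop} for the GSE in \S\ref{se:gse}, with the parity modifications arising from the odd-indexed polynomials appearing in the $\beta=1$ density representation~\eqref{beta1densityorthogs}.

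First I would substitute $w_{2}(x)=e^{-x^{2}}$ and the monic Hermite polynomials into the definition~\eqref{ointegrals} to obtain
\begin{equation*}
O_{\G}(2k,n)=\gamma_{n-2}\sum_{j=0}^{n/2-2}e^{(1)}_{j,n/2-2}\int_{-\infty}^{\infty}x^{2k}e^{-x^{2}}\mathcal{H}_{n-1}(x)\mathcal{H}_{2j+1}(x)\,dx.
\end{equation*}
Because $n$ is even, both $n-1$ and $2j+1$ are odd, so the Hermite--Laguerre identity~\eqref{hermitelaguerreconnection} turns the integrand into $x^{2k+2}e^{-x^{2}}\mathcal{L}^{1/2}_{n/2-1}(x^{2})\mathcal{L}^{1/2}_{j}(x^{2})$. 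After the substitution $y=x^{2}$ the inner integral becomes
\begin{equation*}
\int_{0}^{\infty}y^{k+1/2}e^{-y}\mathcal{L}^{1/2}_{n/2-1}(y)\mathcal{L}^{1/2}_{j}(y)\,dy,
\end{equation*}
which is structurally identical to the Laguerre-type integrals that arose in the GSE and LSE calculations.

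Next I would expand both Laguerre factors using the connection formula~\eqref{laguerreconnection} into the basis $\{\mathcal{L}^{1/2+k}_{\ell}\}$, which is orthogonal with respect to the perturbed weight $y^{k+1/2}e^{-y}$. Orthogonality collapses the resulting double sum to a single sum in which the two expansion indices are forced to satisfy $n/2-1-\ell=j-m$. Following exactly the reindexing used in~\eqref{jacsresult} and~\eqref{eq:last_symp}, I would then set $j'=n/2-1-j$, so the outer sum runs over $j'\in\{1,\dots,n/2-1\}$, and $i=m$, so that the surviving binomial factors become $\binom{k}{i}$ and $\binom{k}{i+j'}$, producing the binomial structure of the target. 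The range truncations $j'\le\min(n/2-1,k)$ and $i\le\min(n/2-j'-1,k-j')$ then follow automatically from the vanishing of the binomial coefficients outside these bounds.

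The main obstacle will be the bookkeeping step of showing that the aggregate prefactor $\gamma_{n-2}\cdot e^{(1)}_{j,n/2-2}$ --- combined with the two connection coefficients and the monic Laguerre norm $h^{1/2+k}_{j-i}=(j-i)!\,\Gamma(j-i+3/2+k)$ --- collapses to the single quotient $(n/2-i-j')_{(k+1/2)}/(n/2-j')_{(1/2)}$. All $n$-dependence apart from the desired Pochhammer symbol must telescope away. The simplification proceeds in the same spirit as for the GSE, LSE and JSE: the tabulated values in appendix~A express $\gamma_{n-2}=1/h_{n-2}$ and $e^{(1)}_{j,n/2-2}$ as a ratio of Hermite norms $h_{n-2}/h_{2j+2}$ multiplied by a telescoping product of the ratios $h_{2i+1}/h_{2i+2}$, and applying the duplication identity~\eqref{eq:duplication_for} converts the resulting factorials of even arguments into Gamma functions of half-integer arguments, after which the integer factorials and $\sqrt{\pi}$ factors cancel to leave the claimed compact form.
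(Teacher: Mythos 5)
Your proposal is correct and follows essentially the same route as the paper: convert the odd-index Hermite products to Laguerre polynomials of parameter $1/2$ via~\eqref{hermitelaguerreconnection}, change variables to reduce to $\int_0^\infty y^{k+1/2}e^{-y}\mathcal{L}^{1/2}_{n/2-1}(y)\mathcal{L}^{1/2}_{j}(y)\,dy$, expand in the $\mathcal{L}^{k+1/2}$ basis with the connection coefficients~\eqref{laguerreconnection}, collapse by orthogonality, reindex, and substitute the tabulated constants from appendix~A. The only remaining work is the prefactor bookkeeping you correctly identify, which the paper likewise dispatches by citing the appendix.
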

\begin{proof}
By changing variable of integration and  using
the relation~\eqref{hermitelaguerreconnection} we obtain
\begin{equation}
\label{ogaussdef}
\begin{split}
O_{\mathrm{G}}(2k,n) & = \int_{-\infty}^{\infty}\sum_{j=0}^{n/2-2}x^{k}
\gamma_{n-2}e^{-x^{2}}e_{j,n/2-2}^{(1)}\mathcal{H}_{n-1}(x)\mathcal{H}_{2j+1}(x)dx\\
&=\int_{0}^{\infty}\sum_{j=0}^{n/2-2}x^{k+1/2}\gamma_{n-2}e^{-x}e_{j,n/2-2}^{(1)}\mathcal{L}_{n/2-1}^{1/2}(x)\mathcal{L}_{j}^{1/2}(x)dx.
\end{split}
\end{equation}
Inserting the connection formula (\ref{laguerreconnection})
into (\ref{ogaussdef}) leads to
\begin{equation}
\begin{split}
O_{\mathrm{G}}(2k,n) & =
\gamma_{n-2}\sum_{j=0}^{n/2-2}e_{j,n/2-2}^{(1)}\sum_{i=0}^{j}\mathcal{C}_{i}^{k,j} \\
& \quad \times \sum_{p=0}^{n/2-1}\mathcal{C}_{p}^{k,n/2-1}
\int_{0}^{\infty}x^{k+1/2}e^{-x}\mathcal{L}_{n/2-1-p}^{k+1/2}(x)\mathcal{L}_{j-i}^{k+1/2}(x)dx.
\end{split}
\end{equation}
By using the orthogonality of the Laguerre polynomials and rearranging
the indices we obtain
\begin{equation}
O_{\mathrm{G}}(2k,n) = \gamma_{n-2}\sum_{j=1}^{n/2-1}e_{n/2-j-1,n/2-2}^{(1)}\sum_{i=0}^{n/2-j-1}\mathcal{C}_{i}^{k,n/2-j-1}\mathcal{C}_{i+j}^{k,n/2-1}h^{k+1/2}
_{n/2-j-i-1}.
\end{equation}
Inserting the appropriate constants from appendix~A completes the
proof.  The coefficients $\mathcal{C}_{i}^{k,n/2-j-1}$,
$\mathcal{C}_{i+j}^{k,n/2-1}$ and $h^{k+1/2} _{n/2-j-i-1}$ are those
for the Laguerre polynomials; the constants $\gamma_{n-2}$ and
$e_{n/2-j-1,n/2-2}^{(1)}$ are those associated to the GOE.
\end{proof}

The remaining task is the evaluation of the integral $I_{\mathrm{G}}(k,n)$ in
(\ref{incompleteintegral}). As previously, we obtain slightly different
expressions depending on whether $n\leq2k$ or $n > 2k$. For the latter
inequality the formula simplifies considerably.
\begin{lemma}
\label{le:Igf}
Let $n$ be an even integer.  If $n \le 2k$ we have
\begin{equation*}
\begin{split}
 \phi^{\mathrm{G}}_{k,n} = I_{\mathrm{G}}(2k,n)=
 &2^{n/2-k}\frac{(2k)!}{\Gamma(n/2)}
\sum_{j=0}^{k-n/2}\sum_{i=0}^{n/2-1}\frac{\binom{n-1}{2i}
\frac{2^{-j-2i}(-1)^{j}}{(2j+2i+1)j!}}{(k-n/2-j)!}\\
& \quad +\frac{(2k)!}{\Gamma(n/2)}\sum_{j=0}^{n/2-1}
\sum_{i=0}^{j}\frac{(n/2-i-1)!\binom{n-1}{n-2i-1}}{(j-i)!(k-j)!4^{k-i}}.
\end{split}
\end{equation*}
When  $n >2k$ we obtain
\begin{equation}
\phi^{\mathrm{G}}_{k,n} = (2k)!\sum_{j=0}^{k}
\frac{(n/2+1/2-j)_{(j)}2^{3j-k}}{(2j)!(k-j)!}.
\end{equation}
\end{lemma}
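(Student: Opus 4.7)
The plan is to adapt the proof of Lemma 6.2 (LOE/JOE) to the Gaussian weight. In the Gaussian case $2V'(x) = 2x$ is a polynomial, so $f(x) = 1$ and $V_1 = V_4 = V = x^2/2$. Hence $e^{-V_1(x)} = e^{-x^2/2}$, and by the evenness of $e^{-t^2/2}$ one has $\epsilon[e^{-t^2/2}](x) = \int_0^x e^{-t^2/2}\,dt$. The task is to evaluate
$$I_{\mathrm G}(2k,n) = \gamma_{n-2}\eta^{(1)}_{n/2-2}\int_{-\infty}^{\infty} x^{2k}\mathcal{H}_{n-1}(x)e^{-x^2/2}\epsilon[e^{-t^2/2}](x)\,dx.$$

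First I would establish the Gaussian analogue of (\ref{incompletegammaidentity}),
$$\epsilon[e^{-t^2/2}](x) = \frac{1}{(2k-1)!!}\epsilon[t^{2k}e^{-t^2/2}](x) + e^{-x^2/2}\sum_{i=1}^{k}\frac{x^{2i-1}}{(2i-1)!!},$$
obtained by iterating the integration-by-parts relation $\epsilon[t^{2j}e^{-t^2/2}](x) = -x^{2j-1}e^{-x^2/2}+(2j-1)\epsilon[t^{2j-2}e^{-t^2/2}](x)$. Inserting this decomposition into $I_{\mathrm G}(2k,n)$ splits it into a \emph{polynomial piece} and a \emph{residual piece}, paralleling the structure of Lemma 6.2.

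The polynomial piece consists of integrals $\int_{-\infty}^{\infty} x^{2k+2i-1}e^{-x^{2}}\mathcal{H}_{n-1}(x)\,dx$ for $i=1,\dots,k$. Using the Hermite-Laguerre identity (\ref{hermitelaguerreconnection}) together with $u=x^{2}$ converts each integral to a Laguerre one, which collapses via the connection formula (\ref{laguerreconnection}) and Laguerre orthogonality to a single term (cf.\ (\ref{zerolag})). The residual piece $\int x^{2k}\mathcal{H}_{n-1}(x)e^{-x^{2}/2}\epsilon[t^{2k}e^{-t^{2}/2}](x)\,dx$ is handled by again passing to Laguerre variables via (\ref{hermitelaguerreconnection}), expanding in the basis orthogonal to the perturbed weight using (\ref{laguerreconnection}), discarding odd-indexed terms by skew-orthogonality, and finally integrating by parts against the differential identity (\ref{symplecticid}) exactly as in (\ref{integratemebyparts})--(\ref{laguerrefinalline}) to obtain a single finite sum.

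Combining the two contributions yields $\phi^{\mathrm G}_{k,n}$. The case distinction arises from the polynomial piece: the integral $\int x^{2k+2i-1}e^{-x^{2}}\mathcal{H}_{n-1}(x)\,dx$ is non-zero only when $n/2-1\le k+i-1$, i.e.\ $i\ge n/2-k$. For $n>2k$ only a subset of the $i$'s contributes, and together with the residual piece they collapse to the single sum $(2k)!\sum_{j=0}^{k}(n/2+1/2-j)_{(j)}2^{3j-k}/((2j)!(k-j)!)$ of the statement. For $n\le 2k$ all indices contribute, and the monomial expansion of $\mathcal{H}_{n-1}$ produces the double sums with binomial coefficients $\binom{n-1}{2i}$ and $\binom{n-1}{n-2i-1}$. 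The main obstacle will be the combinatorial bookkeeping: carefully tracking gamma-function ratios through the changes of variables (in particular, reconciling the $e^{-x^{2}/2}$ weight of the $\epsilon$-transform with the standard $e^{-u}$ Laguerre orthogonality, which is the source of the powers of $2$ visible in the final formula) and recombining terms via the duplication formula (\ref{eq:duplication_for}) and elementary binomial identities to match precisely the closed forms stated.
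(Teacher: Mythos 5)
Your route is genuinely different from the paper's. You transplant the LOE/JOE argument of \S\ref{sse:inc_int}: expand $\epsilon[e^{-t^{2}/2}](x)$ by the iterated integration-by-parts identity (your Gaussian analogue of \eqref{incompletegammaidentity} is correct, including the vanishing constant of integration), split $I_{\mathrm G}(2k,n)$ into a polynomial piece and a residual piece, and evaluate each via \eqref{hermitelaguerreconnection}, \eqref{laguerreconnection} and Laguerre orthogonality. The paper instead forms the exponential generating function $\mathcal{M}_{\mathrm G}(s)=C\int e^{sx}e^{-x^{2}/2}\mathcal{H}_{n-1}(x)\,\epsilon[e^{-t^{2}/2}](x)\,dx$, completes the square, applies the Hermite addition formula $\mathcal{H}_{n-1}(u+s)=\sum_{j}\binom{n-1}{j}\mathcal{H}_{j}(u)s^{n-1-j}$ (this is where the binomials $\binom{n-1}{2i}$ and $\binom{n-1}{n-2i-1}$ in the statement actually come from, not from a monomial expansion of $\mathcal{H}_{n-1}$), computes the auxiliary power series $f_{j}(s)$ in appendix~B, and reads off the $2k$-th Taylor coefficient; the parity split $\mathcal{M}^{\pm}_{\mathrm G}$ delivers the two double sums directly, with $\mathcal{M}^{+}_{\mathrm G}$ manifestly vanishing for $n>2k$ because $s^{n-2j-1}f_{2j}(s)=O(s^{n})$.

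There are two concrete gaps in your version. First, your attribution of the case distinction to the polynomial piece is wrong: the integral $\int_{-\infty}^{\infty}x^{2k+2i-1}e^{-x^{2}}\mathcal{H}_{n-1}(x)\,dx$ vanishes iff $2k+2i-1<n-1$, i.e.\ iff $i<n/2-k$, so the polynomial piece survives for all $2k<n\le 4k$ (for $n=2k+2$ every $i\in\{1,\dotsc,k\}$ contributes). Hence your two pieces do \emph{not} correspond to the two displayed double sums of the lemma, whose first sum is empty exactly when $n>2k$; reaching the stated closed forms, and in particular the single sum for $n>2k$, requires a nontrivial recombination/cancellation between your polynomial and residual pieces that you have not carried out. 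Second, the residual piece is not a verbatim transplant of \eqref{secondline}--\eqref{laguerrefinalline}: after passing to $u=x^{2}$ the two weight factors become $u^{k}e^{-u/2}$ and $s^{k-1/2}e^{-s/2}$, whose exponents differ by $1/2$ (unlike the Laguerre case, where both equal $x^{(b+2k-1)/2}e^{-x/2}$), and the half-line $\epsilon$-transform picks up a constant $\tfrac12\int_{0}^{\infty}s^{k-1/2}e^{-s/2}ds$ absent from $\int_{0}^{x}$; so \eqref{symplecticid} cannot be invoked as it stands and the skew-orthogonality step needs to be reworked for the half-integer Laguerre parameters. Neither obstacle is fatal, but as written the proof does not yet establish the lemma.
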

\begin{proof}
We have
\begin{equation}
 \label{incompleteherdef}
 I_{\mathrm G}(2k,n) = C\int_{-\infty}^{\infty}x^{2k}e^{-x^{2}/2}\mathcal{H}_{n-1}(x)\int_{-\infty}^{\infty}e^{-t^{2}/2}\mathrm{sgn}(x-t)dtdx
\end{equation}
where $C = \eta^{(1)}_{n/2-2}\gamma_{n-2}/2 =
(2\sqrt{\pi}\Gamma(n/2))^{-1}$.  Now, consider the generating function
\begin{equation*}
 \mathcal{M}_{\mathrm G}(s) = C\int_{-\infty}^{\infty}e^{sx}e^{-x^{2}/2}\mathcal{H}_{n-1}(x)\int_{-\infty}^{\infty}e^{-t^{2}/2}\mathrm{sgn}(x-t)dtdx.
\end{equation*}
Completing the square in the exponent and changing variables leads to
\begin{equation}
\label{motivating}
 \mathcal{M}_{\mathrm{G}}(s) =C\sum_{j=0}^{n-1}e^{s^{2}/2}\int_{-\infty}^{\infty}\mathcal{H}_{j}(u)s^{n-1-j}\binom{n-1}{j}\int_{-\infty}^{\infty}e^{-(v+s)^{2}/2}\mathrm{sgn}(u-v)dvdu,
\end{equation}
where we have applied the connection formula~\cite{HT03}
\begin{equation*}
\mathcal{H}_{n-1}(u+s) = \sum_{j=0}^{n-1}\binom{n-1}{j}\mathcal{H}_{j}(u)s^{n-1-j}.
\end{equation*}

Equation (\ref{motivating}) motivates us to study the function
\begin{equation}
\label{effj}
f_{j}(s) = \int_{-\infty}^{\infty}e^{-u^{2}/2}\mathcal{H}_{j}(u)\int_{-\infty}^{\infty}e^{-(v+s)^{2}/2}\mathrm{sgn}(u-v)dvdu.
\end{equation}
In appendix~B we compute $f_j(s)$ in terms of the power series
\begin{equation}
 f_{2j}(s) = \sum_{p=j}^{\infty}s^{2p+1}a^{+}_{p,j}, \quad \text{and}
 \quad 
f_{2j+1}(s) = 2j!\sqrt{\pi}+\sum_{p=j+1}^{\infty}s^{2p}a^{-}_{p,j},
\end{equation}
where
\begin{equation}
 a^{+}_{p,j} =
 \frac{2^{1-2p}\sqrt{\pi}(-1)^{j-p}}{(2p+1)\Gamma(p-j+1)}, 
\quad \text{and} \quad a^{-}_{p,j}=\frac{2^{2-2p}
\sqrt{\pi}(-1)^{j-p}}{(2p)\Gamma(p-j)}.
\end{equation}
Thus, we can write the decomposition
\begin{equation}
\mathcal{M}_{\mathrm G}(s) =
C\bigl(\mathcal{M}_{\mathrm G}^{+}(s)+\mathcal{M}_{\mathrm G}^{-}(s)\bigr), 
\end{equation}
where
\begin{align}
 \mathcal{M}_{\mathrm G}^{+}(s) & =
 \sum_{j=0}^{n/2-1}e^{s^{2}/2}\binom{n-1}{2j}s^{n-2j-1}f_{2j}(s),\\
 \mathcal{M}_{\mathrm G}^{-}(s) & =
 \sum_{j=0}^{n/2-1}e^{s^{2}/2}\binom{n-1}{2j+1}s^{n-2j-2}f_{2j+1}(s).
\end{align}
Computing the Taylor expansions of these functions is a routine
(though tedious) exercise. Eventually, we obtain
\begin{equation}
C\frac{d^{2k}}{ds^{2k}}\mathcal{M}_{\mathrm{G}}^{+}(s)\bigg|_{s=0} 
= 2^{n/2-k}\frac{(2k)!}{\Gamma(n/2)}
\sum_{j=0}^{k-n/2}\sum_{i=0}^{n/2-1}\frac{\binom{n-1}{2i}
\frac{2^{-j-2i}(-1)^{j}}{(2j+2i+1)j!}}{(k-n/2-j)!},
\end{equation}
which vanishes if $n>2k$.  We also have
\begin{equation}
\label{eq:last_effort}
 C\frac{d^{2k}}{ds^{2k}}\mathcal{M}_{\mathrm{G}}^{-}(s)
\bigg|_{s=0} = \frac{(2k)!}{\Gamma(n/2)}\sum_{j=0}^{\min(n/2-1,k)}
\sum_{i=0}^{j}\frac{(n/2-i-1)!\binom{n-1}{n-2i-1}}{(j-i)!(k-j)!4^{k-i}},
\end{equation}
which combined with (\ref{gaussoprop}) gives the equation
(\ref{beta1gauss}). Under the assumption $n>2k$,
equation~\eqref{eq:last_effort} can be
simplified further, leading to
\begin{equation*}
\begin{split}
I_{\mathrm{G}}(2k,n) & =
\frac{(2k)!}{\Gamma(n/2)}\sum_{i=0}^{k}\sum_{j=i}^{k}\frac{4^{i-k}(n/2-i-1)!\binom{n-1}{n-2i-1}}{(k-j)!(j-i)!}\\
&=(2k)!\sum_{j=0}^{k}\frac{(n/2+1/2-j)_{(j)}2^{3j-k}}{(2j)!(k-j)!},
\end{split}
\end{equation*}
where in the first equality we interchanged the order of summation,
while in the second one we have used the formula
\begin{equation}
\sum_{j=p}^{k}\frac{1}{(k-j)!(j-p)!}=\frac{2^{k-p}}{(k-p)!}.
\end{equation}
\end{proof}

\appendix
\section{The Coefficients $h_{n}$, $e^{(1)}_{n}$
 and $e^{(4)}_{n}$}
 \renewcommand{\theequation}{A\arabic{equation}}
 \setcounter{equation}{0}

The orthogonality normalisations $h_j$ defined
in~\eqref{orthogonality}  for the classical
orthogonal polynomials are tabulated in any standard reference on
special functions~\cite{AS72,Sze39}.  However, in
this article we work with \textit{monic} polynomials. This is not the
standard normalization found in the literature.  Therefore, for the
reader's convenience, we report the coefficients $h_{n}$,
 $e^{(1)}_{n}$,  $e^{(4)}_{n}$ that we use throughout
this paper.

We have
\begin{equation}
\begin{cases}
h_{j}^{a,b} =
\frac{\Gamma(a+j+1)\Gamma(b+j+1)\Gamma(j+1)\Gamma(a+b+j+1)}%
{\Gamma(a+b+2j+1)\Gamma(a+b+2j+2)}, & \text{Jacobi,} \\
h_{j}^{b} = \Gamma(j+1)\Gamma(b+j+1), & \text{Laguerre,} \\
h_{j} = j!2^{-j}\sqrt{\pi}, & \text{Hermite.}
\end{cases}
\end{equation}

Given these normalisations, recall that
\begin{equation}
\label{ees2app}
c_{n} = h_{n+1}h_{n}\gamma_{n},
\end{equation}
where
\begin{equation}
\label{ees3app}
 h_{n}\gamma_{n} =
\begin{cases}1, & \text{Hermite,}\\
\frac{1}{2},& \text{Laguerre,} \\
\frac{1}{2}(2n+a+b+2), & \text{Jacobi.}\end{cases}
\end{equation}
In \S\ref{se:orth_symp_sym} for the orthogonal and symplectic
ensembles we introduced the quantities
\begin{equation}
\label{ees_ap}
e_{j,n}^{(1)} =
\frac{h_{2n+2}}{c_{2j+1}}\prod_{i=j+1}^{n}\frac{c_{2i}}{c_{2i+1}},
\quad e_{j,n}^{(4)} = \frac{h_{2n+1}}{c_{2n}}
\prod_{i=j}^{n-1}\frac{c_{2i+1}}{c_{2i}}, 
\quad \eta^{(1)}_{n} = \prod_{j=0}^{n}\frac{c_{2j}h_{2j+2}}{c_{2j+1}h_{2j}}.
\end{equation}
We have the following explicit formulae:\\

\noindent JOE and JSE,
\begin{subequations}
\begin{align}
\label{jacobiees}
e_{j,n}^{(1),a,b} &=
\frac{16^{n-j}2\Gamma(n+3/2)\Gamma(n+3/2+a/2)}{\Gamma(j+3/2)\Gamma(j+3/2+a/2)\Gamma(j+3/2+b/2)}\notag\\
&\quad \times \frac{\Gamma(n+3/2+a/2+b/2)\Gamma(4j+4+a+b)}{\Gamma(j+3/2+a/2+b/2)\Gamma(4n+5+a+b)}\\
e_{j,n}^{(4),a,b}&=\frac{16^{n-j}2\Gamma(n+1)\Gamma(n+a/2+1)}{\Gamma(j+1)\Gamma(j+1+a/2)}\notag\\
&\quad \times \frac{\Gamma(n+b/2+1)\Gamma(n+a/2+b/2+1)\Gamma(4j+a+b+2)}{\Gamma(j+1+b/2)\Gamma(j+a/2+b/2+1)\Gamma(4n+a+b+3)}\\
\eta^{(1)}_{n}&=\frac{\Gamma(a/2+b/2+3/2+n)\Gamma(b/2+3/2+n)\Gamma(a/2+3/2+n)}{2^{-4n-4-a-b}\pi\Gamma(a+b+4n+5)}
\notag \\
&\quad \times \frac{\Gamma(a/2+b/2+1)\Gamma(n+3/2)}{\Gamma(b/2+1/2)\Gamma(a/2+1/2)};
\end{align}
\end{subequations}
LOE and LSE,
\begin{subequations}
\begin{align}
\label{laguerreees}
e_{j,n}^{(1),b} &= \frac{4^{n-j}2\Gamma(n+3/2)\Gamma(n+3/2+b/2)}{\Gamma(j+3/2)\Gamma(j+3/2+b/2)}\\
e_{j,n}^{(4),b} &= \frac{4^{n-j}2\Gamma(n+1)\Gamma(n+b/2+1)}{\Gamma(j+1)\Gamma(j+1+b/2)}\\
\eta^{(1)}_{n} &= 4^{n+1}\frac{\Gamma(n+3/2)\Gamma(n+b/2+3/2)}{\sqrt{\pi}\Gamma(b/2+1/2)};
\end{align}
\end{subequations}
GOE and GSE,
\begin{equation}
e_{j,n}^{(1)} = \frac{\Gamma(n+3/2)}{\Gamma(j+3/2)}, \quad  e_{j,n}^{(4)} = n!/j! \quad \text{and} \quad \eta^{(1)}_{n} = \frac{\Gamma(n+3/2)}{\sqrt{\pi}}.
\end{equation}

\section{The Generating Function $f_{j}(s)$}
\renewcommand{\theequation}{B\arabic{equation}}
 \setcounter{equation}{0}

In the proof of lemma~\ref{le:Igf} we needed to study the generating
function
\begin{equation}
\label{effj_ap}
f_{j}(s) = \int_{-\infty}^{\infty}e^{-u^{2}/2}\mathcal{H}_{j}(u)\int_{-\infty}^{\infty}e^{-(v+s)^{2}/2}\mathrm{sgn}(u-v)dvdu.
\end{equation}

The signed integral in the equation (\ref{effj_ap}) is closely related to
the error function, for which the $k^{th}$ derivative can be expressed
in terms of the monic Hermite polynomial
$\mathcal{H}_{k-1}(u)$. Differentiating under the integral, we find
\begin{equation}
\frac{1}{(2k)!} \frac{d^{2k}}{ds^{2k}}f_{2j}(s)\bigg|_{s=0} = \frac{-2^{k}}{(2k)!}\sqrt{2}\int_{-\infty}^{\infty}e^{-u^{2}}\mathcal{H}_{2j}(u)\mathcal{H}_{2k-1}(u/\sqrt{2})du=0,
\end{equation}
which follows from the oddness of the integrand. 

For the odd derivatives we get
\begin{subequations}
\begin{align}
\frac{1}{(2k+1)!}\frac{d^{2k+1}}{ds^{2k+1}}f_{2j}(s)\bigg|_{s=0} &= \frac{2^{k+1}}{(2k+1)!}\int_{-\infty}^{\infty}e^{-u^{2}}\mathcal{H}_{2j}(u)\mathcal{H}_{2k}(u/\sqrt{2})du \notag\\
&=\frac{2^{k+1}}{(2k+1)!}\int_{0}^{\infty}e^{-u}u^{-1/2}\mathcal{L}^{-1/2}_{j}(u)\mathcal{L}^{-1/2}_{k}(u/2)du \label{GOEuseconnectionformula}\\
\label{missing}
&=\sum_{i=0}^{k}\binom{k-1/2}{k-i}\frac{k!}{i!}\frac{2}{(2k+1)!}\int_{0}^{\infty}e^{-u}u^{-1/2}\mathcal{L}^{-1/2}_{j}(u)\mathcal{L}^{-1/2}_{i}(u)du\\
&=\frac{2(-1)^{k-j}\sqrt{\pi}4^{-k}}{(2k+1)(k-j)!}. \label{GOEuseorthogonality}
\end{align}
\end{subequations}
The expression in line (\ref{GOEuseconnectionformula}) was
obtained by means of the relations (\ref{hermitelaguerreconnection}), while
(\ref{missing}) follows from applying the connection
formula
\begin{equation}
\mathcal{L}_{k}^{b}(x/2) = 2^{-k}\sum_{i=0}^{k}\binom{b+k}{k-i}\frac{k!}{i!}(-1)^{i-k}\mathcal{L}_{i}^{b}(x).
\end{equation}
The last line (\ref{GOEuseorthogonality}) then follows from the
orthogonality (\ref{orthogonality}). Thus, we have
\begin{equation}
 f_{2j}(s) = \sum_{p=j}^{\infty}s^{2p+1}
\frac{2^{1-2p}\sqrt{\pi}(-1)^{j-p}}{(2p+1)\Gamma(p-j+1)}.
\end{equation}

In a similar fashion we find
\begin{equation}
f_{2j+1}(s) = 2j!\sqrt{\pi}+\sum_{p=j+1}^{\infty}s^{2p}\frac{2^{2-2p}\sqrt{\pi}(-1)^{j-p}}{(2p)\Gamma(p-j)},
\end{equation}
which required the double integral
\begin{equation}
\int_{-\infty}^{\infty}e^{-u^{2}/2}\mathcal{H}_{2j+1}(u)\int_{-\infty}^{\infty}e^{-v^{2}/2}\mathrm{sgn}(u-v)dvdu = 2j!\sqrt{\pi}.
\end{equation}

\bibliographystyle{amsplain}

\bibliography{chaotic_cav_bib}

\providecommand{\bysame}{\leavevmode\hbox to3em{\hrulefill}\thinspace}
\providecommand{\MR}{\relax\ifhmode\unskip\space\fi MR }
\providecommand{\MRhref}[2]{%
  \href{http://www.ams.org/mathscinet-getitem?mr=#1}{#2}
}
\providecommand{\href}[2]{#2}
\begin{thebibliography}{10}

\bibitem{AS72}
M.~Abramowitz and I.~A. Stegun, \emph{Handbook of mathematical functions},
  Dover Publications Inc., New York, NY, 1972.

\bibitem{AFNvM00}
M.~Adler, P.~J. Forrester, T.~Nagao, and P.~van Moerbeke, \emph{Classical skew
  orthogonal polynomials and random matrices}, J. Stat. Phys. \textbf{99}
  (2000), no.~1--2, 141--170.

\bibitem{AvM02}
M.~Adler and P.~van Moerbeke, \emph{Toda versus pfaff lattice and related
  polynomials}, Duke Math. J. \textbf{112} (2002), no.~1, 1--58.

\bibitem{AZ96}
A.~Altland and M.~R. Zirnbauer, \emph{Random matrix theory of a chaotic
  {A}ndreev quantum dot}, Phys. Rev. Lett. \textbf{76} (1996), no.~18,
  3420--3423.

\bibitem{AZ97}
\bysame, \emph{Nonstandard symmetry classes in mesoscopic
  normal-superconducting hybrid structures}, Phys. Rev. B \textbf{55} (1997),
  no.~2, 1142--1161.

\bibitem{BYK87}
Z.~D. Bai, Y.~Q. Yin, and P.~R. Krishnaiah, \emph{On the limiting empirical
  distribution function of the eigenvalues of a multivariate ${F}$ matrix},
  Teor. Veroyatnost. i Primenen. \textbf{32} (1987), no.~3, 537--548, English
  translation in Theory Probab. Appl. \textbf{32} (1987), no. 3, 490--500.

\bibitem{BM94}
H.~U. Baranger and P.~A. Mello, \emph{Mesoscopic transport through chaotic
  cavities: {A} random {S}-matrix theory approach}, Phys. Rev. Lett.
  \textbf{73} (1994), no.~1, 142--145.

\bibitem{Bee93}
C.~W.~J. Beenakker, \emph{Universality in the random-matrix theory of quantum
  transport}, Phys. Rev. Lett. \textbf{70} (1993), no.~8, 1155--1158.

\bibitem{Bee97}
\bysame, \emph{Random-matrix theory of quantum transport}, Rev. Mod. Phys.
  \textbf{69} (1997), no.~3, 731--808.

\bibitem{BHN08}
G.~Berkolaiko, J.~Harrison, and M.~Novaes, \emph{Full counting statistics of
  chaotic cavities from classical action correlations}, J. Phys. A: Math.
  Theor. \textbf{41} (2008), no.~36, 365102 (12pp).

\bibitem{BK10}
G.~Berkolaiko and J.~Kuipers, \emph{Moments of the {W}igner delay times}, J.
  Phys. A: Math. Theor. \textbf{43} (2010), no.~3, 035101 (18pp).

\bibitem{BK11}
\bysame, \emph{Transport moments beyond the leading order}, New J. Phys.
  \textbf{13} (2011), no.~6, 063020 (40pp).

\bibitem{BB00}
Ya.~M. Blanter and M.~B\"uttiker, \emph{Shot noise in mesoscopic conductors},
  Phys. Rep. \textbf{336} (2000), no.~1--2, 1--166.

\bibitem{BSB01}
Ya.~M. Blanter, H.~Schomerus, and C.~W.~J. Beenakker, \emph{Effect of dephasing
  on charge-counting statistics in chaotic cavities}, Physica E \textbf{11}
  (2001), no.~1, 1--7.

\bibitem{BGSLR05}
Yu. Bomze, G.~Gershon, D.~Shovkun, L.~S. Levitov, and M.~Reznikov,
  \emph{Measurement of counting statistics of electron transport in a tunnel
  junction}, Phys. Rev. Lett. \textbf{95} (2005), no.~17, 176601 (4pp).

\bibitem{BB96}
P.~W. Brouwer and C.~W.~J. Beenakker, \emph{Diagrammatic method of integration
  over the unitary group, with applications to quantum transport in mesoscopic
  systems}, J. Math. Phys. \textbf{37} (1996), no.~10, 4904--4934.

\bibitem{BFB97}
P.~W. Brouwer, K.~M. Frahm, and C.~W.~J. Beenakker, \emph{Quantum mechanical
  time-delay matrix in chaotic scattering}, Phys. Rev. Lett. \textbf{78}
  (1997), no.~25, 4737--4740.

\bibitem{CDLV10}
C.~Carr\'e, M.~Deneufchatel, J.-G. Luque, and P.~Vivo, \emph{Asymptotics of
  {S}elberg-like integrals: {T}he unitary case and {N}ewton's interpolation
  formula}, J. Math. Phys. \textbf{51} (2010), no.~12, 123516 (19pp).

\bibitem{DBB10}
J.~P. Dahlhaus, B.~B\'{e}ri, and C.~W.~J. Beenakker, \emph{Random-matrix thory
  of thermal conduction in superconducting quantum dots}, Phys. Rev. B
  \textbf{82} (2010), no.~1, 014536 (7pp).

\bibitem{Des09}
P.~Desrosiers, \emph{Duality in random matrix ensembles for all $\beta$}, Nucl.
  Phys. B \textbf{817} (2009), no.~3, 224--251.

\bibitem{Due01}
E.~Due\~{n}ez, \emph{Random matrix ensembles associated to compact symmetric
  spaces}, Ph.D. thesis, Princeton University, 2001.

\bibitem{Due04}
\bysame, \emph{Random matrix ensembles associated to compact symmetric spaces},
  Commun. Math. Phys. \textbf{244} (2004), no.~1, 29--61.

\bibitem{Dys62d}
F.~Dyson, \emph{The threefold way. {A}lgebraic structure of symmetry groups and
  ensembles in quantum mechanics}, J. Math. Phys. \textbf{3} (1962), no.~6,
  1199--1215.

\bibitem{Dys70}
\bysame, \emph{Correlations between the eigenvalues of a random matrix},
  Commun. Math. Phys. \textbf{19} (1970), no.~3, 235--250.

\bibitem{For06}
P.~J. Forrester, \emph{Quantum conductance problems and the {J}acobi ensemble},
  J. Phys. A: Math. Gen. \textbf{39} (2006), no.~22, 6861--6870.

\bibitem{For09}
\bysame, \emph{A random matrix decimation procedure relating $\beta=2/(r+1)$
  and $\beta=2(r+1)$}, Commun. Math. Phys. \textbf{285} (2009), no.~2,
  653--672.

\bibitem{For10}
\bysame, \emph{Log-gases and random matrices}, Princeton University Press,
  Priceton, NJ, 2010.

\bibitem{GP02}
S.~Ghosh and A.~Pandey, \emph{Skew-orthogonal polynomials and random-matrix
  ensembles}, Phys. Rev. E \textbf{65} (2002), 046221 (21pp).

\bibitem{GJ97}
I.~P. Goulden and D.~M. Jackson, \emph{Maps in locally orientable surfaces and
  integrals over real symmetric matrices}, Canad. J. Math \textbf{49} (1997),
  no.~5, 865--882.

\bibitem{HT03}
U.~Haagerup and S.~Thorbj{\o}rnsen, \emph{Random matrices with complex gaussian
  entries}, Expo. Math. \textbf{21} (2003), no.~4, 293--337.

\bibitem{HSS92}
P.~J. Hanlon, R.~P. Stanley, and J.~R. Stembridge, \emph{Some combinatorial
  aspects of the spectra of normally distributed random matrices},
  Hypergeometric Functions on Domains of Positivity, Jack Polynomials, and
  Applications, Contemporary Mathematics, vol. 138, Amer. Math. Soc.,
  Providence, RI, 1992, pp.~151--174.

\bibitem{HZ86}
J.~Harer and D.~Zagier, \emph{The {E}uler characteristic of the moduli space of
  curves}, Invent. Math. \textbf{85} (1986), no.~3, 457--485.

\bibitem{JPB94}
R.~A. Jalabert, J.-L. Pichard, and C.~W.~J. Beenakker, \emph{Universal quantum
  signatures of chaos in ballistic transport}, Europhys. Lett. \textbf{27}
  (1994), no.~4, 255--260.

\bibitem{KSS09}
B.~A. Khoruzhenko, D.~V. Savin, and H.-J. Sommers, \emph{Systematic approach to
  statistics of conductance and shot-noise in chaotic cavities}, Phys. Rev. B
  \textbf{80} (2009), no.~12, 125301 (14pp).

\bibitem{Kra10}
C.~Krattenthaler, \emph{Asymptotic analysis of a {S}elberg-type integral via
  hypergeometrics}, eprint: \texttt{arXiv:1004.3941}, 2010.

\bibitem{Led04}
M.~Ledoux, \emph{Differential operators and spectral distributions of invariant
  ensembles from the classical orthogonal polynomials. {T}he continuous case},
  Electron. J. Probab. \textbf{9} (2004), no.~7, 177--208.

\bibitem{Led09}
\bysame, \emph{A recursion formula for the moments of the {G}aussian
  {O}rthogonal {E}nsemble}, Ann. Inst. H. Poincar\'e Probab. Stat. \textbf{45}
  (2009), no.~3, 754--769.

\bibitem{LLY95}
H.~Lee, A.~Yu. Yakovets, and L.~S. Levitov, \emph{Universal statistics of
  transport in disordered conductors}, Phys. Rev. B \textbf{51} (1995), no.~7,
  4079--4083.

\bibitem{LV11}
G.~Livan and P.~Vivo, \emph{Moments of {W}ishart-{L}aguerre and {J}acobi
  ensembles of random matrices: application to the quantum transport problem in
  chaotic cavities}, Acta Phys. Pol. B \textbf{42} (2011), no.~5, 1081--1104.

\bibitem{MP67}
V.~A. Mar\v{c}enko and L.~A. Pastur, \emph{Distribution of eigenvalues for some
  sets of random matrices}, Mat. Sb. (N.~S.) \textbf{72(114)} (1967), no.~4,
  507--536, English translation in Math. USSR-Sb. \textbf{1} (1967), no. 4,
  457--483.

\bibitem{Meh04}
M.~L. Mehta, \emph{Random matrices}, Elsevier Inc., San Diego, CA, 2004.

\bibitem{MM91}
M.~L. Mehta and G.~Mahoux, \emph{A method of integration over matrix variables:
  {{\rm IV}}}, J. Phys. 1 (France) \textbf{1} (1991), 1093--1108.

\bibitem{MS11b}
F.~Mezzadri and N.~J. Simm, \emph{Moments of the transmission eigenvalues,
  proper delay times and random matrix theory {{\rm II}}}, eprint:
  \texttt{arXiv:1108.2859}, 2011.

\bibitem{NW91}
T.~Nagao and M.~Wadati, \emph{Correlation functions of random matrix ensembles
  related to classical orthogonal polynomials}, J. Phys. Soc. Jap. \textbf{60}
  (1991), no.~10, 3298--3322.

\bibitem{Nov07}
M.~Novaes, \emph{Full counting statistics of chaotic cavities with many open
  channels}, Phys. Rev. B \textbf{75} (2007), no.~7, 073304 (4pp).

\bibitem{Nov08}
\bysame, \emph{Statistics of quantum transport in chaotic cavities with broken
  time-reversal symmetry}, Phys. Rev. B \textbf{78} (2008), no.~3, 035337
  (5pp).

\bibitem{Nov10}
\bysame, \emph{Asymptotics of {S}elberg-like integrals by lattice path
  counting}, Ann. Phys. \textbf{326} (2010), no.~4, 828--838.

\bibitem{OSSSHH01}
S.~Oberholzer, E.~V. Sukhorukov, C.~Strunk, C.~Sch\"onenberger, T.~Heinzel, and
  M.~Holland, \emph{Shot noise by quantum scattering in chaotic cavities},
  Phys. Rev. Lett. \textbf{86} (2001), no.~10, 2114--2117.

\bibitem{OK08}
V.~A. Osipov and E.~Kanzieper, \emph{Integrable theory of quantum transport in
  chaotic cavities}, Phys. Rev. Lett. \textbf{101} (2008), no.~17, 176804
  (4pp).

\bibitem{OK09}
\bysame, \emph{Statistics of thermal shot noise crossover in chaotic cavities},
  J. Phys. A: Math. Theor. \textbf{42} (2009), no.~47, 475101 (14pp).

\bibitem{SFS01}
D.~V. Savin, Y.~V. Fyodorov, and H.-J. Sommers, \emph{Reducing nonideal to
  ideal coupling in random matrix description of chaotic scattering:
  Application to the time-delay problem}, Phys. Rev. E \textbf{63} (2001),
  no.~3, 035202 (4pp).

\bibitem{SS03}
D.~V. Savin and H.-J. Sommers, \emph{Delay times and reflection in chaotic
  cavities with absorption}, Phys. Rev. E \textbf{68} (2003), no.~3, 036211
  (5pp).

\bibitem{SSW08}
D.~V. Savin, H.-J. Sommers, and W.~Wieczorek, \emph{Nonlinear statistics of
  quantum transport in chaotic cavities}, Phys. Rev. B \textbf{77} (2008),
  no.~12, 125332 (5pp).

\bibitem{SSS01}
H.-J. Sommers, D.~V. Savin, and V.~V. Sokolov, \emph{Distribution of proper
  delay times in quantum chaotic scattering: A crossover from ideal to weak
  coupling}, Phys. Rev. Lett. \textbf{87} (2001), no.~9, 094101 (4pp).

\bibitem{Sze39}
G.~Szeg\H{o}, \emph{Orthogonal polynomials}, Amer. Math. Soc., New York, NY,
  1939.

\bibitem{TW98}
C.~H. Tracy and H.~Widom, \emph{Correlation functions, cluster functions and
  spacing distributions for random matrices}, J. Stat. Phys. \textbf{92}
  (1998), no.~5, 809--835.

\bibitem{VV08}
P.~Vivo and E.~Vivo, \emph{Transmission eigenvalue densities and moments in
  chaotic cavities from random matrix theory}, J. Phys. A: Math. Theor.
  \textbf{41} (2008), no.~12, 122004 (10pp), \textit{Corrigendum}, J. Math.
  Phys. A: Math. Theor. \textbf{41} (2008), no. 23, 239801 (1p).

\bibitem{Wid98}
H.~Widom, \emph{On the relation between orthogonal, symplectic and unitary
  matrix ensemble}, J. Stat. Phys. \textbf{94} (1998), no.~3-4, 347--363.

\bibitem{Zir96}
M.~R. Zirnbauer, \emph{Riemannian symmetric superspaces and their origin in
  random-matrix theory}, J. Math. Phys. \textbf{37} (1996), no.~10, 4986--5018.

\end{thebibliography}

\end{document}